\crefname{claim}{Claim}{Claims}
\Crefname{claim}{Claim}{Claims}
\crefname{statement}{Statement}{Statements}
\newenvironment{proofof}[1]{\begin{proof}[Proof of #1]}{\end{proof}}
\numberwithin{figure}{section}
\newtheorem{theorem}{Theorem}[section]
\newtheorem{definition}[theorem]{Definition}
\newtheorem{problem}[theorem]{Problem}
\newtheorem{remark}[theorem]{Remark}
\newtheorem{lemma}[theorem]{Lemma}
\newtheorem{claim}[theorem]{Claim}
\newtheorem*{theorem*}{Theorem}
\newtheorem*{lemma*}{Lemma}
\renewcommand{\tilde}{\widetilde}
\renewcommand{\bar}{\widebar}
\renewcommand{\vec}[1]{\bm{\mathrm{#1}}}
\newcommand{\wordlen}{w}
\newcommand{\logstar}{\log^{*}}
\newcommand{\enkey}{n}  %
\newcommand{\enkeyi}[1][\ell]{\enkey_{#1}}
\newcommand\barbelow[1]{\stackunder[1.2pt]{$#1$}{\rule{1.1ex}{0.1ex}}}
\newcommand{\lbkey}{\barbelow{n}}
\newcommand{\lbkeyi}[1][\ell]{\lbkey_{#1}}
\newcommand{\ubkey}{\bar{n}}
\newcommand{\ubkeyi}[1][\ell]{\ubkey_{#1}}
\newcommand{\nbranchi}[1][\ell]{B_{#1}}
\newcommand{\dev}{\Delta}
\newcommand{\devi}[1][\ell]{\dev_{#1}}
\newcommand{\xid}{x_{\textup{id}}}
\newcommand{\xquot}{x_{\textup{quot}}}
\newcommand{\yid}{y_{\textup{id}}}
\newcommand{\yquot}{y_{\textup{quot}}}
\newcommand{\xrem}{x_{\textup{rem}}}
\newcommand{\idlen}{I}
\newcommand{\idleni}[1][\ell]{\idlen_{#1}}
\newcommand{\quotuniv}{Q}
\newcommand{\quotunivi}[1][\ell]{\quotuniv_{#1}}
\newcommand{\concat}{\mathop{\|}}
\newcommand{\permUnite}{h}
\newcommand{\chid}{v}
\newcommand{\Sid}{S_{\textup{id}}}
\newcommand{\hashfam}{\mathcal{H}}
\newcommand{\hashfami}[1][\ell]{\hashfam_{#1}}
\newcommand{\chidfami}[1][\ell]{\hashfam^{\textup{ch}}_{#1}}
\newcommand{\idfami}[1][\ell]{\hashfam^{\textup{id}}_{#1}}
\newcommand{\permCh}{\permUnite^{\textup{ch}}}
\newcommand{\permId}[1][v]{\permUnite^{\textup{id}}_{#1}}
\newcommand{\xprim}{x_{\textup{prim}}}
\newcommand{\yprim}{y_{\textup{prim}}}
\let\permchu\permChu
\let\permch\permCh
\newcommand{\permIdu}{\permId[u]}
\newcommand{\funcfam}{\mathcal{F}}  %
\newcommand{\ind}[1]{\mathbbm{1}\Bk{{#1}}}
\author{
  Tianxiao Li\thanks{Institute for Interdisciplinary Information Sciences, Tsinghua University. \url{litx20@mails.tsinghua.edu.cn}.}
  \and
  Jingxun Liang\thanks{Institute for Interdisciplinary Information Sciences, Tsinghua University. \url{liangjx20@mails.tsinghua.edu.cn}.} 
  \and
  Huacheng Yu\thanks{Department of Computer Science, Princeton University. \url{yuhch123@gmail.com}.}
  \and
  Renfei Zhou\thanks{Institute for Interdisciplinary Information Sciences, Tsinghua University. \url{zhourf20@mails.tsinghua.edu.cn}.}
}
\title{Dynamic Dictionary with Subconstant Wasted Bits per Key}
\date{}
\begin{document}
\maketitle

\thispagestyle{empty}
\setcounter{page}{0}

\begin{abstract}
  Dictionaries have been one of the central questions in data structures.
  A dictionary data structure maintains a set of key-value pairs under insertions and deletions such that given a query key, the data structure efficiently returns its value.
  The state-of-the-art dictionaries~\cite{liu22} store $n$ key-value pairs with only $O(n \log^{(k)} n)$ bits of redundancy, and support all operations in $O(k)$ time, for $k \leq \log^* n$.
  It was recently shown to be optimal~\cite{li2023lb}.

  In this paper, we study the regime where the number of redundant bits is $R = o(n)$, and show that when $R$ is at least $n/\text{poly} \log n$, all operations can be supported in $O(\log^* n + \log (n/R))$ time, matching the lower bound in this regime~\cite{li2023lb}.
  We present two data structures based on which range $R$ is in.
  The data structure for $R < n/\log^{0.1} n$ utilizes a generalization of \emph{adapters} studied in~\cite{BK0TW22,li2023dynamic}.
  The data structure for $R \geq n/\log^{0.1} n$ is based on recursively hashing into buckets with logarithmic sizes.
\end{abstract}

\newpage

\section{Introduction}
\label{sec:intro}

Dictionaries are among the most basic data structures. 
A dictionary maintains a set of key-value pairs, where the keys are distinct numbers from the key space $[U]$, and values are from the value space $[V]$.
The data structure supports key-value pair insertions and deletions such that given a key $x\in[U]$, it is able to quickly return the value associated to $x$, or report that $x$ does not exist in the current set.

Hash tables are textbook implementations of dictionaries.
Universal hashing together with chaining gives linear space and constant expected time per operation.
Extensive research has been conducted on dictionaries and hash tables~\cite{Knuth73, ANS09, ANS10, bender2021all, DdHPP06, PerfectHashing88, RealTime90, FPSS03, Storing84, knuth1963notes, LYY20, Pagh01, PAGH2004122, patrascu2008succincter, RR03, Yu20, liu22}.
Modern hash tables occupy space close to the information theoretical limit,
\[
  \mathcal{B}(U,V,n)=\log\binom{U}{n}+n\log V,
\]
to store $n$ key-value pairs.
A dictionary data structure that uses $\mathcal{B}(U,V,n)+rn$ bits of space is said to incur $r$ \emph{wasted bits per key}, and $R=rn$ is called the \emph{redundancy}.
The state-of-the-art hash tables by Bender, Farach-Colton, Kuszmaul, Kuszmaul and Liu~\cite{liu22} achieve $O(\log^{(k)} n)$ wasted bits per key\footnote{$\log^{(k)}n$ is the iterated logarithm.} while supporting all operations in $O(k)$ time, for any $k\leq \log^* n$.\footnote{Moreover, the hash table of~\cite{liu22} has the claimed operational time in worst case with high probability (high-probability), and occupies space in terms of the \emph{current} $n$ (dynamic-resizing). We will not focus on these stronger properties in this paper.}
Interestingly, this was later shown to be optimal~\cite{li2023lb} in the cell-probe model.
In particular, when a constant number of wasted bits per key is allowed, the optimal operational time is $\Theta(\log^* n)$.

On the other hand, the design of~\cite{liu22} ``wastes'' a constant number of bits per key in several components, making such a linear redundancy inherent to their hash tables.
It was not known how to construct a nontrivial dictionary with $o(n)$ bits of redundancy and fast operational time.
In this paper, we study the space-time tradeoff when only a $o(1)$ bits are wasted per key: can we achieve a low operational time when $r=o(1)$, ideally close to $O(\log^* n)$, or do the dictionaries intrinsically need to slow down significantly when $r$ is sub-constant?
The lower bound of~\cite{li2023lb} shows that when the redundancy $R$ is $o(n)$, the operational time is at least $\Omega(\log^* n+\log(n/R))$, which does not rule out $O(\log^* n)$-time when the redundancy is slightly sublinear.

The main result of this paper shows that this tradeoff is tight when $R$ is not too small.
\begin{theorem}
  \label{thm:main}
  For $R>n/\log^{O(1)}n$, there is a dictionary data structure with redundancy $R$ that supports all operations in $O(\log^* n + \log (n/R))$ expected time, in word RAM with word-size $w=\Omega(\log UV)$. Moreover, queries take constant expected time.
\end{theorem}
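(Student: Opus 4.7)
The plan is to split on the magnitude of $R$ at the threshold $R^\star = n/\log^{0.1} n$ and build two separate constructions, exactly as the abstract foreshadows. In both regimes the global strategy is to hash the $n$ keys into smaller sub-dictionaries of size roughly $\log n$, so that the outer hash table costs $O(1)$ expected time per access and the bulk of the data (and of the time budget) sits in the sub-dictionaries. This immediately takes care of the claimed $O(1)$ expected query time, since a query only touches the top-level hash and one logarithmic-size block; only the update routines will need to pay $O(\log^*n+\log(n/R))$ time.

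For the ``large redundancy'' regime $R\ge R^\star$, I would recursively hash into $\Theta(n/\log n)$ buckets of expected size $\Theta(\log n)$ using a universal top-level hash, then store each bucket as a sub-dictionary over a smaller universe. After $O(\log(n/R))$ levels of recursion the remaining subproblems are small enough that the redundancy budget per level is $\Omega(1)$ bits per key, at which point I would use the $O(\log^* n)$-time optimal dictionary of~\cite{liu22} as a black box. Each recursion level would consume only a geometrically shrinking fraction of the $R$-bit redundancy budget (so that the sum telescopes to at most $R$), and each level would cost $O(1)$ expected time, yielding the desired $O(\log^* n+\log(n/R))$ bound.

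For the ``small redundancy'' regime $n/\log^{O(1)}n<R<R^\star$, recursive bucketing alone wastes too many bits per key, and I would compose it with a generalized \emph{adapter} from the line of work of~\cite{BK0TW22,li2023dynamic}. The existing adapters take a long array of near-uniform slots whose occupancies fluctuate slightly and compress away the per-slot padding by threading the overflow of over-full slots through the slack of under-full ones, saving $O(1)$ bits per key. The main obstacle is generalizing them to save up to $\log^{0.1}n$ bits per key while still supporting $O(\log(n/R))$-time updates: this requires chaining together a sequence of adapter layers, each shrinking the residual redundancy by a constant factor, and the delicate part is the amortization that prevents the work of propagating a single update across the chain from exceeding the $O(\log(n/R))$ budget. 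I would handle this by partitioning the chain into phases and periodically rebuilding each phase, charging rebuild cost to the updates that caused the local redundancy to drift out of tolerance; pushing the analysis through to $\log^{0.1}n$ bits saved per key—rather than the $O(1)$ of prior adapters—will be the technical heart of the proof.
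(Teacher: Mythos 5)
Your high-level split at $R^\star = n/\log^{0.1}n$ matches the paper's, and the broad outline of ``recursive bucketing for large $R$, adapters for small $R$'' also matches. However, the large-$R$ plan has a fundamental gap: you propose to use the dictionary of~\cite{liu22} as the base case, but its minimum redundancy is $\Theta(1)$ bits per key (achieved with $O(\log^* n)$ operation time). Since the bottom-level buckets collectively still contain all $n$ keys, this would incur $\Theta(n)$ bits of redundancy, which is already too much whenever $R=o(n)$ --- precisely the regime the theorem is about. The sentence ``the redundancy budget per level is $\Omega(1)$ bits per key'' does not hold: the \emph{total} budget is $R$, so the per-key budget at the leaves is $\Theta(R/n)=o(1)$ bits, and no amount of shrinking the bucket size changes that. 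The missing idea is a leaf structure with essentially \emph{zero} redundancy that still supports fast updates; the paper's \emph{minimaps} (built on the dynamic-matching/consistent-hashing idea of \cref{lm:dyn_matching}) solve exactly this, giving $O(\log r)$-time updates, $O(1)$-time queries and only $O(1/r)$ bits of redundancy per leaf, and this is what produces the $\log(n/R)$ term in the final bound.

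Two further concrete issues. First, your argument for constant expected query time (``a query only touches the top-level hash and one logarithmic-size block'') does not go through: both constructions in the paper navigate an $\Omega(\log^*n)$- or $\Omega(h)$-deep tree per query, and the constant query time comes from $h=O(1)$ in the adapter-tree regime and from the metadata-rearrangement trick of \cref{sec:const_query_time} in the multi-level-hashing regime, not from the query being shallow. Second, the small-$R$ sketch (chaining adapter ``phases'' that each shave a constant factor off the residual redundancy) is quite different from the paper's route, which generalizes two-way adapters to $B$-way adapters via dynamic matching, uses them to dynamize succinct aB-trees with branching factor $B=\Theta(\sqrt{\log n})$, and critically handles the non-word-aligned ``tail'' bits via the word-dominance observation and the random-swap lemma (\cref{lm:random_swap}). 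Your sketch does not confront the central obstacle --- how to resize and concatenate sub-structures whose sizes change by $w\pm O(1)$ bits per update without losing bits to alignment --- and so as written there is no mechanism that actually realizes the redundancy savings you describe.
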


In fact, we use two different constructions to cover the range of $R$ between $n/\poly\log n$ and $n$.
The first data structure has $O(\log \log n)$ operational time when $R > n/\log^{O(1)} n$. It is based on the recently introduced \emph{adapters}~\cite{li2023dynamic} and dynamic aB-trees (daB-trees).
Adapters connect multiple variable-length data structures efficiently with \emph{zero} extra redundancy, overcoming an essential obstacle in designing succinct dynamic data structures.
Li, Liang, Yu and Zhou~\cite{li2023dynamic} presented an adapter connecting two variable-length data structures (two-way adapters).
In this paper, we extend adapters to $B$-way for $B=O(\sqrt{\log n})$, and apply it recursively to obtain daB-trees with large branching factors for dictionary, showing another application of the adapters.
The second data structure matches the lower bound when $R>n/\log^{0.1} n$.
It is based on recursively hashing into buckets that are logarithmically smaller in size, hence, constructing a tree of depth $O(\log^* n)$.
For leaves (the base cases), we introduce \emph{minimaps}.
They are ``mini-dictionaries'' maintaining a small set of keys, which can be encoded in one word, together with their (potentially large) values, and achieving almost no redundancy and all operations in logarithmic time.
Interestingly, $B$-way adapters and minimaps share similar constructions, while they play completely different roles in the data structures for the two regimes respectively.

\smallskip

We remark that our dictionary for $R > n / \log^{O(1)} n$ with $O(\log \log n)$ operational time is \emph{strongly history-independent} (a.k.a.\ \emph{uniquely representable}) if carefully implemented, which means that the memory state of the dictionary only depends on the set of stored key-value pairs as well as some random bits that the algorithm might use. This matches a lower bound in \cite{li2023lb}: any strongly history-independent dictionary must have operational time $\Omega\bigbk{\log \frac{n \log U}{R}}$, for any $R \le O(n \log U)$. Recently, Kuszmaul~\cite{Kus23Strongly} shows an optimal strongly history-independent dictionary with $\Theta\bigbk{\log \frac{n \log U}{R}}$ operational time for $R \in [n \log \log n, \, n \log n]$, which together with our result concludes the following theorem.

\begin{theorem}
  \label{thm:hist_ind}
  For $U = n^{1 + \Theta(1)}$ and $n / \log^{O(1)} n \le R \le n \log n$, there is a strongly history-independent dynamic dictionary that stores $n$ elements from universe $[U]$ with $R$ bits of redundancy and $O\bigbk{\log \frac{n \log U}{R}}$ expected time per operation, which is optimal in the cell-probe (or word RAM) model with word-size $w = \Theta(\log U)$. The dictionary also achieves expected constant-time queries.
\end{theorem}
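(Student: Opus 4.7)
The plan is to combine our upper bound (Theorem 1.1) with Kuszmaul's recent construction, verifying that both sides can be made strongly history-independent and that the operational times agree with the claimed bound $O(\log(n\log U/R))$ in the stated range. Since $U = n^{1+\Theta(1)}$ we have $\log U = \Theta(\log n)$, so the target running time is $\Theta(\log(n\log n/R))$. For $R \in [n \log \log n,\, n\log n]$ this is $\Theta(1)$ to $\Theta(\log\log n)$, which is precisely what Kuszmaul~\cite{Kus23Strongly} achieves. For $R \in [n/\log^{O(1)} n,\, n\log\log n]$ the target simplifies to $\Theta(\log\log n)$, matching the $O(\log\log n)$ bound of our first construction from Theorem~\ref{thm:main}. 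Thus partitioning the range of $R$ at the threshold $n \log \log n$ and invoking the two results gives the upper bound.

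The main obstacle is verifying that our $R > n/\log^{O(1)} n$ construction, built from $B$-way adapters and daB-trees, can be implemented to be strongly history-independent. The plan is to go component by component: (i) the top-level hashing from $[U]$ to buckets should use a universal hash function whose seed is drawn once and stored as part of the fixed randomness, so that the partition of the current key-value set into buckets is a deterministic function of the stored set; (ii) inside each bucket the daB-tree imposes a canonical tree shape determined only by the multiset of keys present (adapters package sub-structures in a canonical, length-determined way with zero redundancy, so there is no freedom in choosing where to put a key); (iii) the leaf-level minimap-like encodings should use the lexicographically smallest representation of the underlying set, which is automatic since each leaf is encoded in one word and the encoding is a canonical packing. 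Checking each of these reduces to confirming that the adapters and hash families used admit a canonical memory layout, a property that the constructions in~\cite{li2023dynamic} already possess or can be adjusted to satisfy by choosing canonical tie-breaking.

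Having obtained the upper bound for the entire interval $R \in [n/\log^{O(1)} n,\, n \log n]$, the proof concludes by citing the matching lower bound of~\cite{li2023lb}, which says that any strongly history-independent dictionary with redundancy $R$ must spend $\Omega(\log(n\log U/R))$ per operation. Combining the upper and lower bounds yields the claimed $\Theta(\log(n\log U/R))$ optimality in the cell-probe (hence word-RAM) model. The expected constant-time query bound is inherited directly from Theorem~\ref{thm:main} on our side of the threshold and from~\cite{Kus23Strongly} on the other side, so no extra work is needed there. The step most likely to need care is (ii): making sure that the recursive adapter tree, when rebuilt after an insertion or deletion, lands in exactly the same memory state it would occupy had the final set been inserted in any other order, which boils down to ensuring that all structural decisions (bucket sizes, split points, padding within adapters) depend only on the current set and the fixed random seed.
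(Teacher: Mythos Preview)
Your high-level plan---split the redundancy range at $R=n\log\log n$, invoke Kuszmaul above the threshold and the adapter-tree construction of Theorem~\ref{thm:main} below it, then cite the lower bound of~\cite{li2023lb}---is exactly what the paper does, and your verification that the target time $O(\log(n\log U/R))$ collapses to $\Theta(\log\log n)$ on the lower subrange is correct.

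However, there is a genuine gap in your argument for strong history-independence of the adapter-tree dictionary. Your points (i)--(iii) correctly identify that, once the hash functions are fixed, every structural decision (bucket assignments, adapter layouts, minimap encodings) is a deterministic function of the current key set and the random seed. What you miss is that the construction in Section~\ref{sec:adaptertree} does \emph{not} keep the hash functions fixed: whenever a bucket overflows, underflows, or suffers an ID collision, the algorithm \emph{resamples} the offending hash function and rebuilds. The identity of the current hash function therefore carries information about which overflow events occurred in the past, and the data structure as written is not history-independent.

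The paper's fix (Appendix, ``Achieving strong history-independence using fail modes'') is to drop rehashing entirely: sample all hash functions once and for all, and when a fixed hash function happens to fail on the current key set, switch the affected bucket into a \emph{fail mode}---a brute-force canonical encoding (Lemma~\ref{lem:fail_mode}) that is slow ($\poly\log N$ time) but uses the same space budget and is determined solely by the current set. Because failure is detectable from the current set and the fixed seed, the choice of normal versus fail mode is itself history-independent, and because the failure probability can be driven down to $n_1^{-200}$, the expected time overhead is $o(1)$. Your proposal never addresses what happens on hash failure; without a mechanism like this, step (i) of your plan is incomplete.
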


\subsection{Organization}
In \cref{sec:overview}, we give an overview of the two data structures.
In \cref{sec:adapter}, we formalize the setup and present the constructions of $B$-way adapters and minimaps.
Then the data structure via $B$-way adapter trees is presented in \cref{sec:adaptertree}, and the data structure via multiple levels of hashing is presented in \cref{sec:multihash}.

\section{Overview}
\label{sec:overview}

In this section, we focus on the case where $U,V,n$ are bounded by polynomials of each other, and the word-size $w=\Theta(\log n)$.
We start by giving a short summary of the hash table by Bender et al.~\cite{liu22}.
The first step, similar to several prior works, is to hash into buckets that contain expected $K-K^{2/3}$ keys for some $K=\poly\log n$.
Then by concentration, all buckets have at most $K$ keys hashed to it with high probability.
By allocating space that fits $K$ key-value pairs to each bucket and handling them separately, this effectively reduces the number of keys to $K$, and have incurred wasted bits of roughly $1/K^{1/3}$ per key (from allocating space conservatively to each bucket).

Now within each bucket, we start to recursively hash the keys into smaller buckets, forming a tree structure where the buckets are the nodes.
For a bucket of size $S$ (i.e., that can store a total of $S$ keys), its children in the tree will have size $\poly\log S$ each, i.e., the degree of this node is $S/\poly\log S$.
The entire tree has $k$ levels for some parameter $k$, hence, every leaf in the tree has size roughly $\log^{(k)} n$.

All keys (and their values) are only directly stored in the leaves.
The entire tree rooted at a top-level bucket of size $K$ has exactly $K$ total slots to store the keys.
If the hash function were able to completely evenly split the key set, then a key and its value can simply be stored in the leaf it hashes to, and can be located by looking up this leaf.
However, a node $u$ may hash more keys to some child $v$ than others, resulting in $v$ having more keys than its capacity, i.e., $v$ overflows.
In this case, some of the keys will be stored in the empty slots in $v$'s siblings.
Provided that $u$ does not overflow, we can always find slots for all keys hashed to $u$.
An extra pointer will be stored to locate a key that is stored in a different child, incurring a redundancy of $\log S$ bits when $u$ has size $S$.
A simple calculation shows that by hashing $S$ keys into $S/\poly\log S$ buckets, the total number of keys that are ``overflown'' to a different bucket is $S/\poly\log S$.
Hence, the pointers incur $1/\poly\log S$ wasted bits per key on average.
Finally, the leaf nodes are sufficiently small, and data structures with $O(\log^{(k+1)} n)$ wasted bits per key are used, e.g., by storing a pointer for each key.
It turns out that the keys can be inserted or deleted by walking down the tree, taking $O(k)$ time.

\subsection{Dictionaries for \texorpdfstring{$R>n/\log^{0.1}n$}{R > n / log\^{}\{0.1\} n}}

For $R>n/\log^{0.1}n$, one contribution of this work is to handle the leaves efficiently while incurring only $o(1)$ wasted bits per key.
The leaves contribute a majority of the redundancy in~\cite{liu22}.
Moreover, their data structure always needs to store a ``query router'' in each leaf, which points each key to a slot.
The query router costs an average of $\Omega(1)$ wasted bits per key to distinguish all keys in addition to the pointers, even when $k$ is close to $\log^* n$.
This is unaffordable when only $o(1)$ wasted bits per key is allowed.

In this work, we introduce \emph{minimaps}, and use them to store the key-value pairs in the leaves.
Roughly speaking, a minimap maintains $n$ key-value pairs with keys from $[U]$ and values from $[V]$, such that \emph{the set of keys fits in one word}, i.e., $\log\binom{U}{n}\leq w$ (the values of $U,V,n$ here may be different from those in the full dictionary data structure).
Note that this task is nontrivial since $U$ and $n$ may be close but both super constant, and $V$ can still be very large, e.g., one could think $n=\sqrt{w}$, $U=w^{O(1)}$ and $V=2^w$.
Hence, the main challenge here is to map the keys to $n$ slots to store their \emph{values}, while supporting efficient key insertions and deletions.\footnote{The values can be stored using the \emph{spillover representation}~\cite{patrascu2008succincter} to avoid the problem of rounding $V$ to the next power of two.}

To rephrase the task, the set of keys is cheap enough to encode in one word, we want to store their values in $n$ slots with (almost) no redundancy such that given a key, we know where its value is stored by reading the entire key set, and when updating the key set by inserting or deleting one key, not many values need to be relocated.
A trivial solution is to store the values in the increasing order of their keys.
Since the entire key set can be achieved in constant time, given a key, we can compute its rank within the set, and hence, recover the index of the slot containing its value.
However, this solution may take $O(n)$ time to update the key set, since we may need to shift all values by one slot when inserting or deleting a key with a small rank.
Another solution is to keep the values in an arbitrary order while storing a pointer for each key to its value.
This achieves constant-time update and query, but uses $O(n\log n)$ redundancy bits in total (this is what the prior work uses).
Our (randomized) minimap achieves $O(\log n)$ expected time for insertion and deletion of a key-value pair, as well as constant lookup time.
Moreover, it does \emph{not} introduce \emph{any} redundancy.

The design of a minimap is inspired by the consistent hashing~\cite{karger1997consistent}, and has a very similar structure to a solution to \emph{memoryless worker-task assigning}~\cite{BK0TW22}, a problem studied in distributed computing.
Roughly speaking, we first map $n$ keys and $n$ slots to the unit circle randomly, and if a key $x$ is right after a slot $a$, we store the value of $x$ in $a$.
Then we repeat this process on the remaining keys and slots using independent randomness.
Each round matches a constant fraction of the keys and slots, hence, all keys are matched to slots after $O(\log n)$ rounds with high probability.
One can show that after one update, at most one key in each round needs to be relocated to a different slot.
Therefore, the update time is $O(\log n)$.
Besides the random functions, which can be shared between all minimaps, this structure has no redundancy, since the locations of all values are entirely determined by the key set (and the hash functions).

Using minimaps, we construct a dictionary with sublinear redundancy.
The main structure is similar to~\cite{liu22}, which recursively hashes keys to logarithmically smaller buckets.
Our construction also forms the same tree where each node corresponds to a bucket.
We note that there are other technical changes in order to keep the redundancy $o(n)$.
One such challenge comes from storing the keys: the fact that a key is hashed to a node carries information about the key, i.e., naively encoding the key would implicitly waste space.
Hence, when encoding a key, we must \emph{exploit} the fact that it has been hashed to this bucket (i.e., this fact shrinks the set of possible keys).
A common solution to this is the technical of \emph{quotienting}~\cite{Knuth73,Pagh01}.
Extra care needs to be taken when there is multiple levels of hashing as in our structure.

To avoid issues that may arise with variable-size data structures, we always allocate a fixed amount of space to each node.
Now not all keys are stored in the leaves, each internal node is also allocated a fixed number of slots.
We calculate the \emph{minimum} number of keys that will be hashed to a leaf with high probability, allocate and store exactly this number of keys in this leaf, and send the excessive keys to its parent node.
For each internal node $u$ in the tree, we also calculate the minimum number $m(u)$ of keys that will be hashed to $u$, and allocate a total of exactly $m(u)$ keys to $u$ and all its descendants, i.e., the number of keys stored directly in $u$ is determined by subtracting the total number of slots allocated to its children.
When a child of $u$ violates this minimum number, we will rehash the \emph{subtree} rooted at $u$, which only happens with low probability (in terms of the size of $u$).
Note that this strategy incurs another implicit redundancy by allowing a key to be stored in one of the many levels, i.e., for a given key, whether it is in the leaf or is stored directly in an ancestor is some information not determined by the input but implicitly determined by the data structure.
However, a careful calculation shows that this redundancy is also negligible.
By following this construction to the level such that each leaf roughly stores $O(n/R)$ keys, we achieve the claimed performance.

\subsection{Dictionaries for \texorpdfstring{$R<n/\log^{0.1}n$}{R < n / log\^{}\{0.1\} n}}
The above strategy fails when the redundancy is $O(n/\log^{10} n)$.
The primary reason is that this redundancy would require each leaf node to store (a large-)$\poly\log n$ keys while incurring only a constant bits of redundancy.
However, the minimap requires that the key set fits in one word, which does not hold when there are much more than $w=\log n$ keys.

After the first step of hashing into buckets of size $K=\poly\log n$, in this regime, we construct directly a data structure for each bucket with $O(1)$ bits of redundancy and supporting updates and queries in $O(\log\log n)$ time, based on the recently introduced \emph{adapters}~\cite{li2023dynamic}.

An adapter stores jointly $B$ arrays $D_1,\ldots,D_B$ using exactly $\sum_{i=1}^B\left|D_i\right|$ entries (with no redundancy), such that one can access one entry of an array $D_i$, or efficiently resize an array by incrementing or decrementing its size (and hence also the total number of entries in all arrays) by one.
The work of~\cite{li2023dynamic} designed a deterministic adapter for $B=2$ (two-way adapters) such that the entries can be accessed in constant time, and each resizing operation can be performed in $O(\log (\left|D_1\right|+\left|D_2\right|))$ time.
Adapters are useful to combine multiple components of a data structure, especially when each component has a variable length under updates.
Each array $D_i$ corresponds to the memory words of a data structure, and without incurring any redundancy, an adapter ``connects'' multiple data structures whose sizes may change.

In this paper, we generalize it to $B$-way adapters for larger $B$ using the essentially same algorithm as \emph{minimaps} described in the previous subsection, supporting accesses in constant time and resizing in $O(\log \sum \left|D_i\right|)$ time \emph{in expectation} (now the construction is randomized).
Roughly speaking, an adapter is to map the entries $\{(i, j):i\in [B], j\in[\left|D_i\right|]\}]$ to the indices $\bigBk{\sum_{i=1}^B\left|D_i\right|}$ with no redundancy, i.e., the mapping only depends on the array sizes, such that when some $\left|D_i\right|$ changes by one, only very few entries are mapped to a different location in expectation.

The seminal work of P\v{a}tra\c{s}cu~\cite{patrascu2008succincter} showed that augmented $B$-trees (aB-trees), a tree data structure with branching factor $B$ where each node stores some auxiliary information on its label that only depends the labels of its children, can be succinctly encoded with only $O(1)$ bits of redundancy, and can be accessed with no overhead in time (for $B$ not too large).
In particular, it can be used to build a \emph{static} dictionary for \emph{each bucket} of size $K=\poly\log n$ with $O(1)$ redundancy and lookup time $O(\log_B K)=O(\log_B\log n)$.
Two-way adapters were used to dynamize the augmented $B$-trees of P\v{a}tra\c{s}cu~\cite{patrascu2008succincter} for $B=2$ in~\cite{li2023dynamic}.
Roughly speaking, the succinct representation of a static aB-tree is constructed bottom-up: Having constructed the data structures for the children of a node $u$, the data structure for $u$ is obtained by first concatenating the data structures for the children of $u$, then encoding their labels succinctly together with the concatenation.
The adapters are used to store them succinctly with no space overhead, and allow for fast resizing.
It turns out that dynamizing aB-trees~\cite{li2023dynamic} results in another polynomial slowdown in the depth of the tree (in addition to the cost of adapters).
For each bucket of size $\poly\log n$, a tree with branching factor $B=2$ has depth $\log\log n$, and would be slowed down by a $\poly\log\log n$ factor.

Now we have efficient $B$-way adapters, and they give the potential to dynamize augmented $B$-trees for larger $B$.
In fact, we can set $B$ to $\sqrt{w}=O(\sqrt{\log n})$ (the largest $B$ that even the static aB-trees allow), and have a tree with constant depth.
It turns out that $B$-way adapters alone are not enough to dynamize general aB-trees for large $B$.
Since adapters only allow for incrementing or decrementing each data structure by one complete word (exactly $w$ bits), and the last less-than-$w$ bits of every component (an incomplete word) need to be handled separately also with no extra redundancy,
it is not clear how to dynamically maintain $B$ arrays that have incomplete words of arbitrary number of bits in $[0,w)$ that may resize.

One important fact that we crucially exploit in this work is that the aB-tree implementing dictionary \emph{does not resize arbitrarily}, as the insertion and deletion of a key will always resize the data structure by $\log (U/n)\pm O(1)$ bits.
By setting the entries size in the $B$-way adapter to be $\log (U/n)$, we always only need to resize by $O(1)$ bits in addition to one complete word.
This allows us to maintain the incomplete words efficiently using the same strategy as the adapters.

The question of dynamizing general aB-trees for large $B$, where the incomplete words may resize arbitrarily, remains open.
Resolving this question may improve the running times of several dynamic succinct data structures studied in~\cite{li2023dynamic}.

\section{Adapters}
\label{sec:adapter}

\newcommand{\matching}[1][]{\sigma_{\smallsub #1}}
\newcommand{\dist}{\Delta}
\newcommand{\udef}{\bot}
\newcommand{\incword}{\tilde{m}}
\newcommand{\numturn}[1][]{T}

\newcommand{\supp}{\mathop{\textup{supp}}}
\newcommand{\vl}{\vec \l}
\newcommand{\adapter}[1][]{\sigma_{\smallsub #1}}
\newcommand{\br}{B}
\newcommand{\Lmax}{L_{\max}}

It is a fundamental problem to manage multiple variable-size data structures within a contiguous piece of memory. The main purpose of this section is to introduce \emph{adapters} to help resolve this problem. The most important subroutine of adapters is to maintain an \emph{address mapping} between multiple small pieces of memory and a large one, which is further abstracted as the \emph{dynamic matching problem} which we will start with. Dynamic matching has various applications, including but not limited to the adapters.
We note that the solution here uses the same idea as a memoryless worker-task assigning algorithm~\cite{BK0TW22}.

\subsection{Dynamic Matching}
\label{sec:dynamic_matching}

\begin{problem}[Dynamic matching]
  \label{prob:dyn_matching}
  The \emph{dynamic matching} problem asks us to dynamically maintain an \emph{injection} (a matching) $\matching$ from a set of balls $A \subseteq U$ to a set of bins $B \subseteq V$, under the insertions and deletions of balls and bins (the operations).
  It is promised that $|A| \le |B|$ at any time. 
  Furthermore, the matching $\matching$ is required to only depend on $A, B$ (and possibly the random bits fixed in advance), written as $\matching = \matching[A, B]$.

  The \emph{cost} of an operation is defined as the number of ball relocations.
  Suppose that after a single operation, the new ball set and bin set become $A'$ and $B'$ respectively, the cost is
  \[
    \dist(\matching[A, B], \matching[A', B']) \defeq \sum_{a \in A \cup A'} \ind{\matching[A, B](a) \ne \matching[A', B'](a)}.
  \]
\end{problem}

A (deterministic) solution to this problem is described by a collection of matchings $\midBK{\matching[A, B]}_{A \subseteq U, \, B \subseteq V, \, |A| \le |B|}$, i.e., by assigning a matching to every possible pairs of ball set $A$ and bin set $B$. We call it a \emph{matching scheme}.
A randomized matching scheme is a distribution over the deterministic ones.

The main goal of this subsection is to introduce a randomized matching scheme with expected cost $O(\log |A|)$, requiring only a small number of balls to be moved during each operation.
We will sample the whole matching scheme $\midBK{\matching[A, B]}_{A, B}$ from a fixed distribution $\mathcal{D}$; after it is sampled and fixed, the maintained matching is determined solely by $A$ and $B$, i.e., the behavior of ball-relocations during each operation is deterministic.

\begin{lemma}
  \label{lm:dyn_matching}
  There is a randomized matching scheme $\midBK{\matching[A, B]}_{A, B} \sim \mathcal{D}$ such that the expected cost of any single operation on $(A, B)$ is $O(\log |A|)$.
\end{lemma}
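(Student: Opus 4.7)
The plan is to construct the matching scheme iteratively in rounds, following the consistent-hashing paradigm sketched in the overview and attributed to~\cite{BK0TW22}. First, I would fix, as the shared random bits of the scheme, a sequence of independent and uniformly random hash functions $h_1, h_2, \ldots : U \cup V \to [0, 1]$. For any input $(A, B)$ with $|A| \le |B|$, I define $\matching[A, B]$ by setting $A_1 = A$, $B_1 = B$, and iterating: in round $t$, place all elements of $A_t \cup B_t$ on the unit circle according to $h_t$; match each ball $a \in A_t$ to the nearest bin clockwise from $h_t(a)$ among $B_t$, resolving conflicts (two balls competing for the same bin) in favor of the ball that is closer; then let $A_{t+1}, B_{t+1}$ denote the still-unmatched balls and bins. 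The final $\matching[A, B]$ is the union of the partial matchings across rounds, and by construction it depends only on $(A, B)$ together with the fixed $h_t$'s, as required.

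Second, I would show that the unmatched set shrinks geometrically. Since $|A_t| \le |B_t|$ throughout and the circle positions are uniformly random, a standard spacings argument gives that a constant fraction of the unmatched balls are matched each round in expectation: by symmetry, for each bin $b \in B_t$, the probability that the counterclockwise arc just before $b$ contains at least one ball of $A_t$ is bounded away from $0$. Summing over bins and applying Chernoff-style concentration yields $|A_{t+1}| \le \alpha |A_t|$ for some constant $\alpha < 1$ with high probability, so after $T = O(\log |A|)$ rounds all balls are matched with overwhelming probability. Any low-probability failure event can be absorbed by a trivial deterministic fallback whose contribution to expected cost is negligible.

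Third, I would bound the expected cost of one operation by a coupling argument. Consider, say, inserting a ball $a^* \in U \setminus A$ to produce $A' = A \cup \{a^*\}$, and couple the two executions by running both with the same hash functions $h_t$. The central claim is that at the start of each round, the residual instances $(A_t, B_t)$ and $(A'_t, B'_t)$ differ in exactly one element (the ``token of difference''), so at most $O(1)$ balls have their match changed in round $t$ in expectation. Indeed, in round $t$ the extra ball $a^*$ is placed at $h_t(a^*)$ on the circle and either claims a bin that was unmatched in the original execution---in which case the token is consumed and both executions coincide thereafter---or it displaces a single previously-matched ball $a'$, which becomes the unique extra ball entering round $t+1$. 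The analogous analyses handle ball deletion, bin insertion, and bin deletion, in which the token may alternate among the four types ``extra/missing ball/bin.'' Summing $O(1)$ expected relocations over $T = O(\log |A|)$ rounds gives $\E\bk{\dist(\matching[A, B], \matching[A', B'])} = O(\log |A|)$.

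The main obstacle will be making the single-token cascade argument fully rigorous. In particular, I need to verify inductively that the residual sets differ in exactly one element at every round, even as the token changes type across rounds, and to bound the expected number of balls displaced per round, which reduces to analyzing the local neighborhood of one uniformly-placed point among $|A_t| + |B_t|$ other uniformly-placed points and should be $O(1)$ by a short gap calculation. A careful enumeration of the constant-size case tree for the four operation types, combined with the independence of the $h_t$'s across rounds, should suffice to complete the proof.
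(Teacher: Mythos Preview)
Your proposal is essentially the paper's proof: your matching rule (nearest bin clockwise, with ties going to the closer ball) coincides with the paper's rule (a ball is matched iff its immediate clockwise neighbor is a bin), the geometric-shrinking step is the paper's Claim~3.3, and your single-token cascade is exactly Claim~3.4. Two small corrections that do not affect the outcome: (i) Chernoff does not directly apply since the per-ball matching indicators are correlated, but you only need $\E[|A_{t+1}| \mid A_t,B_t] \le |A_t|/2$ together with Markov to get $\E[T]=O(\log|A|)$, after which the cost is deterministically $O(\max(T,T'))=O(T)$, so no high-probability bound or fallback is needed; (ii) the token is never actually ``consumed''---when the extra ball captures a bin that was unmatched in the original run, that bin becomes a missing-bin token in the next round rather than disappearing---so the four-type alternation you describe for the other operations applies to ball insertion as well.
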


\paragraph*{Matching Scheme.}

To prove \cref{lm:dyn_matching}, we first introduce a matching algorithm that takes $(A, B)$ as input and computes the matching $\matching[A, B]$. Following that, we will analyze its expected cost on a single operation.

Our algorithm is inspired by the \emph{consistent hashing}~\cite{karger1997consistent}, but with multiple rounds.
In each round, we use a hash function $h_i$ to map all remaining balls and bins to random points on the unit circle (i.e., the interval $[0, 1]$ with its two endpoints connected to form a loop). Suppose $a \in A$ is a ball whose next point in the clockwise direction on the circle is a bin~$b \in B$, then we create a matched pair $(a, b)$, i.e., we put ball $a$ into bin $b$. Intuitively, we expect half of the balls to be matched in such a round, so the whole algorithm ends after $O(\log |A|)$ rounds with high probability. The formal description of this algorithm is shown in \cref{dyncode}.

\begin{algorithm}[H]
	\caption{Dynamic Matching Algorithm}
	\SetKwRepeat{Do}{do}{while}
	\label{dyncode}
	\DontPrintSemicolon
	
	Hash functions $h_1,\dots,h_{|U|}$ are predetermined, where $h_i$ maps $U\cup V$ to the unit circle\;
	Let $A_i, B_i$ denote the sets of unmatched balls and bins before the $i$-th round, respectively\;
	$A_1 \gets A$, $B_1 \gets B$, $i \gets 1$\;
	\While{$A_i \ne \emptyset$} {
		$A_{i + 1} \gets A_i$, $B_{i + 1} \gets B_i$\;
		Hash all elements in $A_i$ and $B_i$ onto the circle according to $h_i$\;
		\For{$a \in A_i$} {
			\If {the element next to $a$ in the clockwise direction on the circle is a bin $b \in B_i$} {
				Set $\matching[A, B](a) = b$\;
				Remove $a$, $b$ from $A_{i+1}$, $B_{i+1}$\;
			}
		}
		$i \gets i + 1$\;
	}\end{algorithm}

The hash functions $\midBK{h_i}_{i \ge 1}$ are predetermined and stored in the memory in advance, and remain fixed throughout all operations. Therefore, when we insert or delete an element in $A$ or $B$, the images of other elements under $h_i$ remain unchanged. 
For any fixed pair $(A, B)$, over the randomness of $h_i$, the order of $h_i(A\cup B)$ on the circle is uniformly at random. 

The algorithm consists of multiple rounds, and denote by $\numturn$ the number of rounds. A worst-case upper bound is $\numturn \le |A|$, because in each round, at least one ball $a \in A$ will be matched. This also shows why using $|U|$ independent hash functions is always enough. Furthermore, we can prove that the expected number of rounds is $O(\log |A|)$.

\begin{claim}
  \label{clm:expected_round}
  For any $A \subseteq U$, $B \subseteq V$ with $|A| \le |B|$, if $\{h_i\}$ are uniformly random permutations, we have $\E[\numturn] = O(\log |A|)$.
\end{claim}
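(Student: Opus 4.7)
The plan is to establish a per-round halving in expectation, namely $\E[|A_{i+1}| \mid A_i, B_i] \le |A_i|/2$, and then iterate and tail-sum via Markov's inequality to conclude $\E[\numturn] = O(\log |A|)$.

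For the per-round step, I would fix any $a \in A_i$ and write $n_i = |A_i|$, $m_i = |B_i|$. Since $h_i$ is an independent uniformly random permutation, conditional on $(A_i, B_i)$ (which is determined by $A$, $B$, and $h_1, \dots, h_{i-1}$) the cyclic order it induces on $A_i \cup B_i$ is uniform. Hence the element immediately clockwise of $a$ is uniform among the other $n_i + m_i - 1$ elements, and is a bin with probability $m_i/(n_i + m_i - 1) \ge 1/2$, where the last inequality uses the invariant $n_i \le m_i$, preserved across rounds because each match removes exactly one ball and one bin in lockstep. Since the map $a \mapsto (\text{clockwise neighbor of } a)$ is injective --- every element on the circle has a unique immediate predecessor --- distinct balls whose clockwise neighbor is a bin produce distinct matches. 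By linearity of expectation the number of matches in round $i$ is therefore at least $n_i/2$ in expectation, yielding the claimed halving.

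Applying the tower rule then gives $\E[|A_i|] \le |A|/2^{i-1}$ by induction on $i$. Since $\numturn \ge i$ holds exactly when $A_i \ne \emptyset$, Markov's inequality implies $\Pr[\numturn \ge i] \le \min(1, |A|/2^{i-1})$. Summing $\E[\numturn] = \sum_{i \ge 1} \Pr[\numturn \ge i]$ and splitting at $i^\star = \lceil \log_2 |A| \rceil + 1$ bounds the head by $i^\star$ and the tail by a convergent geometric series, yielding $\E[\numturn] = O(\log |A|)$.

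There is no substantive obstacle; the two points that need care are the injectivity observation (so that balls within a single round do not compete for the same bin) and the preservation of the invariant $|A_i| \le |B_i|$ across rounds (so that the per-round halving applies uniformly in every round).
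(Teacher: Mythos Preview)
The proposal is correct and follows essentially the same approach as the paper: establish the per-round expected halving $\E[|A_{i+1}| \mid A_i, B_i] \le |A_i|/2$ (the paper counts adjacent ball--bin pairs along the cyclic order, you count balls whose clockwise neighbor is a bin --- these are the same quantity), and then deduce $\E[T] = O(\log |A|)$. Your deduction via the tower rule and tail-summing $\Pr[T \ge i] = \Pr[|A_i| \ge 1] \le \E[|A_i|]$ is slightly more direct than the paper's route through ``constant-factor shrinkage with constant probability,'' but the substance is identical.
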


\begin{proof}
  Consider the number of pairs that are matched in the $i$-th round. We first list all $|A_i| + |B_i|$ points on the unit circle in clockwise order, starting from some globally fixed point. Denote by $x_1, \ldots, x_{|A_i| + |B_i|}$ the elements on the list. Due to the randomness of the hash function $h_i$, the list follows uniform distribution over all $(|A_i| + |B_i|)!$ permutations of $A_i \cup B_i$. If for some $j \in [1, |A_i + B_i|]$, $x_j$ is a ball while $x_{j + 1}$ is a bin, then we found a matched pair ($x_{|A_i| + |B_i| + 1} \defeq x_1$ for simplicity). For each of these events, the probability of it occurring is
  \[
	\Pr\BigBk{x_j \textup{ is a ball} \land x_{j+1} \textup{ is a bin}} = \frac{|A_i|}{|A_i| + |B_i|} \cdot \frac{|B_i|}{|A_i| + |B_i| - 1} > \frac{|A_i| \cdot |B_i|}{(|A_i| + |B_i|)^2}.
  \]
  Multiplying by the number of events, the expected number of matched pairs we find in the $i$-th round is at least $\frac{|A_i| \cdot |B_i|}{(|A_i| + |B_i|)^2} \cdot (|A_i| + |B_i|) \ge \frac{|A_i|}{2}$. This implies $\E[|A_{i + 1}|] \le \frac{|A_i|}{2}$. Applying Markov's inequality on $|A_{i + 1}|$, we know that $\Pr\Bk{|A_{i + 1}| \le \frac{3}{4}|A_i|} \ge \frac{1}{3}$. As the number of balls decreases by a constant factor $3/4$ with constant probability $1/3$ in each round, the entire process ends within $O(\log |A|)$ rounds in expectation.
\end{proof}

\cref{dyncode} has induced a matching scheme $\midBK{\matching[A, B]}_{A, B}$. We also know that the expected rounds of \cref{dyncode} is $\E[\numturn] = O(\log |A|)$. To prove \cref{lm:dyn_matching}, it remains to show that the cost of each operation on $(A, B)$ is bounded by $O(\numturn)$.

We use the matching scheme induced by \cref{dyncode}. Assume $(A, B)$ is a pair of ball set and bin set, and $(A', B')$ differs from $(A, B)$ by a single element, i.e., $(A', B')$ is obtained by inserting or deleting a single element in $A$ or $B$. Let $T$ and $T'$ be the number of rounds when running \cref{dyncode} on $(A, B)$ and $(A', B')$, respectively. From \cref{clm:expected_round}, we know that $\E[T] = O(\log |A|)$.

We still use $A_i, B_i$ to denote the sets of remaining balls and bins just before the $i$-th round when running \cref{dyncode} on $(A, B)$. Similarly, let $A'_i, B'_i$ denote the corresponding sets when running on $(A', B')$. We will compare $(A_i, B_i)$ with $(A'_i, B'_i)$, and show the following claim:

\begin{claim}
  \label{clm:dyn_induction}
  For all $1 \le i \le \max(T, T')$, $(A'_i, B'_i)$ differs from $(A_i, B_i)$ by a single element. Moreover, only $O(T)$ elements have different matched counterparts when the matching algorithm is executed on $(A', B')$ compared to $(A, B)$.
\end{claim}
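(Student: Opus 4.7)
The plan is to prove both parts of the claim simultaneously by induction on $i$, tracking how the single differing element between $(A_i, B_i)$ and $(A'_i, B'_i)$ evolves from round to round. The base case $i = 1$ is immediate by assumption on $(A, B)$ and $(A', B')$. For the inductive step, suppose $(A_i, B_i)$ and $(A'_i, B'_i)$ differ in exactly one element $e$; without loss of generality assume $e$ lies in the execution on $(A, B)$ but is absent from the execution on $(A', B')$. Because the hash functions $\{h_i\}$ are fixed in advance, every element other than $e$ is placed at an identical point on the unit circle in both executions. Let $p$ and $q$ be the clockwise predecessor and successor of $e$ in the $(A, B)$ ordering; in the $(A', B')$ ordering, $p$'s clockwise successor becomes $q$, and every other element has the same clockwise successor in both runs.

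Consequently, outside the triple $\{p, e, q\}$, every ball takes an identical round-$i$ matching action in both executions. It then suffices to do a short case analysis on the types (ball vs.\ bin) of $e$, $p$, $q$. As a representative case, if $e$ is a ball and $q$ is a bin, then $e \mapsto q$ in the $(A, B)$ run; in the $(A', B')$ run $p$ now faces $q$, so if $p$ is a ball then $p \mapsto q$ (and the new symmetric difference becomes $\{p\}$), while if $p$ is a bin then $q$ remains unmatched in round $i$ (and the new symmetric difference becomes $\{q\}$, now a bin). The other cases, where $e$ is a ball with $q$ also a ball, or where $e$ is a bin itself, are verified analogously directly from the pseudocode of \cref{dyncode}. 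In every sub-case, the symmetric difference of $(A_{i+1}, B_{i+1})$ and $(A'_{i+1}, B'_{i+1})$ is again a single element (possibly of a different type than $e$), and at most two balls, namely $e$ (when $e$ is a ball) and $p$ (when $p$ is a ball whose matching changes), take differing round-$i$ actions between the two runs.

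Summing this constant per-round contribution gives an $O(\max(T, T'))$ bound on the number of balls whose round actions ever differ, which in turn upper-bounds the number of balls with differing matched counterparts. Finally, at round $T+1$ the $(A, B)$ run terminates with $A_{T+1} = \emptyset$, and $(A'_{T+1}, B'_{T+1})$ differs from it by a single element, so $|A'_{T+1}| \le 1$ and the $(A', B')$ run ends in at most one additional round. Hence $T' \le T + 1$, yielding the claimed $O(T)$ total cost bound.

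The main obstacle is organizing the case analysis cleanly without missing sub-cases — there are several to consider parameterized by the types of $e$, $p$, $q$ (and implicitly by whether the modifying operation is an insertion or deletion) — but the single-element invariant maintained by the induction is precisely what forces the per-round change in both the configuration and the matching to remain $O(1)$, and everything else follows by direct inspection of \cref{dyncode}.
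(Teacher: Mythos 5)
Your proof is correct and follows essentially the same approach as the paper: induct on the round index, note that removing/inserting a single element changes only the circular adjacency around it, and do an $O(1)$-case analysis on the types of the neighboring elements to show the symmetric difference remains a single element and at most $O(1)$ matches change per round; the paper reduces via a ``ball--bin and insertion--deletion symmetry'' to inserting a bin and then cases on its counterclockwise neighbor, whereas you parameterize by the types of $e$, $p$, $q$, but both are the same argument in different notation. The closing observation that $T' \le T+1$ follows from $|A'_{T+1}| \le 1$ is also identical to the paper's.
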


\begin{proof}
  When $i = 1$, it is true because $(A'_1, B'_1) = (A', B')$ is obtained by a single operation from $(A, B) = (A_1, B_1)$. Below, we assume the claim holds for a certain $i$, and prove it for $i + 1$.
  
  $(A'_i, B'_i)$ is obtained from $(A_i, B_i)$ by inserting or removing a ball or bin. Due to the symmetry between insertions and deletions, as well as between balls and bins, let us only consider inserting a bin to obtain $B'_i = B_i \cup \midBK{b^*}$ for now. Consider how the matched pairs in the $i$-th round will change when we add a new bin $b^*$. There are three cases depending on the previous element of $b^*$ on the circle, denoted by $x$. Specifically, $x$ is the element adjacent to $b^*$ in the \emph{counterclockwise} direction.
  \begin{enumerate}
  \item $x$ is a bin $x \in B_i$, so $b^*$ stays unmatched and $(A'_{i+1}, B'_{i+1}) = (A_{i+1}, \, B_{i+1} \cup \{b^*\})$.
  \item $x$ is a ball $x \in A_i$ that is unmatched in round $i$ before adding $b^*$. Now the algorithm will match $x$ with $b^*$, which implies $(A'_{i+1}, B'_{i+1}) = (A_{i+1}\setminus\{x\}, \, B_{i+1})$.
  \item $x$ is a ball $x \in A_i$ that is matched to some other $b \ne b^*$ in this round before inserting $b^*$. Now $x$ will be matched to $b^*$ while $b$ becomes unmatched. In this case, $(A'_{i+1}, B'_{i+1}) = (A_{i+1}, \, B_{i+1} \cup \{b\})$.
  \end{enumerate}
  In all three cases, only $O(1)$ elements changed their matching in the $i$-th round, so the total number of such elements throughout all rounds is $O(\max(T, T'))$. 
  Finally, observe that $T' \le T + 1$, because $(A'_{T+1}, B'_{T+1})$ differs from $(A_{T+1}, B_{T+1})$ by a single element and $A_{T+1} = \emptyset$, thus $\bigabs{A'_{T+1}} \le 1$; since each round matches at least one ball, we know the algorithm must end within $T + 1$ rounds when running on $(A', B')$.
\end{proof}

Finally, combining \cref{clm:dyn_induction,clm:expected_round} proves \cref{lm:dyn_matching}.

\begin{remark}
  In our application of dynamic matching in the later sections, we always have $\left|A\right|=\left|B\right|$, and will only insert or delete a ball and a bin simultaneously.
\end{remark}

\subsection{Virtual Memory Model}

Same as \cite{li2023dynamic}, we use a storage model called the \emph{virtual memory model} to capture the essential memory-accessing behavior of variable-size data structures. Similar to the word RAM model, the memory consists of an infinite sequence of $w$-bit words, which are labeled with positive integers $\midBK{1, 2, \ldots}$ as their \emph{addresses}. One should view the memory as a tape that starts from word 1 and extends to infinity.

At any given time, the variable-size data structure is allowed to use a prefix of the memory string. Formally, there is a positive integer $M$ indicating the number of available memory bits, and let $L \defeq \midfloor{M / w}$. Like in a word RAM with $M$ bits, the data structure can read or write one of the first $L$ words on the tape given its address $i \in [L]$, as well as the first $M - Lw = (M \bmod w)$ bits within the $(L+1)$-th word. The latter part is viewed as an \emph{incomplete word}, which we also refer to as the \emph{tail}; in contrast, the former part is called the \emph{complete words}. We may use ``accessing a word'' to refer to reading or writing it. Accessing an arbitrary word (including the tail) counts as one \emph{word-access}.

We also allow the data structure to change the memory size $M$ via \emph{allocations} and \emph{releases}. Due to technical reasons, we only allow two types of allocations (releases): increasing (decreasing) $M$ by 1 or $w$ bits, but not between. The memory size $M$ is stored by some outside entity and is always given to the algorithm when performing any operation.

The available part on the tape (i.e., the first $L$ words plus $(M \bmod w)$ bits) is called a \emph{virtual memory} (VM for short), which is responsible of storing variable-size data structures. Usually, we think allocations and releases are more expensive than word-accesses; similarly, word-accesses to VM are more expensive than arithmetic instructions, because there might be multi-level address translations between the VM and the physical memory. So we observe the following quantities to measure the time performance of a variable-size data structure stored in a VM:
\begin{enumerate}
\item the number of arithmetic instructions and lookup table queries it performs;
\item the number of word-accesses to the VM;
\item the number of allocations and releases (including both 1-bit and $w$-bit types).
\end{enumerate}

\paragraph{Variable-size data structures in spillover representation.}

The \emph{spillover representation}, first introduced in \cite{patrascu2008succincter}, aims to avoid the 1-bit redundancy from rounding up the data structure to an integer number of bits. It represents the data structure with a pair $(k, m) \in [K] \times \midBK{0, 1}^M$, where $k$ is called the \emph{spillover} and $K$ is called the the \emph{spill universe}.

This representation can be naturally combined with the VM model when the represented data structure is dynamic and has variable size: We directly store the $M$ memory bits into a virtual memory, supporting word-accesses, allocations and releases. Then, any update to the data structure (including changing its size) is mainly expressed as a sequence of VM operations. The spillover $k$, as well as $K, M$, is stored by some external entity, so we do not count the time taken to it when we analyze the stored data structure itself.

\subsection{Adapters}
\label{sec:adapter_sub}

Recall that we want to store $B$ variable-size data structures $D_1, \ldots, D_B$ in a contiguous piece of memory. For now, let us focus on the simplest case: each $D_i$ is encoded into $M_i = \l_i w$ memory bits, which are stored in a VM, without tail or spillover. In simpler terms, we need to ``concatenate'' $B$ smaller VMs (sub-VMs) with $\l_i$ complete words and no tail, then store them into a larger VM (super-VM) with $L \defeq \l_1 + \cdots + \l_B$ words. In this subsection, we introduce \emph{$B$-way adapters} to solve this simple case. Concatenating tails and spillovers involve more technical details and will be introduced later in \cref{sec:dabtree}.

A $B$-way adapter maintains an address mapping between $B$ sub-VMs and the super-VM. Formally, we use $(i, j)$ to indicate the $j$-th word in the $i$-th sub-VM, while the words in the super-VM are labeled with $\midBK{1, 2, \ldots, L}$. Our adapter maintains a dynamic matching $\matching$ between $\midBK{(i, j) : 1 \leq i \leq B, \, 1 \le j \le \l_i}$ and $[L]$, which only depends on $\vl \defeq (\l_1, \ldots, \l_\br)$, written as $\matching = \matching[\vl]$. Suppose $\matching(i, j) = t$, then we store the content of word $(i, j)$ into the $t$-th word in the super-VM. The metadata $\vl$ is stored outside and we assume free access to it. Using this design, storing the complete-word parts of $\br$ data structures incurs no redundancy.

The matching scheme we use is given in \cref{sec:dynamic_matching}: By viewing the words in sub-VMs $(i, j)$ as balls and the words in the super-VM as bins, it is clear that it is an instance of the dynamic matching \cref{prob:dyn_matching}. We maintain the matching via lookup tables, which requires that $|\!\supp \vl|$ is small enough. When we want to access the $j$-th word in the $i$-th sub-VM, we first query the lookup table to get $\matching[\vl](i, j) = t$ in constant time, then access the $t$-th word in the super-VM.

When any of the sub-VM requests an allocation (or release), the super-VM should also perform the same request.
Afterward, some of the words should be moved as the matching changes from $\matching[\vl]$ to $\matching[\vl']$.
The number of moved words equals
\[
  \dist\bk{\matching[\vl], \matching[\vl']} = \sum_{i, j} \ind{\matching[\vl](i, j) \ne \matching[\vl'](i, j)}.
\]
By \cref{lm:dyn_matching}, its expectation is bounded by $O(\log L)$. That is, an allocation (or release) of a sub-VM is transformed to an allocation (or release) of the super-VM with $O(\log L)$ additional word-accesses in the super-VM in expectation.

We summarize the above discussion with the following lemma:

\begin{lemma}
  \label{lm:adapter}
  Assuming $\vl \defeq (\l_1, \ldots, \l_\br)$ is stored outside and allows free access, $\br$ sub-VMs with $\l_1,\ldots,\l_\br$ words respectively and no tails can be maintained within a single super-VM with $L \defeq \sum_{i=1}^\br \l_i$ words, while
  \begin{itemize}
  \item a word-access on any sub-VM is simulated by a word-access on the super-VM;
  \item an allocation [resp$.$ release] on a sub-VM is simulated by an allocation [resp$.$ release] and $O(\log L)$ word-accesses in expectation on the super-VM.
  \end{itemize}
  Moreover, if $L \le \Lmax$ always holds, we can precompute lookup tables occupying $O(B\Lmax^{B+1})$ words in total, and complete the simulation within $O(1)$ computation cost for each access, and $O(\log L)$ expected computation cost for each allocation or release. The time of computing these lookup tables is $O((B+\Lmax)\Lmax^{B+1})$.
\end{lemma}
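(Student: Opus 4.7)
The plan is to directly instantiate the dynamic matching scheme of \cref{lm:dyn_matching} with balls $\midBK{(i,j):1\le i\le \br,\,1\le j\le \l_i}$ and bins $[L]$, where $\vl$ determines the current ball and bin sets. The induced matching $\matching[\vl]$ serves as the address translation between the sub-VMs and the super-VM: to access the $j$-th word of sub-VM~$i$, the adapter reads $t=\matching[\vl](i,j)$ and accesses the $t$-th word of the super-VM. Well-definedness is immediate because $\matching[\vl]$ is an injection depending only on $\vl$, so every sub-VM word has a unique, reproducible super-VM home.

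For an allocation or release that changes some $\l_i$ by one, the plan is to forward the same size change to the super-VM (whose total length also moves by one word) and then physically relocate the super-VM words whose preimage under the matching switched when $\vl$ became $\vl'$. The number of such relocations equals $\dist(\matching[\vl],\matching[\vl'])$, which is $O(\log L)$ in expectation by \cref{lm:dyn_matching}. This establishes the two simulation bullets: word-access costs a single super-VM word-access, and a size change costs one super-VM size change plus $O(\log L)$ expected super-VM word-accesses.

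For the lookup-table implementation I would precompute two tables shared across all adapters. The first, indexed by $(\vl,i,j)$, stores $\matching[\vl](i,j)$; since $L \le \Lmax$ bounds the number of admissible vectors by $O(\Lmax^\br)$ and each vector has at most $\Lmax$ entries, this table fits in $O(\Lmax^{\br+1})$ words. The second is indexed by $\vl$ together with an incremented or decremented coordinate, and stores the explicit list of positions differing between $\matching[\vl]$ and the neighboring matching; each such list has length at most $\Lmax$ and there are $O(\br)$ neighbors per $\vl$, giving $O(\br\Lmax^{\br+1})$ words in total. With these two tables, a word-access translates to $O(1)$ computation, and a size change translates to reading a list of expected length $O(\log L)$, matching the claimed computation bounds.

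Building the tables consists of enumerating the $O(\Lmax^\br)$ valid vectors and, for each, running \cref{dyncode} once. Because the hash functions $\midBK{h_i}$ are fixed in advance, each such run is deterministic and finishes in $O(\Lmax^2)$ worst-case time (at most $\Lmax$ rounds, each touching $O(\Lmax)$ elements); the relocation lists can then be filled by diffing adjacent matchings at cost $O(\br\Lmax)$ per vector. Summing gives $O(\Lmax^{\br+2}+\br\Lmax^{\br+1})=O((\br+\Lmax)\Lmax^{\br+1})$, as claimed. The main point needing care, rather than any difficult calculation, is verifying that the matching produced offline during table construction agrees with the one the online adapter uses; this follows from the defining property of \cref{prob:dyn_matching} that $\matching[\vl]$ depends only on $\vl$ (and the predetermined hash functions), so replaying the algorithm from scratch on each $\vl$ always recovers the same matching.
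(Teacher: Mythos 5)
Your proposal is correct and follows essentially the same route as the paper: reduce to the dynamic matching scheme of \cref{lm:dyn_matching} on balls $\{(i,j)\}$ and bins $[L]$, then precompute the same two lookup tables (one mapping $(\vl,i,j)\mapsto\matching[\vl](i,j)$ of size $O(\Lmax^{B+1})$ words, and one mapping each adjacent pair $(\vl,\vl')$ to its relocation list of size $O(B\Lmax^{B+1})$ words), with the same preprocessing bound obtained by running \cref{dyncode} once per vector and diffing neighbors. The only cosmetic difference is phrasing the relocation entries as "positions that differ" rather than explicit pairs $(t,t')$; the sizes and analysis are identical.
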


\begin{proof}
  Based on the discussion above, it remains to calculate the size of lookup tables. We precompute the following two lookup tables:
  \begin{itemize}
      \item Given $\vl$ and an address $(i,j)$ in sub-VM, output the corresponding address $\matching[\vl](i,j)$ in the super-VM.
      \item Given an allocation/release changing $\vl$ to $\vl'$, find out the list of all word-moves in the super-VM during this operation, where each word-move is represented by a pair $(t,t')$, meaning that the content of word $t$ should be moved to word $t'$.
  \end{itemize}
  The number of vectors $\vl$ is bounded by $\Lmax^{B}$; for every $\vl$, there are $O(B)$ possible $\vl'$ that can be reached by a single operation. Each entry of the second table is a list of at most $\Lmax$ pairs, so the total size is bounded by $B\Lmax^{B+1}$. It is also clear that the first table occupies only $O(\Lmax^{B+1})$ words of memory.
  
  These two tables can also be computed efficiently. To compute the first one, we only need to run \cref{dyncode} on every possible $\vl$, which takes $O(\Lmax^{B+2})$ time. To compute the second one, we only need to compare $\matching[\vl](\cdot, \cdot)$ with $\matching[\vl'](\cdot, \cdot)$ for every adjacent pair $(\vl, \vl')$, which takes $O(B\Lmax^{B+1})$ time.
\end{proof}

\subsection{Minimap for Dictionary Problem}

Another application of dynamic matching is maintaining key-value pairs with almost zero redundancy. Suppose we need to maintain $n$ key-value pairs, supporting insertions and deletions of key-value pairs, and querying the associated value of a given key. Differently from the general dictionary problem, here we additionally require that $\log \binom{U}{n} = O(w)$, i.e., the whole key set can be encoded into $O(1)$ words.
Note that the problem is non-trivial, as the value-universe $V$ can still be large. We introduce variable-size data structures called \emph{minimaps} for this regime, stated as follows.

\begin{lemma}[Minimap]
  \label{lm:minimap_lem}
  Suppose $r > 0$ is an integer and the word-length is $w = \Omega(n \log (Ur) + \log V)$. There is a data structure that stores $n$ key-value pairs from key-universe $[U]$ and value-universe $[V]$, which is encoded using $M$ memory bits (stored in a VM) and a spillover with universe $K \le 2r$, has at most $4/r$ bits of redundancy, $O(1)$ query time, and $O(\log n)$ expected insertion/deletion time. The data structure is allowed to access a lookup table of $O\bigbk{Un \binom{U}{n} \cdot (2rn)^n}$ words that can be shared between multiple instances of minimaps with the same parameters $U, V, n, r$.
\end{lemma}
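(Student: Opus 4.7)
I construct the minimap via a two-level spillover encoding combined with the dynamic matching of \cref{lm:dyn_matching}. Each value is first decomposed into a memory part $m_i$ and a small per-slot spillover $k_i$; the tuple $(s, k_1, \ldots, k_n)$---where $s \in [\binom{U}{n}]$ is the enumerative index of the key set $S$---is then packed into a single integer $y$, which is itself spillover-encoded into $M_\textup{top}$ VM bits plus the final spillover $k_\textup{top} \in [K_\textup{top}]$ with $K_\textup{top} \leq 2r$.

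\textbf{Encoding and redundancy.} Pick $M_v \defeq \floor{\log(V/(rn))}$ and $K_v \defeq \ceil{V/2^{M_v}}$. A short computation gives $K_v \leq 2rn$ and per-value redundancy $\log(K_v \cdot 2^{M_v}/V) < 2/(rn)$, so the $n$ per-value spillovers waste at most $2/r$ bits in total. Analogously, taking $M_\textup{top} \defeq \floor{\log(\binom{U}{n} K_v^n / r)}$ yields $K_\textup{top} \leq 2r$ and top-level redundancy $< 2/r$, for an overall redundancy of $< 4/r$. The packed integer $y \defeq s \cdot K_v^n + \sum_{i=1}^n k_i K_v^{n-i}$ lies in $[\binom{U}{n} K_v^n]$; by $w = \Omega(n \log(Ur) + \log V)$, we have $\log(\binom{U}{n} K_v^n) = O(n \log(Ur)) = O(w)$, so both $y$ and its decoded form $(s, k_1, \ldots, k_n)$ fit in $O(1)$ machine words.

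\textbf{Key-to-slot assignment, tables, and query.} Let $\sigma_S \colon S \to [n]$ be the bijection produced by \cref{lm:dyn_matching} with ball set $A \defeq S$ and bin set $B \defeq [n]$; it is well-defined because $|A| = |B| = n$ and depends only on $S$. I store the value of $x \in S$ in the $M_v$-bit block at position $\sigma_S(x)$ of the VM. I precompute, once per parameter tuple $(U, V, n, r)$, lookup tables providing (i) the bijection between $y$ and $(s, k_1, \ldots, k_n)$ in $O(\binom{U}{n}(2rn)^n)$ words, (ii) $\ind{x \in S}$ and $\sigma_S(x)$ given $(s, x)$ in $O(U\binom{U}{n})$ words, and (iii) the (expected-$O(\log n)$, worst-case-$O(n)$) list of slot-moves triggered by inserting or deleting $x$ from $S$, in $O(Un\binom{U}{n})$ words; the total is within $O(Un\binom{U}{n}(2rn)^n)$. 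A query reconstructs $y = k_\textup{top} \cdot 2^{M_\textup{top}} + m_\textup{top}$ in $O(1)$ arithmetic, decodes via (i), reads $j = \sigma_S(x)$ via (ii), retrieves $m_j$ from the VM, and outputs $k_j \cdot 2^{M_v} + m_j$, all in $O(1)$ time.

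\textbf{Update and main obstacle.} An insertion or deletion simultaneously changes the ball and bin sets by one element; by \cref{lm:dyn_matching} the total number of slot-relocations is $O(\log n)$ in expectation. I keep $(s, k_1, \ldots, k_n)$ in decoded form during the update (it fits in $O(1)$ words), so each relocation is $O(1)$ work: move one $(m_i, k_i)$ pair between slots and patch the decoded tuple. After the $O(\log n)$ local edits I set the new $s$, re-encode $y$, and write $(k_\textup{top}, m_\textup{top})$ back to the VM in $O(1)$ time, for a total expected update time of $O(\log n)$. The main obstacle is reconciling (a) the constraint that the global spillover universe stay $\leq 2r$ while every per-slot spillover is allowed to be as large as $2rn$ (which forces the two-level construction and the careful choice of $M_v, M_\textup{top}$ to keep both redundancies at $O(1/r)$), and (b) the requirement that a single slot-edit be processed in $O(1)$ without re-encoding the entire structure; the word-size bound $w = \Omega(n \log(Ur) + \log V)$ is the key lever, because it is exactly what makes the decoded tuple $(s, k_1, \ldots, k_n)$ fit in $O(1)$ words and thereby allows the top-level re-encoding to be deferred to a single $O(1)$ step at the end of the update.
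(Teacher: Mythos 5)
Your proposal follows essentially the same route as the paper's proof: a two-level spillover encoding (first per-value with universe $\le 2rn$, then on the combined key-set-plus-per-value-spillovers tuple with universe $\le 2r$), with the dynamic matching of \cref{lm:dyn_matching} assigning keys to fixed-size value slots, and lookup tables handling decoding, slot lookup, and slot-move lists. The only cosmetic difference is that you spell out the spillover parameters $(M_v, K_v)$ and $(M_\textup{top}, K_\textup{top})$ by hand where the paper invokes \cref{lm:succincter_lem3} from P\v{a}tra\c{s}cu twice, and you index the per-value spillovers by slot rather than by key rank; both choices are equivalent and yield the same $4/r$ redundancy and table-size bounds.
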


Besides the dynamic matching technique, we will also use the following lemma from~\cite{patrascu2008succincter}, which formalizes the fundamental functionality of the spillover representation.

\begin{lemma}[{\cite[Lemma 3]{patrascu2008succincter}}]
  \label{lm:succincter_lem3}
  Given an arbitrary set $\mathcal{X}$. Fixing a parameter $r > 0$, we can represent any element in $\mathcal{X}$ by a pair $(k, m) \in [K] \times \{0,1\}^{M}$ with $K \le 2r$, incurring at most $2/r$ bits of redundancy.
  If $w = \Omega(\log |\mathcal{X}|)$, encoding and decoding only take $O(1)$ arithmetic operations.\footnote{The original lemma in \cite{patrascu2008succincter} required $r \le |\mathcal{X}|$; however, when $r > |\mathcal{X}|$, it is easy to encode an element in $\mathcal{X}$ with a spillover $k \in [|\mathcal{X}|]$ and no memory bits, which incurs no redundancy, so the lemma holds as well.}
\end{lemma}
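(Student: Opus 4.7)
The plan is to reduce the statement to a purely arithmetic splitting argument: a pair $(k,m)\in[K]\times\{0,1\}^M$ ranges over a set of size $K\cdot 2^M$, so once we show how to pick $M$ and $K\le 2r$ with $|\mathcal{X}|\le K\cdot 2^M\le |\mathcal{X}|\,(1+1/r)$, we can encode an element by first mapping it through any fixed bijection $\phi\colon \mathcal{X}\to[|\mathcal{X}|]$ and then decomposing the resulting integer $i=\phi(x)$ as $i=k\cdot 2^M+m$ with $k\defeq\lfloor i/2^M\rfloor$ and $m\defeq i\bmod 2^M$. Decoding reverses this by forming $k\cdot 2^M+m$ and applying $\phi^{-1}$. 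The redundancy will simply be $\log(K\cdot 2^M)-\log|\mathcal{X}|\le \log(1+1/r)\le 1/(r\ln 2)\le 2/r$, so the whole difficulty funnels into choosing $M$ and $K$.

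First I would dispose of the easy regime $r\ge|\mathcal{X}|$ (the case not covered by the original Succincter statement, which the footnote handles): take $M\defeq 0$ and $K\defeq|\mathcal{X}|\le 2r$, pair an element with the empty bit-string, and observe that the redundancy is exactly $0$. For the main regime $r<|\mathcal{X}|$, the key step is to set
\[
  M \defeq \bigfloor{\log(|\mathcal{X}|/r)} \qquad \text{and} \qquad K \defeq \bigceil{|\mathcal{X}|/2^M}.
\]
By construction $K\cdot 2^M\ge |\mathcal{X}|$, so the encoding is injective. From $2^M\le |\mathcal{X}|/r$ I get $K\cdot 2^M\le |\mathcal{X}|+2^M\le |\mathcal{X}|(1+1/r)$, which yields the redundancy bound. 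From $2^{M+1}>|\mathcal{X}|/r$ I get $2^M>|\mathcal{X}|/(2r)$, hence $K\le |\mathcal{X}|/2^M+1<2r+1$, i.e., $K\le 2r$ (rounding through the integer ceiling).

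Next I would handle the time bound. The assumption $w=\Omega(\log|\mathcal{X}|)$ ensures that $|\mathcal{X}|$, $r$, $2^M$, $K$, and any integer in $[|\mathcal{X}|]$ all fit in $O(1)$ machine words, so integer division, modulus, multiplication, and addition between them each cost $O(1)$ in the word RAM model. Thus both the split $i\mapsto(k,m)$ and its inverse cost $O(1)$ arithmetic operations, as required; here the bijection $\phi$ is treated as the canonical ``name'' of an element of $\mathcal{X}$ that is provided to (and produced by) the encoder/decoder, and is not part of the cost the lemma accounts for.

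I do not anticipate a real obstacle: the only non-routine choice is picking $M$ to be the floor of $\log(|\mathcal{X}|/r)$ so that $2^M$ lands in the window $(|\mathcal{X}|/(2r),\,|\mathcal{X}|/r]$, which is exactly what simultaneously forces $K\le 2r$ and keeps the wasted ratio $K\cdot 2^M/|\mathcal{X}|$ within $1+1/r$. Once this window is identified, the redundancy estimate via $\ln(1+x)\le x$ and the word-RAM time accounting are mechanical.
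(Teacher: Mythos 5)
This lemma is \emph{cited} from P\v{a}tra\c{s}cu's ``Succincter'' paper and is not reproved here; the paper only supplies a footnote disposing of the $r>|\mathcal{X}|$ edge case. Your proof is essentially the standard argument from the source: pick $M=\lfloor\log(|\mathcal{X}|/r)\rfloor$, so $2^M$ lands in the window $(|\mathcal{X}|/(2r),\,|\mathcal{X}|/r]$, set $K=\lceil|\mathcal{X}|/2^M\rceil$, split the canonical index $i\in[|\mathcal{X}|]$ as $(\lfloor i/2^M\rfloor,\,i\bmod 2^M)$, and estimate the waste by $\log(1+1/r)\le 1/(r\ln 2)<2/r$. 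The handling of the easy case $r\ge|\mathcal{X}|$ matches the paper's footnote exactly.

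One small point worth flagging: your step from $K\le|\mathcal{X}|/2^M+1<2r+1$ to $K\le 2r$ uses that $K$ is an integer, which only closes the gap if $2r+1$ (equivalently $r$) is itself an integer. For non-integer $r$ the conclusion can fail (e.g.\ $|\mathcal{X}|=100$, $r=6.3$ gives $M=3$, $K=13>12.6=2r$). The lemma as stated says only ``$r>0$,'' but both the original source and every invocation in this paper (see \cref{lm:minimap_lem}, where ``$r>0$ is an integer'' is explicit, and the later uses with $r'=rn$) take $r$ to be a positive integer, so the argument is fine in context; it would just be cleaner to state the integrality assumption on $r$ in the hypothesis.
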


The main idea of our data structure is to establish $n$ ``slots'' in the memory, each storing a value in $[V]$. The order we store these values is maintained as a dynamic matching, i.e., once key $x$ is matched with slot $i$, we store the value associated with $x$ into slot $i$. During an insertion or deletion, \cref{lm:dyn_matching} guarantees that only few values need to be relocated.

\begin{proofof}{\cref{lm:minimap_lem}}
  First, we apply \cref{lm:succincter_lem3} on the value-universe $[V]$ with parameter $r' = rn$, obtaining a spillover representation $[V] \to [K^*] \times \midBK{0, 1}^{M^*}$ with $K^* \le 2rn$. Then, every value in $[V]$ is represented by a pair $(k^*_i, m^*_i) \in [K^*] \times \midBK{0, 1}^{M^*}$. This step induces $\le 2n / (rn) = 2/r$ bits of redundancy.

  In the second step, we combine the key set and all $n$ spillovers $k^*_1, \ldots, k^*_n$. Formally, let $\vec{x} = (x_1, \ldots, x_n, k^*_1, \ldots, k^*_n)$, where $x_1 < x_2 < \cdots < x_n$ are the $n$ distinct keys to be stored, and $k^*_i \in [K^*]$ is the spillover of $x_i$'s value. Let $\mathcal{X}$ be the set of all possible $\vec{x}$'s, then $|\mathcal{X}| \le \binom{U}{n} \cdot (2rn)^{n}$. We apply \cref{lm:succincter_lem3} again on $\mathcal{X}$ with parameter $r$, obtaining a spillover representation with spill universe $K \le 2r$, incurring at most $2/r$ bits of redundancy.
  Also note that $\log\left|\mathcal{X}\right|\leq O(w)$, the memory bits from this step only occupy $O(1)$ words.

  The third step is to partition $M^* \cdot n$ consecutive memory bits into $n$ slots of $M^*$ bits each, labeled from 1 to $n$. We maintain a dynamic matching between the key set $\midBK{x_1, \ldots, x_n} \subset [U]$ and slots $\midBK{1, \ldots, n}$. Once some key $x_i$ is mapped to slot $j$, we store the memory bits $m^*_i$ of $x_i$'s value in the $j$-th slot.

  Finally, we \emph{append} the memory bits from the second step ($O(1)$ words) \emph{to} the $M^* \cdot n$ memory bits from the third step. The result is stored into a VM and forms the final memory bits of our data structure (minimap). The spillover of the minimap $k \in [K]$ is the spillover from the second step. Hence, the total redundancy does not exceed $4/r$ bits.

  During an insertion or deletion, according to \cref{lm:dyn_matching}, only $O(\log n)$ slots need to relocate their contents, which takes $O(\log n)$ word-accesses. The outcome of the second step can be fully re-encoded as it takes only $O(1)$ word-accesses.

  It remains to calculate the sizes of the lookup tables we need in this process:
  \begin{itemize}
  \item Given $\vec{x}$ and a key $x_i$, decode the spillover $k^*_i$ of its value, and find the address of the slot where $x_i$'s associated value is stored. It occupies $O\bigbk{\binom{U}{n} \cdot (2rn)^n \cdot U}$ words.
  \item Given $\vec{x}$ and a key $x$ to be inserted or deleted, find out the updated $\vec{x}$ and a sequence of slot-moves (which slot's content should be moved to which location) during this operation. It requires $O(n)$ words for every possible operation, thus occupies $O\bigbk{\binom{U}{n} \cdot (2rn)^n \cdot U \cdot n}$ words.
  \end{itemize}
  Summing up, the lookup table occupies $O\bigbk{Un \binom{U}{n} \cdot (2rn)^n}$ words, which concludes the proof.
\end{proofof}

\section{Dictionary via Adapter Tree}
\label{sec:dabtree}
\label{sec:adaptertree}
\newcommand{\nodewidth}[1]{n_{#1}}
\newcommand{\nodewidthlb}[1]{\barbelow{n}_{#1}}
\newcommand{\nodewidthub}[1]{\bar{n}_{#1}}
\newcommand{\keyunisize}[1]{U_{#1}}
\newcommand{\numkey}[1]{n({#1})}
\newcommand{\numkeyu}{n{(u)}}
\newcommand{\numkeyvi}[1][i]{n{(v_{#1})}}
\newcommand{\quotientuni}{Q}
\newcommand{\zid}{z_{\textsc{id}}}
\newcommand{\zquot}{z_{\text{quot}}}
\newcommand{\numinstance}[2]{\binom{\keyunisize{#1}}{\numkey{#2}}}
\newcommand{\spilluni}[2]{K\bk{\keyunisize{#1}, \numkey{#2}}}
\newcommand{\spillvi}[1][i]{k_{#1}}
\newcommand{\nummembits}[2]{M\bk{\keyunisize{#1}, \numkey{#2}}}
\newcommand{\completewordi}[1][i]{m_{\textup{word}}^{[#1]}}
\newcommand{\extrabitsi}[1][i]{m_{\textup{bit}}^{[#1]}}
\newcommand{\extrabits}{m_{\textup{bit}}}
\newcommand{\mcat}{m_{\textup{cat}}}
\newcommand{\Mcat}{M_{\textup{cat}}}
\newcommand{\Mmax}{M_{\textup{max}}}
\newcommand{\mfix}{m_{\textup{fix}}}
\newcommand{\Mfix}{M_{\textup{fix}}}
\newcommand{\mrem}{m_{\textup{rem}}}
\newcommand{\Mrem}{M_{\textup{rem}}}
\newcommand{\numkeyvec}{\vec{n}}
\newcommand{\X}{\mathcal{X}}
\newcommand{\ym}{y_{\smallsub M}}
\newcommand{\yk}{y_{\smallsub K}}
\newcommand{\kcat}{k_{\textup{cat}}}
\newtheorem{statement}{Statement}
\newcommand{\numword}{L_{\textup{word}}}
\newcommand{\numwordi}[1][i]{L^{[i]}_{\textup{word}}}
\newcommand{\numextbit}{L_{\textup{bit}}}
\newcommand{\numextbiti}[1][i]{L^{[i]}_{\textup{bit}}}

Recall that a dictionary data structure maintains a set of $N$ distinct keys from $[U]$ under key insertions and deletions, such that given a query $x\in [U]$, it returns if $x$ is in this set. In some scenarios, the dictionaries may also need to store a $O(\wordlen)$-bit string called the \emph{value} for each key. For most algorithms including ours, storing values is not the main challenge, so we will focus on storing keys.
Now fix an integer $100 \le h \le \log^{0.3} N$ and let $R = N / \log^h N$. In this section, we will design a dictionary with redundancy $O(R)$, which supports queries in $O(h^2)$ time, and updates in $O(h^3 \log \log N)$ time. When $h = \Theta(1)$, this result covers the parameter regime $R \le N / \log^{0.1} N$ of \cref{thm:main}.

Additionally, we assume $U = N^{1 + \alpha}$ for some constant $\alpha > 0$; also assume $\wordlen = \Omega(\log N)$. Without loss of generality, we assume the dictionary always contains $N$ or $N - 1$ keys, but not fewer.\footnote{This assumption is without loss of generality because we may insert $N$ special elements $e_1, \ldots, e_N$ to the universe $U$, each indicating that the $i$-th key in the dictionary does not exist. If the number of keys is smaller than $N-1$, we include a suffix of special elements in the dictionary to make the number of keys equal $N-1$ or $N$ at any time. The entropy increased by enlarging the universe is negligible: $1/\poly(N)$.}

\paragraph*{Algorithm Framework.}
Let $B = O(\sqrt{\log N})$ be a parameter, $\nodewidth{1} = O(B^{8h})$, and $\nodewidth{8h} = \nodewidth{1} / B^{8h-1}=O(B)$. Consider the following two-stage hashing: We first hash keys into $N/\nodewidth{1}$ buckets, which we call the \emph{level-1 nodes} or \emph{root nodes}, with the expected number of keys in each bucket being $\nodewidth{1}$.
For all keys that mapped to a level-1 node $u$, we further hash them into $\nodewidth{1} / \nodewidth{8h}$ buckets, which we call \emph{level-$8h$ nodes} or \emph{leaf nodes}, with the expected number of keys in each bucket being $\nodewidth{8h}$.

Next, we build a full $B$-ary tree between every root and its leaves. Formally, for $1 \le \l < 8h$, every level-$\l$ node has exactly $B$ children at level $\l + 1$. We now get a forest where each key is finally hashed into a leaf node (through two-stage hashing). For a level-$\l$ node $u$, we say a key $x$ is hashed into $u$ if $x$ is hashed into a descendant leaf of $u$; the expected number of keys hashed into $u$ is denoted by $\nodewidth{\l} \defeq B^{8h-\l} \nodewidth{1}$.

The design of our data structure is recursive: For a level-$\l$ node $u$, we construct a variable-size data structure maintaining all keys hashed into $u$, which is stored in a \emph{virtual memory} (VM). This is mainly done by ``concatenating'' the sub-VMs from its $B$ children on the next level. The leaf nodes are directly maintained by \emph{mini-maps} as the number of keys hashed into every leaf node is very small ($O(\sqrt{\log N})$). We call this recursive data structure \emph{adapter trees}.

\paragraph*{Tree Parameters.}
We specify some parameters here. For technical reason, we choose $B$ to be the unique power of two in $[\sqrt{\log N}, 2\sqrt{\log N})$, and $\nodewidth{1} \in [B^{8h}, 2B^{8h})$ be the unique number such that $N / \nodewidth{1}$ is a power of two (note that $\nodewidth{1}$ is \emph{not} necessarily an integer). Correspondingly, $\nodewidth{\l} = \nodewidth{1} / B^{\l-1}$ is also not necessarily an integer.

Let $\nodewidthub{1} = \bigceil{\nodewidth{1} + \nodewidth{1}^{2/3}}$ and $\nodewidthlb{1} = \bigfloor{\nodewidth{1} - \nodewidth{1}^{2/3}}$. By carefully choosing the hash functions (as we will describe in \cref{sec:hashfunc_for_adapter}), we can ensure that the number of keys $\numkeyu$ in each level-1 node $u$ is between $[\nodewidthlb{1}, \nodewidthub{1}]$ with high probability.
When designing our data structure, we always assume this to be true.
Once some root $u$ violates this condition, we say $u$ \emph{overflows} ($\numkeyu > \nodewidthub{1}$) or \emph{underflows} ($\numkeyu < \nodewidthlb{1}$), and reconstruct the whole dictionary immediately (using different hash functions). Since the probability is sufficiently small, the expected cost of reconstruction is negligible (this will be analyzed in \cref{sec:hashfunc_for_adapter} and 
\cref{sec:rehashing_for_multilevel}). Similarly, we define $\nodewidthub{8h}= \bigceil{\nodewidth{8h} + \nodewidth{8h}^{2/3}}$ and $\nodewidthlb{8h}  = \bigfloor{\nodewidth{8h} - \nodewidth{8h}^{2/3}}$, and always assume that the number of keys in each level-$8h$ node is between $[\nodewidthlb{8h}, \nodewidthub{8h}]$.

When $N$ is sufficiently large, we know $\nodewidthlb{8h} \ge \nodewidth{8h} / 2$ and $\nodewidthub{8h} \le 2 \cdot \nodewidth{8h}$. Thus, for any given level-$\l$ node $u$, we know $\numkeyu \in [\nodewidth{\l} / 2, \, 2 \nodewidth{\l}]$, because the keys hashed to $u$ is a union of those of $u$'s descendent leaves.

\subsection{Hashing and ID-Quotient Separation}
Imagine that we have $n$ keys that are independent random bit strings. With probability $\ge 1 - 1 / n^2$, the first $\ceil{4 \log n}$ bits are all distinct, thus sufficient to distinguish the keys from each other. We call the first $\ceil{4 \log n}$ bits the \emph{identifier} (``ID'' for short) of each key, and call the remaining bits the \emph{quotient}. An original key can now be regarded as a \emph{key-value pair}: the ID is enough to identify a key while the quotient is considered as the value associated with the ID.

A merit of separating IDs and quotients is that the IDs can be much shorter than the original keys, which enables us to use efficient data structures for them. Typically, storing associated values in a dictionary is much easier than storing the keys. By putting most of the bits in the quotient, we actually reduce the key universe a lot. For $\nodewidth{8h} = O(\log^{0.5} N)$ keys, it is possible to encode the whole ID set into a single word while supporting constant time operations based on lookup tables, so we can maintain the ID-quotient pairs using a \emph{minimap}.

In our algorithm, the keys are represented in ID-quotient pairs since they were hashed into level-1 nodes. Formally, a level-1 node $u$ is responsible for storing a subset of its universe $[2^{\idleni[1]}] \times [\quotientuni]$, where $\idleni[1] \defeq \ceil{4 \log \nodewidth{1}}$ is the length of the ID part, and $\quotientuni$ is the range of the quotient part.\footnote{We allow the quotient range not to be a power of two.} The hash function for this hashing step is a \emph{bijection} $h^{(1)} : [U] \to [N/\nodewidth{1}] \times [2^{\idleni[1]}] \times [\quotientuni]$, which maps a key $x \in [U]$ into a triple $(u, \xid, \xquot)$, with $u$ being the index of a level-1 node, $\xid$ and $\xquot$ being the ID and quotient of key $x$ in the root node $u$, respectively. (We round up $U$ to a multiple of $(N / \nodewidth{1}) \cdot 2^{\idleni[1]}$ so that $h^{(1)}$ can be a bijection, which only increases the entropy $\log \binom{U}{N}$ by a negligible value.)

The second hashing step further hashes each key from a level-1 node to level-$8h$ nodes. Benefiting from the fact that the keys are already represented as ID-quotient pairs, the second hashing only takes the ID $\xid$ as input.
Specifically, this hash function maps $\xid$ (the ID of $x$ in the root node $u$) into a pair $(v, \yid)$, where $v \in [B^{8h-1}]$ indicates a descendant leaf of the root $u$, and $\yid$ is the ID of $x$ in the leaf $v$. Formally, this hash function is a bijection $h^{(2)} : [2^{\idleni[1]}] \to [B^{8h - 1}] \times [2^{\idleni[8h]}]$, where $\idleni[8h] = \idleni[1] - (8h-1) \log B$. The quotient remains the same, i.e., the ID-quotient pairs for $x$ in the leaf node is $(\yid, \xquot)$.

The hash function $h^{(1)}$ should satisfy two properties: (1) it should evenly distribute all keys to the target buckets, preventing overflow or underflow at any bucket; (2) for all keys that are hashed to the same bucket, they must be assigned different IDs in it. $h^{(2)}$ also needs to satisfy the first property. We defer the formal design of hash function families to \cref{sec:hashfunc_for_adapter}.

After two hashing steps, each leaf is responsible for storing a set of ID-quotient pairs from $[2^{\idleni[8h]}] \times [\quotientuni]$. We use $\keyunisize{8h} \defeq 2^{\idleni[8h]} \cdot \quotientuni = U \cdot \nodewidth{8h} / N$ to denote the universe size of leaf nodes. Accordingly, we define $\keyunisize{\l} \defeq \keyunisize{8h} \cdot B^{8h - \l}$, representing that any level-$\l$ node $u$ is responsible for storing $\numkeyu$ keys from a universe of $\keyunisize{\l}$ elements. We assume $\keyunisize{\l} \ge 4 \nodewidth{\l}$ for all $1 \le \l \le 8h$ as $N$ is sufficiently large, which also implies $\keyunisize{\l} \ge 2 \numkeyu$ for any level-$\l$ node $u$.

\subsection{Weak Virtual Memory}
\label{sec:wvm}

Recall that on each non-leaf node $u$ of the tree, we plan to inductively maintain a variable-size data structure that stores the set of keys hashed to $u$, mainly by concatenating the sub-VMs from $u$'s children. The \emph{adapters}, introduced in \cref{sec:adapter_sub}, can work efficiently when the sub-VMs have no \emph{tails} (i.e., their lengths are integer numbers of words), allowing us to allocate or release $w$ bits at a time. Unfortunately, we do not know any general method that can concatenate the tails under the same performance.

\paragraph*{Toy method of concatenating tails.} To develop intuition for \emph{weak virtual memories} (introduced later), we start with a simple idea of concatenating the tails: treat them as bits and use adapters. Assume there are $B$ sub-VMs, namely $m^{[1]}, \ldots, m^{[B]}$, which we want to concatenate. Every $m^{[i]}$ can be regarded as a sequence of complete words plus a tail. We further treat the tail as a sequence of less than $w$ bits. The simple method is to concatenate the complete words using one adapter while concatenating the \emph{tail bits} using another adapter. The ``outputs'' of the two adapters are further organized into a super-VM.

This toy method has two issues. First, it only allows allocating one bit or $w$ bits at a time, but not between. For instance, if we want to allocate $w/2$ bits, we need to repeatedly allocate 1 bit for $w/2$ times, which is unaffordably slow. Second, the tail of any given sub-VM $m^{[i]}$ is not mapped to consecutive bits in the super-VM, so accessing the tail of a sub-VM could require accessing up to $\Theta(w)$ different locations in the super-VM.

Before we resolve these two issues, we first introduce the \emph{weak virtual memory} model, which appears naturally in the toy method.

\paragraph*{Weak virtual memory.}
A \emph{weak virtual memory} (WVM for short) consists of two tapes: the \emph{word tape} contains a sequence of complete words of $w$ bits each, whereas the \emph{bit tape} contains a sequence of memory bits. Similar to VM, each word (resp$.$ bit) in the word tape (resp$.$ bit tape) is labeled by a positive integer in $[1, \numword]$ (resp$.$ $[1, \numextbit]$), which we call the \emph{address} of the word (resp$.$ bit), where $\numword$ (resp$.$ $\numextbit$) is the number of words (resp$.$ bits) in the word tape (resp$.$ bit tape). The bits in the bit tape are called \emph{extra bits}, and we stipulate that the number of extra bits $\numextbit < 2w$.

In an \emph{access} operation in WVM, one is allowed to either access a word in the word tape, or access an extra bit in the bit tape, given their addresses as inputs. In an \emph{allocation} operation, one is allowed to either allocate a word in the word tape, or allocate an extra bit in the bit tape. The \emph{release} operation is defined in the similar way.

Note that WVM is a weaker model than VM: each WVM can be simulated by a VM by storing the bit tape using the tail and possibly the last complete word. Any access of the WVM can be transformed into one access of the VM, whereas any allocation of the WVM can be transformed into one allocation plus $O(1)$ accesses of the VM.

Based on this model, the above toy method of concatenating tails can be described in the following way: Assume we want to concatenate $B$ VMs $m^{[1]}, \ldots, m^{[B]}$. We treat each of them as a WVM $\tilde{m}^{[i]}$ by storing $m^{[i]}$'s tail into the bit tape of $\tilde{m}^{[i]}$ (the word tape of $\tilde{m}^{[i]}$ still stores all complete words in the VM). Then, we concatenate all word tapes using one adapter, and concatenate all bit tapes using another adapter. The outcome is stored into a super-VM.

The two issues of the toy method are closely related to WVMs: The first issue is because WVM only supports allocating or releasing a single bit or a single word at a time, but not between. The second issue arises as we treat VMs as WVMs: An access of the tail in a VM becomes $O(w)$ accesses in the WVM, which is way more expensive than accessing a complete word in the VM (the latter is transformed to still one access in the WVM).

\paragraph*{Feature on the VM size.} Under the specific framework of our algorithm, we actually do not need allocations and releases of arbitrary length: supporting $w$-bit and 1-bit allocations is enough for us, which bypasses the first issue. Recall that a level-$\l$ node $u$ stores $\numkeyu$ keys from a universe $\keyunisize{\l}$, so the data structure at node $u$ should use roughly $\log\binom{\keyunisize{\l}}{\numkeyu}$ bits of space. When adding a new key to the subtree of $u$, the number of bits to allocate is around $\log \binom{\keyunisize{\l}}{\numkeyu + 1} - \log \binom{\keyunisize{\l}}{\numkeyu}$. By choosing a proper word-length $w$, this number of bits to allocate is always $w + O(1)$, which allows us to allocate efficiently. This feature is formalized by the following claim:

\begin{claim}
  \label{clm:feature_size}
  For each $\l$ and $\nodewidth{\l}/2 \le \numkeyu \le 2\nodewidth{\l}$, we have
  \begin{gather*}
    \log \frac{\keyunisize{8h}}{\nodewidth{8h}} - 3
    \;\le\; \log \binom{\keyunisize{\l}}{\numkeyu + 1} - \log \binom{\keyunisize{\l}}{\numkeyu}
    \;\le\; \log \frac{\keyunisize{8h}}{\nodewidth{8h}} + 1;
    \numberthis \label{eq:feature_size_dif}
    \\
    \bk{\log \frac{\keyunisize{8h}}{\nodewidth{8h}} - 1} \numkeyu
    \;\le\; \log \numinstance{\l}{u}
    \;\le\; \bk{\log \frac{\keyunisize{8h}}{\nodewidth{8h}} + 3} \numkeyu.
    \numberthis \label{eq:feature_size_sum}
  \end{gather*}
\end{claim}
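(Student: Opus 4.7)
The plan is to reduce both inequalities to two elementary observations: a scale-invariance identity for the ratio $\keyunisize{\l}/\nodewidth{\l}$, and the standard two-sided bound $(n/k)^k \le \binom{n}{k} \le (en/k)^k$.

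First I would verify the identity $\keyunisize{\l}/\nodewidth{\l} = \keyunisize{8h}/\nodewidth{8h}$ for every level $\l$. This follows directly from the definitions: $\keyunisize{\l} = \keyunisize{8h} \cdot B^{8h-\l}$ by design, while $\nodewidth{\l}/\nodewidth{8h} = B^{8h-\l}$ since consecutive levels differ by a branching factor of $B$ and each key's hashing tree has depth $8h-1$ below level~1. Dividing gives the claimed identity. Combined with the already-proved bound $\numkeyu \in [\nodewidth{\l}/2, 2\nodewidth{\l}]$ and the assumption $\keyunisize{\l} \ge 2 \numkeyu$, the ratio $\keyunisize{\l}/\numkeyu$ always lies in $\bigBk{\tfrac{1}{2}\cdot\keyunisize{8h}/\nodewidth{8h},\, 2\cdot\keyunisize{8h}/\nodewidth{8h}}$.

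For the first inequality, I would use the telescoping identity
\[
  \log \binom{\keyunisize{\l}}{\numkeyu+1} - \log \binom{\keyunisize{\l}}{\numkeyu} = \log \frac{\keyunisize{\l} - \numkeyu}{\numkeyu + 1}.
\]
Since $\keyunisize{\l}\ge 2\numkeyu$, the numerator satisfies $\keyunisize{\l}/2 \le \keyunisize{\l} - \numkeyu \le \keyunisize{\l}$, and the denominator satisfies $\numkeyu+1 \in [\nodewidth{\l}/2,\, 3\nodewidth{\l}]$ (for $\nodewidth{\l}$ sufficiently large). Plugging in the scale-invariance identity then yields
\[
  \log \frac{\keyunisize{\l} - \numkeyu}{\numkeyu + 1} \in \Bk{\log \tfrac{\keyunisize{8h}}{\nodewidth{8h}} - \log 6,\; \log \tfrac{\keyunisize{8h}}{\nodewidth{8h}} + 1},
\]
and $\log 6 < 3$ gives the stated inequality.

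For the second inequality, I would apply $(n/k)^k \le \binom{n}{k} \le (en/k)^k$ with $n = \keyunisize{\l}$ and $k = \numkeyu$. Taking logarithms gives
\[
  \numkeyu \log \tfrac{\keyunisize{\l}}{\numkeyu} \;\le\; \log \numinstance{\l}{u} \;\le\; \numkeyu \bk{\log \tfrac{\keyunisize{\l}}{\numkeyu} + \log e}.
\]
Using $\log(\keyunisize{\l}/\numkeyu) \in \bigBk{\log(\keyunisize{8h}/\nodewidth{8h}) - 1,\, \log(\keyunisize{8h}/\nodewidth{8h}) + 1}$ from the first paragraph, and the fact $\log e < 2$, both endpoints fall into the claimed $[-1, +3]$ window.

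The only real subtlety is bookkeeping the constants: tracking that $\log 6 < 3$ on one side and $1 + \log e < 3$ on the other. Both are comfortable inequalities, so no genuine obstacle is expected; the proof is essentially two lines of calculation after the scale-invariance identity is in hand.
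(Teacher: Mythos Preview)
Your proposal is correct and follows essentially the same route as the paper's own proof: the same telescoping identity $\log\binom{\keyunisize{\l}}{\numkeyu+1}-\log\binom{\keyunisize{\l}}{\numkeyu}=\log\frac{\keyunisize{\l}-\numkeyu}{\numkeyu+1}$ for \eqref{eq:feature_size_dif}, the same $(n/k)^k\le\binom{n}{k}\le(en/k)^k$ bound for \eqref{eq:feature_size_sum}, and the same scale-invariance $\keyunisize{\l}/\nodewidth{\l}=\keyunisize{8h}/\nodewidth{8h}$ throughout. The only differences are cosmetic constant-tracking (you bound $\numkeyu+1\le 3\nodewidth{\l}$ where the paper uses $4\nodewidth{\l}$), and both lead to the stated constants.
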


\begin{proof}
  One can compute
  \begin{align*}
    \log \binom{\keyunisize{\l}}{\numkeyu + 1} - \log \binom{\keyunisize{\l}}{\numkeyu} 
    = \log \frac{\keyunisize{\l} - \numkeyu}{\numkeyu + 1}.
    \numberthis \label{eq:binom_diff_simple}
  \end{align*}
  Note that $\numkeyu \ge \nodewidth{\l} / 2$, $\keyunisize{\l} - \numkeyu \ge \keyunisize{\l} / 2$, and $\numkeyu + 1 \le 2\nodewidth{\l} + 1 \le 4\nodewidth{\l}$, we have
  \begin{align*}
    \log \frac{\keyunisize{\l}}{8\nodewidth{\l}} \le \log \frac{\keyunisize{\l} - \numkeyu}{\numkeyu + 1} \le \log \frac{2\keyunisize{\l}}{\nodewidth{\l}}.
  \end{align*}
  Combining with \eqref{eq:binom_diff_simple} and $\keyunisize{\l}/\nodewidth{\l} = \keyunisize{8h}/\nodewidth{8h}$ gives \eqref{eq:feature_size_dif}.

  To show \eqref{eq:feature_size_sum}, we observe that
  \[
    \numkeyu \log \frac{\keyunisize{\l}}{\numkeyu} \le \log \binom{\keyunisize{\l}}{\numkeyu} \le \numkeyu \log \frac{\keyunisize{\l}}{\numkeyu} + (\log e) \numkeyu,
  \]
  and
  \[
    \numkeyu \log \frac{\keyunisize{\l}}{\numkeyu} 
    = \numkeyu \log \frac{\keyunisize{\l}}{\nodewidth{\l}} + \numkeyu \log \frac{\nodewidth{\l}}{\numkeyu},
  \]
  where $\keyunisize{\l}/\nodewidth{\l} = \keyunisize{8h}/\nodewidth{8h}$ and $\log \frac{\nodewidth{\l}}{\numkeyu} \in [-1, 1]$ because $\frac{1}{2} \nodewidth{\l} \le \numkeyu \le 2\nodewidth{\l}$.
Combining these two formulas together, we get
  \[
    \log \binom{\keyunisize{\l}}{\numkeyu} - \numkeyu \log \frac{\keyunisize{8h}}{\nodewidth{8h}}
    \;\in\; \Bk{ - \numkeyu, \,\tall (1 + \log e) \numkeyu}
    \;\subset\; \Bk{ - \numkeyu, \,\tall 3 \numkeyu},
  \]
  which proves \eqref{eq:feature_size_sum}.
\end{proof}

We choose $w \defeq \bigfloor{\log \frac{\keyunisize{8h}}{\nodewidth{8h}}} - 5$ as the word-length\footnote{$w$ mainly determines how we partition memory bits into words in a VM. If the ``physical'' word-length is different, one can simulate the algorithm with word-length $w$ at no additional cost since $w \le \log \keyunisize{8h} = O(\log N)$.}, then:
\begin{enumerate}[label=\textup{(\arabic*)}]
\item\label{enum:feature_dif} inserting (deleting) a key to $u$ will cause an allocation (release) of one word plus $O(1)$ bits;
\item\label{enum:feature_sum} the space usage of $u$'s data structure roughly equals $\numkeyu$ words plus $O(\numkeyu)$ bits.
\end{enumerate}
Here, Condition~\ref{enum:feature_dif} addresses the first issue of the toy method; Condition~\ref{enum:feature_sum} will lead to the following property and benefit resolving the second issue.

\begin{definition}[Word-dominant]
  \label{def:word_dominant}
  A VM is \emph{word-dominant} if at any given time, there is a positive integer $n$, such that the current number of bits in the VM lies in $[nw + 3n, \, nw + 10n]$. For a word-dominant VM, we forbid allocations and releases that will result in a violation of this word-dominant condition.
\end{definition}

\paragraph*{Random swap.} As mentioned in the second issue, when we treat a VM as a WVM by storing the tail in the bit tape, accessing the tail of VM becomes very slow, as it would require $O(w)$ bit-accesses to the WVM. However, accessing any given complete word of VM is still efficient (transformed into only one word-access to the WVM). To resolve the second issue, we amortize the cost of tail-access by swapping the tail with a random complete word. Specifically, we choose a word $i \in [\numword]$ almost uniformly at random, and swap the tail with the prefix of word $i$ of the same length $\numextbit$ before we store the ``new tail'' into the bit tape of WVM. As result, accessing the original tail of VM becomes efficient, since we only need to access one word in the WVM; accessing the original word $i$ requires us to access the new tail, which costs $O(w)$ accesses to the WVM. Since $i$ is randomly chosen, the expected cost of any given access operation is low. This idea is formalized as the following lemma.

\begin{restatable}[Random swap]{lemma}{randomswap}
  \label{lm:random_swap}
  We can store a word-dominant VM containing $M$ bits ($2w < M < Nw$) into a WVM with the same total number of bits $w \numword + \numextbit = M$, while any given operation to the VM can be transformed into several operations to the WVM within constant expected additional computation time. Moreover:
  \begin{enumerate}[label=\textup{(\alph*)}]
  \item\label{enum:access_of_WVM} An access to the VM (either a complete word or the tail) is transformed into $1 + O(\min(1, w^2/M))$ accesses to the WVM in expectation.
  \item\label{enum:allocation_of_WVM} Allocating/releasing a word or a bit in the VM can be transformed into $O(1)$ allocation/release operations followed by $O(1)$ accesses to the WVM (both in expectation).
  \item\label{enum:no_extra_space} The transformation from VM to WVM does not introduce additional space usage, except that we need to access an $O(\log^2 N)$-bit random seed which can be shared between multiple instances. $\numword$ and $\numextbit$ are fully determined by $M$ and the seed.
  \end{enumerate}
\end{restatable}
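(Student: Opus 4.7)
The plan is to store the VM in the WVM via a single \emph{random swap}: set $\numword \defeq \lfloor M/w \rfloor$ and $\numextbit \defeq M - \numword w < w$ (so $\numword$ and $\numextbit$ are determined by $M$ alone as part~(c) requires), use the seed to sample a word index $i = i(M) \in [\numword]$, and physically exchange the first $\numextbit$ bits of VM word $i$ with the $\numextbit$-bit VM tail before laying the result into the WVM. Concretely, WVM word $j$ stores the original VM word $j$ for every $j \neq i$; WVM word $i$ stores the VM tail concatenated with the suffix of VM word $i$; and the bit tape stores the displaced prefix of VM word $i$. The total bit count is exactly $\numword w + \numextbit = M$, so the mapping introduces no extra space.

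For part~(a), each VM access translates to a small number of WVM accesses depending on which part is touched. Reading or writing any VM word $j \neq i$ is a single WVM word-access; reading or writing the VM tail is also a single WVM word-access (extract the first $\numextbit$ bits of WVM word $i$). Only an access to VM word $i$ itself is expensive, needing one word-access for the suffix plus $\numextbit$ bit-accesses for the prefix, hence $O(w)$ WVM accesses. Since $i(M)$ will be uniform over $[\numword]$, a fixed target VM word-address equals $i$ with probability $1/\numword$, giving expected cost $1 + O(w/\numword) = 1 + O(w^2/M)$; taking the minimum with the trivial $O(1)$ bound for small $M$ yields the claimed $1 + O(\min(1, w^2/M))$.

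For part~(b), the plan is to sample $i(M)$ via consistent hashing: draw a hash function $h$ from an $O(\log N)$-wise independent family on the universe of possible word-addresses, and set $i(M) \defeq \argmin_{j \in [\numword]} h(j)$. Such a family has seed length $O(\log^2 N)$, matching~(c). A standard consistent-hashing argument shows that incrementing or decrementing $\numword$ by one changes the argmin with probability at most $O(1/\numword)$. Whenever $i$ stays the same, a word allocation/release translates to a single WVM word allocation/release plus $O(1)$ accesses for tidying up, and a bit allocation/release that stays within one word is just moving one bit between the suffix of WVM word $i$ and the bit tape, again $O(1)$ WVM accesses. Whenever $i$ changes to some $i' \ne i$, we undo the old swap (reading the bit tape and WVM word $i$ to reconstruct the original VM word $i$) and install the new swap at index $i'$, costing $O(\numextbit)$ bit-accesses on top of $O(1)$ word-accesses. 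Averaging gives expected additional cost $O(\numextbit/\numword) = O(w^2/M) = O(1)$ in the intended parameter regime; boundary operations where a bit allocation promotes the tail to a new complete word are handled by combining the two analyses.

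The main obstacle will be the allocation/release case analysis---tracking every combination of (word or bit operation), ($i' = i$ or not), and the boundary cases where changing a single bit also changes $\numword$---and verifying in each case that the layout invariants are restored while the amortized bookkeeping stays $O(1)$ both in WVM allocations and in WVM accesses. The remaining ingredients, namely constant-time evaluation of $i(M)$ from the seed and verifying that $O(\log N)$-wise independence of $h$ suffices for the consistent-hashing step, are essentially off-the-shelf.
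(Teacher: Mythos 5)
Your proposal has two genuine gaps, both concerning part~(b) and the ``constant expected additional computation time'' claim.

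The more fundamental gap: when a 1-bit allocation takes the tail length from $T = w-1$ to $T = 0$ (incrementing $\numword$ by one), your layout $\numextbit = T$ forces the bit tape to shrink from $w-1$ bits to $0$, a \emph{deterministic} sequence of $\Theta(w)$ bit-tape releases and accesses. This happens at every such boundary, so it cannot be averaged away; ``combining the two analyses'' does not remove a spike that occurs with probability $1$. The paper's algorithm avoids this by introducing a third regime, the \emph{Short-Tail Case}: whenever $T$ is below a random threshold $a \in [0, w-2]$ (and $L$ is not too small), a \emph{full extra word} at a random position is also placed on the bit tape, giving $\numextbit = T + w$ --- which is legal precisely because the WVM model allows $\numextbit < 2w$. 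With that extra word present, the boundary step $T = w-1 \to 0$ merely takes the bit tape from $w-1$ to $w$ bits, a single allocation; the expensive $\sim w$-bit drop in $\numextbit$ now only occurs when $T$ crosses the random threshold $a$, which has probability $O(1/w)$ per operation. Without an analogue of this ``extra word on the bit tape'' mechanism, part~(b) cannot be met.

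The second gap: evaluating $i(M) = \argmin_{j \in [\numword]} h(j)$ takes $\Theta(\numword)$ hash evaluations, not $O(1)$, contradicting the lemma's constant-time computation requirement. The paper avoids argmins entirely: the $O(\log^2 N)$-bit seed directly stores, for each dyadic scale $k$, a uniformly random offset $s_k \in [1, L_k]$ together with a random threshold $L_k \in [2^k, 2^{k+1})$; the swap index is $s_k$ for the unique $k$ with $L_k < L \le L_{k+1}$, which is an $O(1)$-time lookup, and the index changes only when $L$ crosses $L_{k+1}$, i.e.\ with probability $O(1/L)$. This gives the same jump-probability bound as consistent hashing while keeping seed size $O(\log^2 N)$ and evaluation time constant.
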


This lemma shows that word-dominant VMs can be ``simulated'' by WVMs without much loss of efficiency. Its proof is deferred to \cref{sec:randomswap}.

\subsection{Inductive Construction of Adapter Tree}
\label{sec:induction_ada}

In this subsection, we will construct the core part of our dictionary -- an adapter tree storing the set of keys in a level-1 bucket -- by inductively aggregating the sub-VMs of children. Formally, we will prove the following statement inductively:

\begin{statement}
  \label{statement:induction}
  Fix a level-$\l$ node $u$ and denote the number of keys hashed to $u$ by $\numkeyu$. Fix parameter $r = \nodewidth{1}$.
  Assuming free access to $\numkeyu$, there is a data structure maintain the set of keys hashed to $u$ with spill universe $[\spilluni{\l}{u}]$ and $\nummembits{\l}{u}$ memory bits, such that 
  \begin{itemize}
  \item $r < \spilluni{\l}{u} \le 2 r$.
  \item $\log \numinstance{\l}{u} - 1 < \nummembits{\l}{u} + \log \spilluni{\l}{u} \,\le\, \log \numinstance{\l}{u} + \frac{\nodewidth{\l} - 1}{r}$.
  \end{itemize}
  Moreover, the $\nummembits{\l}{u}$ memory bits are stored in a VM.
\end{statement}

We assume without loss of generality that the lower bounds $\nummembits{\l}{u} + \log \spilluni{\l}{u} > \log \numinstance{\l}{u} - 1$ and $\spilluni{\l}{u} > r$ hold, because otherwise we can pad zeros to the end of the memory until $\nummembits{\l}{u} + \log \spilluni{\l}{u} > \log \numinstance{\l}{u} - 1$, or repeatedly include the last memory bit into the spillover\footnote{We would never run out of memory bits as $\log 2r \ll \log \numinstance{\l}{u} \approx \nummembits{\l}{u} + \log \spilluni{\l}{u}$.} until $r < \spilluni{\l}{u} \le 2r$. Hence, we will not prove these two inequalities in the following inductive proof.

\begin{proofof}{\cref{statement:induction}}
  We prove this by induction. The proof is similar to \cite{li2023dynamic} and \cite{patrascu2008succincter}, except that the WVMs and the random swap lemma will involve when we aggregate sub-VMs.

  \paragraph*{Base case.} When $\l = 8h$, for each level-$8h$ node $u$, there are $\numkeyu$ keys hashed to it, each is represented as an ID-quotient pair $(\yid, \xquot)$. By regarding the ID as ``key'' and quotient as ``value'', we use a minimap to maintain it:
  The set of IDs can be encoded into
  \[
    \numkeyu \cdot \idleni[8h]
    \,\le\,
    \nodewidthub{8h} \cdot \idleni[1]
    \,=\,
    \Theta(B \cdot \log \nodewidth{1})
    \,=\,
    \Theta(\sqrt{\log N} \cdot h \cdot \log \log N)
    \,\ll\, O(w)
  \]
  bits. (Recall that $\idleni[1] = \ceil{4 \log \nodewidth{1}}$ and $\nodewidth{1} \le 2 B^{8h}$; the last inequality holds because $h \le \log^{0.3}N$.)
  Applying \cref{lm:minimap_lem} with key-universe $[2^{\idleni[8h]}]$, value-universe $[\quotientuni]$ and the same $r$ as here, we get a minimap which maintains the set of keys hashed to $u$, with query time $O(1)$, insertion/deletion time $O(\log \numkeyu) = O(\log \log N)$ (here $\numkeyu \le \nodewidthub{8h} \le 2B$), spill universe $K \le 2r$, and redundancy at most $\frac{4}{r} \le \frac{\nodewidth{\l} - 1}{r}$ bits.

  \paragraph*{Induction step.} Assume the induction hypothesis holds for level $\l + 1$ and we are going to prove it for level $\l$. For any level-$\l$ node $u$, denote its $B$ children by $v_1, v_2, \ldots, v_B$; denote by $\numkeyvi$ the number of keys hashed to $v_i$. By the induction hypothesis, these $\numkeyvi$ keys can be represented by $\nummembits{\l+1}{v_i}$ memory bits and a spillover $\spillvi \in [\spilluni{\l+1}{v_i}]$, with these $\nummembits{\l+1}{v_i}$ memory bits stored in a sub-VM from $v_i$. Below, we will introduce our encoding procedure step by step.

  \paragraph*{Step 1: Random swap.} Guided by the intuition introduced in \cref{sec:wvm}, the first step is to transform the sub-VM from each $v_i$ into a WVM. To apply \cref{lm:random_swap}, we only need to check that these sub-VMs are word-dominant. Recalling that $w + 5 \le \log \frac{\keyunisize{8h}}{\nodewidth{8h}} < w + 6$, by \cref{clm:feature_size}, we have
  \begin{align*}
    \numkeyvi w + 4\numkeyvi\le \log \numinstance{\l+1}{v_i} \le \numkeyvi w + 9\numkeyvi.
  \end{align*}
  Hence by the induction hypothesis, we have
  \begin{align*}
    \nummembits{\l+1}{v_i}
    \le{}& \nummembits{\l+1}{v_i} + \log \spilluni{\l+1}{v_i}\\
    \le{}& \log \numinstance{\l+1}{v_i} + \frac{\nodewidth{\l+1} - 1}{r}
          \le{} \numkeyvi w + 10 \numkeyvi
  \end{align*}
  (the last inequality holds because $\nodewidth{\l+1} - 1 < \nodewidth{1} = r < \numkeyvi r$).
  On the other hand, by induction hypothesis, we have $\nummembits{\l + 1}{v_i} + \log \spilluni{\l + 1}{v_i} \ge \log \numinstance{\l + 1}{v_i} - 1$.
  Hence,
  \begin{align*}
    \nummembits{\l+1}{v_i}
    \ge{}& \log \numinstance{\l + 1}{v_i} - \log \spilluni{\l + 1}{v_i} - 1 \\
    \ge{}& \numkeyvi w + 4\numkeyvi - \log (2r) - 1
          \ge{} \numkeyvi w + 3\numkeyvi
  \end{align*}
  (the last inequality holds because $\log 2r + 1 = \Theta(h \log \log N) \ll B/2 \le \nodewidthlb{8h} \le \numkeyvi$, as $h \le \log^{0.3} N$ and $B \in [\sqrt{\log N}, \, 2\sqrt{\log N})$).
  Therefore, the sub-VM for each $v_i$ is word-dominant (see \cref{def:word_dominant}), and by \cref{lm:random_swap}, it can be stored into a WVM with little performance loss.

  \paragraph*{Step 2: Adapters.} The WVM from each $v_i$, according to its definition, is composed of a tape of $\numwordi$ words (denoted by $\completewordi$), and a tape of $\numextbiti$ extra bits (denoted by $\extrabitsi$). We use two adapters to aggregate all of them together.

  The first adapter aggregates all the extra bits $\bigBK{\extrabitsi : 1 \le i \le B}$. Although in \cref{sec:adapter_sub} we only considered how to aggregate sequences of complete words, by treating every bit as ``a word with word-length 1'', we can easily construct a \emph{bit-wise} adapter that aggregates sequences of bits. The output of this adapter is a sequence of $\sum_{i=1}^B \numextbit^{[i]}$ bits, which we store into a VM called the \emph{bit VM}. We denote the bit VM's complete-word part and tail part as $\completewordi[0]$ and $\extrabitsi[0]$, respectively.

  The second adapter aggregates all the complete words $\bigBK{\completewordi: 0 \le i \le B}$, including the complete words produced by the first adapter. This results in a VM with $\sum_{i=1}^B \numwordi + \bigfloor{\bigbk{\sum_{i=1}^B \numextbit} \big/ w}$ words, which we call the \emph{word VM}.

  Finally, we directly concatenate the word VM with the tail part of the bit VM, getting a VM with totally $\Mcat \defeq \sum_{i=1}^B \bigbk{w\numwordi + \numextbiti} = \sum_{i=1}^B \nummembits{\l+1}{v_i}$ bits, which we call the \emph{concatenated memory}, namely, $\mcat$.

  \paragraph*{Step 3: Cut the memory.} We cut the concatenated memory $\mcat$ into two parts $\mfix$ and $\mrem$, such that the first part has $\Mfix$ bits which only depends on $\numkeyu$ but \emph{not} $\numkeyvi$ for each $i$; the second part has at most $O(w)$ bits. Formally, we define
  \[
    \Mmax \defeq \max_{\numkeyvi[1] + \numkeyvi[2] + \cdots + \numkeyvi[B] = \numkeyu} \sum_{i=1}^B \nummembits{\l+1}{v_i},
    \qquad
    \Mfix \defeq \Mmax - w.
  \]
  (There is always $\Mfix \ge 0$ because $\Mmax > \nummembits{\l + 1}{v_i} = \Theta\bigbk{\log \binom{\keyunisize{\l + 1}}{\nodewidth{\l + 1}}} - \Theta(\log r) \ge \Theta(Bw) \gg w$.)
  We divide $\mcat$ into the leftmost $\Mfix$ bits and the remaining $\Mrem \defeq \Mcat - \Mfix$ bits. It is possible that $\Mcat < \Mfix$ for the current number of keys $\numkeyvi$, in which case $\mfix$ is formed by padding zeros to the end of $\mcat$ until it has $\Mfix$ bits; $\mrem$ is left empty. In all cases, the second part contains at most $w$ bits.

  After cutting the memory into two parts, $\mfix$ directly appears as the leftmost bits in our final encoding, while $\mrem$ is further compressed with other information in the next step.

  \paragraph{Step 4: Compress the labels and spillovers.} Next, we compress the children's numbers of keys $\numkeyvec \defeq \bk{\numkeyvi[1], \numkeyvi[2], \ldots, \numkeyvi[B]}$, their spillovers $\bk{\spillvi[1], \spillvi[2], \ldots, \spillvi[B]}$, and the remaining part $\mrem$ from the last step together, using the following lemma from~\cite{patrascu2008succincter}.

  \begin{lemma}[\cite{patrascu2008succincter}]
    \label{lm:encode_spillover}
    Assume we have to represent a variable $x \in \mathcal{X}$, and a pair $(\ym, \yk) \in \BK{0,1}^{M(x)} \times [K(x)]$. Let $p(x)$ be a probability density function on $\mathcal{X}$, and $K(\cdot)$, $M(\cdot)$ be non-negative functions on $\mathcal{X}$ satisfying:
    \begin{align*}
      \label{eq:pat_lm5_condition}
      \forall x \in \mathcal{X} :\quad \log \frac{1}{p(x)} + M(x) + \log K(x) \le H. \numberthis
    \end{align*}
    We further assume the word size $w = \Omega(\log |\mathcal{X}| + \log r + \log \max K(x))$, then we can design a spillover representation for $x$, $\ym$, and $\yk$, denoted by $(m^*, k^*)\in \{0,1\}^{M^*}\times [K^*]$, with the following parameters:
    \begin{itemize}
    \item The spill universe is $K^*$ with $K^* \le 2r$; the memory size is $M^*$ bits.
    \item The redundancy is at most $4/r$ bits, i.e., $M^* + \log K^* \le H + 4/r$.
    \item Given a precomputed table of $O(|\X| \cdot r \cdot \max K(x))$ words that only depends on the input functions $K, M$, and $p$, and assuming $H \le O(w)$, both decoding $(x, \ym, \yk)$ from $(m^*, k^*)$ and encoding $(x, \ym, \yk)$ to $(m^*, k^*)$ takes $O(1)$ time on a word RAM. The table can be precomputed in linear time.
    \end{itemize}
  \end{lemma}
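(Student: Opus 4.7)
My natural approach would be arithmetic coding against the weighted capacities. For each $x \in \mathcal{X}$, set $c(x) \defeq K(x) \cdot 2^{M(x)}$, the number of distinct encodings needed to represent all $(\ym, \yk)$ when the outer variable equals $x$. The hypothesis \eqref{eq:pat_lm5_condition} rewrites as $c(x) \le 2^H \cdot p(x)$, so $T \defeq \sum_{x \in \mathcal{X}} c(x) \le 2^H$. Fix any total ordering on $\mathcal{X}$ and let $L(x) \defeq \sum_{x' < x} c(x')$; the map $(x, \ym, \yk) \mapsto i \defeq L(x) + \ym \cdot K(x) + \yk$ is a bijection onto $[0, T)$.

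The second step splits the integer $i$ into a memory/spillover pair. Choose $M^* \defeq \max\bigbk{0, \lceil \log T \rceil - \lceil \log(2r) \rceil}$ and $K^* \defeq \lceil T / 2^{M^*} \rceil$, and decompose $i = m^* K^* + k^*$ with $k^* \in [K^*]$, $m^* \in [2^{M^*}]$. A direct calculation gives $K^* \le 2r$ and $K^* \cdot 2^{M^*} \le T (1 + 1/r)$, so $M^* + \log K^* \le \log T + \log(1 + 1/r) \le H + 2/r$, comfortably within the claimed $4/r$ redundancy once one absorbs the slack from the ceilings.

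The remaining task is to realize constant-time encoding and decoding under the stated table-size budget. Encoding requires only the precomputed array $\midBK{L(x)}_{x \in \mathcal{X}}$, after which the output is a few arithmetic operations on $O(w)$-bit integers (permitted since $H \le O(w)$). Decoding is the delicate part: given $(m^*, k^*)$, one reconstructs $i = m^* K^* + k^*$ and must identify the unique $x$ with $L(x) \le i < L(x) + c(x)$. For each fixed $k^* \in [K^*]$ the function $m^* \mapsto x$ is piecewise constant with breakpoints located at $(L(x) - k^*)/K^*$, and I would tabulate for each pair $(k^*, x)$ the range of $m^*$ it owns; combining this with a coarse direct-lookup table indexed by $(k^*, \lfloor m^* / B \rfloor)$ for block length $B = \Theta(\max_x K(x))$ brings the total size to $O(|\mathcal{X}| \cdot r \cdot \max K(x))$ words while admitting $O(1)$ probes per decoding. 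I expect this implementation step to be the main obstacle: verifying constant-time decoding within the stated table budget requires a careful chunking argument. Everything else (the choice of $M^*, K^*$, the redundancy bound, and the arithmetic coding itself) is essentially routine given the entropy hypothesis.
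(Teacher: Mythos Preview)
The paper does not prove this lemma itself; it is imported verbatim from P\v{a}tra\c{s}cu's \emph{Succincter} paper, so there is no in-paper proof to compare against. Your encoding via cumulative capacities $c(x)=K(x)2^{M(x)}$ and the subsequent split of $i$ into $(m^*,k^*)$ is exactly the standard arithmetic-coding route used there, and your redundancy calculation is correct (modulo the minor fact that your specific choice of $M^*$ does not literally give $K^*\le 2r$; taking $M^*=\max(0,\lceil\log(T/r)\rceil)$ fixes this immediately).

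The genuine gap is in your decoding table. Your direct-lookup array indexed by $(k^*,\lfloor m^*/B\rfloor)$ with $B=\Theta(\max_x K(x))$ has $K^*\cdot 2^{M^*}/B=\Theta\bigl(r\cdot 2^{M^*}/\max_x K(x)\bigr)$ cells. For this to fit in $O(|\mathcal{X}|\cdot r\cdot\max_x K(x))$ words you would need $2^{M^*}\le |\mathcal{X}|\cdot(\max_x K(x))^2$, which the hypotheses do not give: $2^{M^*}$ is essentially $2^H/r$ and can be $2^{\Theta(w)}$, while $|\mathcal{X}|\cdot(\max_x K(x))^2$ can be much smaller (and indeed is, in this paper's own application where $|\mathcal{X}|\approx n(u)^B$, $\max K\approx (2r)^B$, and $B\log r\ll w$). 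So the table you describe can be exponentially too large.

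The fix used in the original construction is to avoid decoding $x$ by a predecessor search over all of $[0,2^{M^*})$. Instead one arranges the encoding so that the $M(x)$ bits of $y_M$ sit verbatim as a suffix of $m^*$, and only the short header $(k^*,\text{first }M^*-M(x)\text{ bits of }m^*)$ determines $(x,y_K)$. Concretely, for each $k^*\in[K^*]$ one packs the $(x,y_K)$ pairs into that column of $\{0,1\}^{M^*}$ as contiguous blocks of height $2^{M(x)}$, aligned to power-of-two boundaries so that a bounded number of leading bits of $m^*$ (together with $k^*$) suffice to identify the block; a per-column table of size $O(\sum_x K(x))\le O(|\mathcal{X}|\max_x K(x))$ then supports $O(1)$ lookup, and summing over the $K^*\le 2r$ columns gives the stated $O(|\mathcal{X}|\cdot r\cdot\max_x K(x))$ bound. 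Your flat arithmetic index $i=L(x)+y_M K(x)+y_K$ destroys exactly this alignment, which is why your chunking cannot recover $O(1)$ decoding within the budget.
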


  Let $\kcat \in \bigBk{\prod_{i=1}^B \spilluni{\l+1}{v_i}}$ be the combination of the children's spillovers $\bk{\spillvi[1], \spillvi[2], \ldots, \spillvi[B]}$. We are going to apply the above lemma on $\X = \midBK{\numkeyvec : \sum_{i=1}^B \numkeyvi = \numkeyu}$ and $(\ym, \yk) = (\mrem, \kcat)$.

  To construct the probability distribution, we first define
  \[
    p(\numkeyvec) \defeq \frac{\prod_{i=1}^B \numinstance{\l+1}{v_i}}{\numinstance{\l}{u}},
  \]
  that is, the induced marginal distribution on $\numkeyvec$ when the set of keys hashed to $u$ follows the uniform distribution over all $\numkeyu$-element subsets of the universe. Same as \cite{patrascu2008succincter}, we slightly perturb the distribution for stronger properties.
  
  \begin{claim}[\cite{patrascu2008succincter}]
    \label{clm:perturb}
    For any probability distribution $p(\cdot)$ over set $\X$ and any parameter $r > 0$, we can perturb $p(\cdot)$ to another probability distribution $p'(\cdot)$, such that for any $x \in \X$,
    \begin{itemize}
    \item $p'(x) \ge \frac{1}{2r|\X|}$.
    \item $\log \frac{1}{p'(x)} \le \log \frac{1}{p(x)} + \frac{2}{r}$.
    \end{itemize}
  \end{claim}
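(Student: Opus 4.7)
The plan is to construct $p'$ as a convex combination of the original distribution $p$ and the uniform distribution over $\X$. Concretely, I would set
\[
  p'(x) \defeq (1 - \beta) p(x) + \frac{\beta}{|\X|},
\]
for some carefully chosen parameter $\beta \in (0, 1)$ depending only on $r$. This is clearly a probability distribution, and it already puts a lower bound of $\beta/|\X|$ on every point while staying pointwise close to $p$.

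The first step is to verify the lower bound. We have $p'(x) \ge \beta/|\X|$ for every $x$, so choosing $\beta \ge 1/(2r)$ immediately yields $p'(x) \ge 1/(2r|\X|)$. The natural choice is $\beta = 1/(2r)$, assuming $r$ is at least a small absolute constant (for $r < 1/2$ the stated lower bound already exceeds $1/|\X|$ and the claim is vacuous, so we may assume $r$ is large enough).

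The second step is to bound the blow-up in $\log(1/p'(x))$. From $p'(x) \ge (1-\beta) p(x)$, we get
\[
  \log \frac{1}{p'(x)} \;\le\; \log \frac{1}{p(x)} + \log \frac{1}{1 - \beta}.
\]
It therefore suffices to show $\log \frac{1}{1 - 1/(2r)} \le 2/r$. I would use the elementary estimate $-\ln(1 - x) \le x/(1 - x)$ for $x \in [0, 1)$, giving
\[
  \log \frac{2r}{2r - 1} \;=\; -\frac{\ln(1 - 1/(2r))}{\ln 2} \;\le\; \frac{1}{(2r - 1) \ln 2},
\]
which is comfortably bounded by $2/r$ once $r$ is at least a small constant. (For $r$ close to $1/2$, a slightly smaller $\beta$ can be chosen; since the claim is interesting only in the regime $r \gg 1$, this edge is not important.)

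The main (and essentially only) obstacle is choosing $\beta$ to make both bullets simultaneously tight to the stated constants; once $\beta = 1/(2r)$ is fixed, both conditions reduce to one-line calculations using standard logarithm estimates. No deeper combinatorial or algorithmic ideas are required, which is consistent with this being a utility lemma borrowed from \cite{patrascu2008succincter}.
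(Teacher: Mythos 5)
Your construction is correct and is essentially the same mixture-with-uniform argument used in~\cite{patrascu2008succincter}: set $p'(x)=(1-\beta)p(x)+\beta/|\X|$ with $\beta=\Theta(1/r)$, then read off the lower bound from the uniform part and the log-blow-up bound from the factor $(1-\beta)$. Your verification $\log\frac{1}{1-1/(2r)}\le\frac{1}{(2r-1)\ln 2}\le 2/r$ holds for all $r\ge 1$, which covers every use of the claim in the paper (it is invoked with $r=\nodewidth{1}\gg 1$). One small note on the edge cases: for $r<1/2$ the claim is not merely vacuous but actually false (summing $p'(x)\ge\frac{1}{2r|\X|}>\frac{1}{|\X|}$ over $\X$ gives a contradiction), and for $r$ slightly above $1/2$ one can check that no choice of $\beta$ reconciles both bullets when $|\X|$ is large; so the claim is implicitly a statement for $r$ bounded away from $1/2$, which your parenthetical already flags and which is harmless here.
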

  
  We then apply \cref{lm:encode_spillover} with the perturbed distribution $p'$ and
  \[H \defeq \log \numinstance{\l}{u} - \Mfix + \frac{\nodewidth{\l} - 5}{r}.\]
  We check the condition \eqref{eq:pat_lm5_condition} by discussing two cases:
  \begin{itemize}
  \item Suppose $\Mcat \ge \Mfix$, i.e., we did not pad zeros to $\Mcat$. In this case, $|\mrem| = \Mcat - \Mfix$. The induction hypothesis implies that
    \[
      \nummembits{\l+1}{v_i} + \log \spilluni{\l+1}{v_i} \le \log \numinstance{\l+1}{v_i} + \frac{\nodewidth{\l+1} - 1}{r}.
    \]
    Therefore, for any $\numkeyvec \in \X$, the LHS of \eqref{eq:pat_lm5_condition} is
    \begin{align*}
      & \phantom{{}\le{}} \log \frac{1}{p'(\numkeyvec)} + (\Mcat(\numkeyvec) - \Mfix) + \log \bk{\prod_{i=1}^B \spilluni{\l+1}{v_i}} \\
      & \le
        \bk{\log \frac{1}{p(\numkeyvec)} + \frac{2}{r}} + \sum_{i=1}^B \nummembits{\l+1}{v_i} - \Mfix + \sum_{i=1}^B \log \spilluni{\l+1}{v_i} \\
      & \le
        \log {\frac{\numinstance{\l}{u}}{\prod_{i=1}^B \numinstance{\l+1}{v_i}}} + 
        \sum_{i=1}^B \bk{\nummembits{\l+1}{v_i} + \log \spilluni{\l+1}{v_i}}  + \frac{2}{r} - \Mfix \\
      & \le
        \log \numinstance{\l}{u} + B \cdot \frac{\nodewidth{\l+1} - 1}{r} + \frac{2}{r} - \Mfix \\
      & =
        \log \numinstance{\l}{u} + \frac{\nodewidth{\l} - B + 2}{r} - \Mfix \\
      & < H.
    \end{align*}
  \item Suppose $\Mcat < \Mfix$. In this case, $\mrem$ is empty, so the LHS of \eqref{eq:pat_lm5_condition} equals
    \begin{align*}
      & \phantom{{} \le {}}
        \log \frac{1}{p'(\numkeyvec)} + \log \bk{\prod_{i=1}^B \spilluni{\l+1}{v_i}} \\
      & \le \log (2r |\X|) + \log (2r)^B = \Theta(B \log r) \ll w,
    \end{align*}
    where the first inequality is due to \cref{clm:perturb} and the induction hypothesis $\spilluni{\l+1}{v_i} \le 2r$; the second inequality holds as $|\X| \le \numkeyu^B \le (2\nodewidth{1})^B = (2r)^B$; the last inequality holds as $B \log r = \Theta(\sqrt{\log N} \cdot h \log \log N) \ll w$, since $B\le 2\sqrt{\log N}$ and $h \le \log^{0.3} N$. On the other side, the RHS of \eqref{eq:pat_lm5_condition} is
    \begin{align*}
      H &\ge \max_{\numkeyvec \in \X} \log \bk{\prod_{i=1}^B \numinstance{\l+1}{v_i}} - \Mfix + \frac{\nodewidth{\l} - 5}{r} \\
        &\ge \max_{\numkeyvec \in \X} \bk{\Mcat(\numkeyvec) - B \cdot \frac{\nodewidth{\l+1} - 1}{r}} - \Mfix + \frac{\nodewidth{\l} - 5}{r} \\
        &= \Mmax - \Mfix + \frac{B - 5}{r} \;>\; w \;\ge\; \textup{LHS}.
    \end{align*}
    Here, the second inequality holds due to the redundancy constraint in the induction hypothesis; the third inequality is because $\Mfix = \Mmax - w$. Therefore \eqref{eq:pat_lm5_condition} also holds in this case.
  \end{itemize}
  Applying \cref{lm:encode_spillover} gives us a spillover representation $(m^*, k^*) \in \midBK{0, 1}^{M^*} \times [K^*]$ of $(\numkeyvec, \kcat, \mrem)$ conditioning on $\numkeyu$, where $K^* \le 2r$ and
  \[
    M^* + \log K^* \le H + \frac{4}{r} = \log \numinstance{\l}{u} + \frac{\nodewidth{\l} - 1}{r} - \Mfix.
  \]
  With the help of proper lookup tables, both the encoding and decoding procedures can be completed within constant time.
  
  \paragraph{Step 5: Concatenate.} The last step involves concatenating $\mfix$ with $m^*$, the outcome memory bits obtained from the previous compression step, to form a bit string of length $\Mfix + M^* \eqdef \nummembits{\l}{u}$. This memory string, combined with the spillover $k^* \in [K^*] = [\spilluni{\l}{u}]$, form the encoding for the set of keys hashed to level-$\l$ node $u$. The induction hypothesis holds for $\l$, since $K^*\leq 2r$ and
  \[
    \nummembits{\l}{u} + \log \spilluni{\l}{u} = \Mfix + M^* + \log K^* \le \log \numinstance{\l}{u} + \frac{\nodewidth{\l} - 1}{r}.
  \]
  Finally, the $\nummembits{\l}{u}$ memory bits are again divided into $\floor{\nummembits{\l}{u} / w}$ words and a tail, stored within a VM.
\end{proofof}
  
\begin{remark}
  Observe that the primary part, $\mfix$, which comes from the adapter, is ``aligned'' with the resulting VM. This means that a complete word in a child's sub-VM is still stored as a complete word in the new VM. This alignment benefits our query and update algorithms because each word-access from a child will translate into only one word-access of $u$'s virtual memory.
\end{remark}

Now, consider Statement~\ref{statement:induction} for level-1 node $u$. We directly encode the final spillover $k \in [\spilluni{1}{u}]$ into memory, incurring a 1-bit redundancy due to rounding. Then, we can store the set of keys hashed to a level-1 bucket within a space of
\begin{align*}
  \nummembits{1}{u} + \log \spilluni{1}{u} + 1 \le \log \numinstance{1}{u} + \frac{\nodewidth{1} - 1}{r} + 1 < \log \numinstance{1}{u} + 2.
\end{align*}

\subsection{Construction of the whole Dictionary}

The whole dictionary consists of the following parts:
\begin{itemize}
\item A fixed space of $\log \binom{\keyunisize{1}}{\nodewidthub{1}}+2$ bits for each level-1 node. As we have seen, the set of keys hashed to a level-1 bucket $u$ can be stored in $\log \numinstance{1}{u} + 2 \le \log \binom{\keyunisize{1}}{\nodewidthub{1}}+2$ bits, assuming free access to the number of keys $\numkeyu$.
\item A space of $\log \nodewidthub{1}$ bits for each level-1 node to store the number of keys hashed to it.
\item Several lookup tables used in the induction processes, which will be analyzed later.
\item Random seed used in the random swap lemma (\cref{lm:random_swap}), which is shared between nodes on the same level, while the seeds for different levels are chosen independently. They only occupy a very limited space of $O(h \cdot \log^2 N)$ bits.
\item The space for storing hash functions, which is $o(R)$ bits. The analysis is deferred to \cref{sec:hashfunc_for_adapter}.
\end{itemize}
As mentioned in \cref{sec:intro}, every part of the memory only depends on the current set of keys and some random bits, which means that our dictionary is strongly history-independent. Below, we calculate the space usage of each part.

\smallskip

The first part occupies the most of the space. Recalling that $R \defeq N / \log^h N = \Omega\bigbk{N \big/ \nodewidth{1}^{1/4}} \gg (N \log N) \big/ \nodewidth{1}^{1/3}$, we have
\begin{align*}
  \frac{N}{\nodewidth{1}} \log \binom{\keyunisize{1}}{\nodewidthub{1}}
  &\le \log \binom{U}{N \cdot \nodewidthub{1} / \nodewidth{1}}
    \le \log \binom{U}{N \cdot \bigbk{1 + 2 \cdot \nodewidth{1}^{-1/3}}}
    \le \bigbk{1 + 2 \cdot \nodewidth{1}^{-1/3}} \log \binom{U}{N} \\
  &= \log \binom{U}{N} + \Theta\bigbk{(N \log U) \big/ \nodewidth{1}^{1/3}}
    \le \log \binom{U}{N} + O(R),
\end{align*}
where the third inequality comes from the log concavity of $\binom{n}{k}$ as a function of $k$. Therefore, the total space usage of the first part does not exceed $\frac{N}{\nodewidth{1}} \log \binom{\keyunisize{1}}{\nodewidthub{1}} + \frac{2N}{\nodewidth{1}} \le \log \binom{U}{N} + O(R)$ bits.

The space usage of the second and the fourth parts do not exceed $\frac{N}{\nodewidth{1}} \cdot \log N = o(R)$ bits and $O(\log^3 N) \ll o(R)$ bits, respectively. The hash functions (the last part) only occupy $o(R)$ bits as well.

It remains to check the sizes of the lookup tables, which are listed below. All the lookup tables occupy at most $O\bigbk{\sqrt{N}}$ words which is far less than $O(R)$ bits.
\begin{itemize}
\item lookup table for minimaps. We used minimaps to maintain at most $n' \defeq \nodewidthub{8h} = \Theta\bigbk{\sqrt{\log N}}$ key-value pairs from a key-universe $U' \defeq 2^{\idleni[8h]} = \poly \log N$.
  According to \cref{lm:minimap_lem}, the lookup table size is
  $U' n' \binom{U'}{n'} \cdot (2rn')^{n'} = (\poly \log N)^{\Theta(\sqrt{\log N})} \ll O(\sqrt{N})$
  words.

\item lookup table for adapters. According to \cref{lm:adapter}, the lookup table for adapters occupies $O\bigbk{B \Lmax^{B+1}}$ words, where $\Lmax$ is the maximum number of words in the aggregated super-VM. Plugging in $\Lmax = O(\nodewidth{1}) = \bk{\log N}^{O(\log^{0.3}N)}$ and $B = \Theta(\sqrt{\log N})$, we know the lookup table size is less than $O(\sqrt{N})$ words.

\item lookup table for encoding/decoding $\numkeyvec$ and $\kcat$. At the beginning of the step 4, we need to encode the number of keys in each child $\numkeyvec = \bk{\numkeyvi[1], \numkeyvi[2], \ldots, \numkeyvi[B]}$ using its index in the set $\bigBK{\numkeyvec: \sum_{i=1}^B \numkeyvi = \numkeyu}$. For a fixed $\l$ and $\numkeyu$, it needs a lookup table of $O\bigbk{\numkeyu^B}$ words. Hence, the total size is at most $O(h \nodewidthub{1}^{B+1})$ words.

  Moreover, we need to compute vectors $\bigbk{\numwordi}_{i \in [B]}$ and $\bigbk{\numextbiti}_{i \in [B]}$, the sizes of the tapes of WVMs from all children, which are required for adapter operations and are fully determined by $\numkeyvec$. This is done with a lookup table of the same size.

  We also need to encode all the spillovers $\spillvi[1], \spillvi[2], \ldots, \spillvi[B]$ into $\kcat \in \bigBk{\prod_{i=1}^B \spilluni{\l+1}{v_i}}$. For a fixed $\l$ and $\numkeyvec$, it needs a lookup table of $O\bigbk{(2r)^B}$ words, so the total size is at most $O\bigbk{h \nodewidthub{1}^{B} (2r)^B} = (\log N)^{O(\log^{0.3} N) \cdot \Theta(\sqrt{\log N})} \ll O(\sqrt{N})$ words.

\item lookup table for \cref{lm:encode_spillover}. For each level $\l$ and each possible number of keys $\numkeyu$ in a node $u$, we need a table of $O(|\X| \cdot r \cdot \max K(x))$ words, where $\X$ is the set formed by the array of number of keys in children of $u$, i.e., $\X \defeq \midBK{(\numkeyvi[1], \numkeyvi[2], \ldots, \numkeyvi[B]) : \sum_{i=1}^B \numkeyvi = \numkeyu}$, whose size is at most $O\bigbk{\nodewidthub{1}^B}$; we also have $\max K(x) \le (2r)^{B}$. Taking summation over all possible $n(u)$'s and all $h$ levels, the total size is $O\bigbk{h \cdot \nodewidthub{1} \cdot \nodewidthub{1}^{B} \cdot r \cdot (2r)^{B}} = (\log N)^{O(\log^{0.3} N) \cdot \Theta(\sqrt{\log N})} \ll O(\sqrt{N}) \ll O(\sqrt{N})$ words.

\item Tables for quickly computing each $\nummembits{\l}{u}$, $\spilluni{\l}{v_i}$, and $\Mfix = \Mfix\bk{\keyunisize{\l}, \numkeyu}$. They occupy $O(h\nodewidthub{1}) \ll O(\sqrt N)$ words.
\end{itemize}
All lookup tables mentioned above can be pre-computed efficiently and will not become the time bottleneck. Now, summing up the space usage of all parts, the data structure occupies a space of $\log \binom{U}{N} + O(R)$ bits.

\subsection{Query, Insertion, and Deletion}

All operations on the data structure begin by simulating the hashing process on the input key $x$. Specifically, we first apply the first hash function $h$ on $x$, obtaining a triple $(u, \, \xid, \, \xquot)$, which means $x$ is hashed to a root node $u$ with the ID-quotient pair $(\xid, \, \xquot)$. Then, by applying the second hash function, $x$ is further hashed to a (level-$8h$) leaf node $v$, with ID-quotient pair $(\yid, \, \xquot)$ in $v$. These nodes indicate a path from the root to the leaf, which we will walk along in all types of operations.

\paragraph*{Query algorithm and nested adapters.}

Recall that every node $u$ maintains a sub-data structure storing keys hashed into $u$'s subtree, which is stored in a VM and a spillover $k = k^*$. The VM consists of two parts: $\mfix$ and $m^*$, where $\mfix$ comes from adapters that concatenate sub-VMs from the children, and $m^*$ includes the information (spillovers and numbers of keys) of the children. The latter part $m^*$ has at most $O(w)$ bits and is contained in the rightmost $O(1)$ words in $u$'s VM. Throughout this section, we call $m^*$ the \emph{metadata region} of $u$'s VM.

Before we visit the node $u$, we need the knowledge of its spillover $k$ and number of keys $\numkeyu$. Then, we immediately read the metadata $m^*$ on $u$'s VM, which takes $O(1)$ word-accesses.
According to \cref{lm:encode_spillover}, we can decode $(k^*, m^*)$ in constant time to recover the number of keys in each child, $\numkeyvi[1], \numkeyvi[2], \ldots, \numkeyvi[B]$, their spillovers $k_1, k_2, \ldots, k_B$, and the rightmost bits of the concatenated memory, $\mrem$. Consequently, we get vectors $\bigbk{\numwordi}_{i \in [B]}$ and $\bigbk{\numextbiti}_{i \in [B]}$, the sizes of both tapes of WVMs from children, which are fully determined by $\numkeyvec$. Next, the query algorithm recurses into a child of $u$ along the path, and repeat the above process until it reaches the leaf. Finally, it makes a query to the minimap on the leaf to get the answer.

The above description has assumed direct access to the VM of any given node. However, only the VM of the root node is directly stored into the ``physical memory'', while the VM of other nodes are connected to it via several nested adapters (and random-swap structures (\cref{lm:random_swap}) which stores VMs into WVMs). To access a word in node $u$'s VM, we first translate it into access requests to the parent's VM, then to the grandparent, and continue until the root is reached.
Formally, we use the following subroutine (\cref{alg:probe_word}) to access a word in the VM of an arbitrary node. We denote by $(v, i)$ the $i$-th word in the VM of node $v$. As reading the tail of the VM has the same interface as reading a word, we do not distinguish them and regard the tail as the last word.

\begin{algorithm}[ht]
    \caption{Accessing Virtual Memory Words}
    \label{alg:probe_word}
    \DontPrintSemicolon

    \SetKwFunction{fprobe}{Access}
    \SetKwProg{Fn}{Function}{:}{}

    \Fn(\Comment{Access the $i$-th word in $v$'s VM}){\fprobe{$v$, $i$}} {
      \If{$v$ is the root} {
        Directly access the $i$-th word\;
        \Return
      }
      $u \gets v$'s parent\;
      Translate the access request $(v, i)$ into access requests to $v$'s WVM according to the random swap lemma (\cref{lm:random_swap})\;
      Translate each WVM access request into an access request to $u$'s \emph{concatenated memory} $\mcat = \mfix \concat \mrem$ via the adapter\;
      \ForEach{access request to the $j$-th word in $u$'s concatenated memory} {
        \If{any part of word $j$ belongs to $\mfix$} {
          \fprobe{$u, j$}\;
        }
        \If{any part of word $j$ belongs to $\mrem$} {
          Access the corresponding bits in $\mrem$ which has been decoded via \cref{lm:encode_spillover}\;
          \label{line:decode}
        }
      }
    }
\end{algorithm}

When we read a word $(v, i)$, \cref{line:decode} does not produce extra word-accesses, because we have already gained the knowledge of $\mrem$ when we visit $u$ (we always need to visit the parent $u$ before we can visit $v$). On the other hand, when we write to some word $(v, i)$ (which is not required for queries, but we will need it for updates), it might seem that \cref{line:decode} would need to write to $\mrem$ of $u$; what we actually do here is to delay the writing request to $\mrem$ until the end of the entire update operation, at which point we will update all $\mrem$ bottom-up along the path we visited.

Due to the random swap lemma (\cref{lm:random_swap}), accessing $(v,i)$ may cause more than 1 word-accesses to the VM on the parent of $u$. Specifically, assuming $v$ is in level $\l$, the expected number of word-accesses to its parent $u$ is at most
\[
  1 + O\bk{\min\bk{1, w^2/\nummembits{\l}{v}}} = 1 + O(\min(1, w / \numkey{v})) \le 1 + c\min\bk{1, B^{1-8h+\l}}
\]
for a fixed constant $c > 0$ (note that $\numkey{v} = \Theta(\nodewidth{\l}) = \Omega(B^{8h - \l + 1})$ and $w = \Theta(B^2)$).
As the random seeds for random swapping are independent for different levels, and
\[
  \prod_{\l = 1}^{8h} \bk{1 + c\min\bk{1, B^{1-8h+\l}}} \le (1 + c)^2 \cdot \prod_{i = 1}^{\infty} \bk{1 + \frac{c}{B^i}} = O(1),
\]
we can see that every word-access in node $v$ will cause at most $O(1)$ word-accesses to the root in expectation, and it takes $O(\l) = O(h)$ time to translate the word-access requests from $v$ to the root. That means we need $O(h)$ expected time to access a word in the VM of any given node. Moreover, the query algorithm initiates $O(1)$ word-accesses at every node it visits (mainly for reading $m^*$), so the total running time for the query algorithm is $O(h^2)$.

\paragraph*{Insertion and deletion algorithms.}

For insertions and deletions, we first follow the same procedure as the query algorithm, walking from the root down to the leaf that the key was hashed to, recovering the numbers of keys $\numkeyu$ and spillovers for all the nodes along the path. Next, we insert/delete the key in the minimap on the leaf, which takes $O(\log \nodewidthub{8h}) = O(\log \log N)$ word-accesses to the VM of the leaf.

However, when we insert/delete a key in the leaf, every node $u$ on the path changes its number of keys $\numkeyu$, and therefore changes the space usage of $u$'s VM, i.e., $\nummembits{\l}{u}$. Additional time is spent to adjust the VM sizes. Due to the symmetry, we only analyze the insertions below. We will show later that the difference of $\nummembits{\l}{u}$ before and after the insertion is $w \pm O(1)$, which can be expressed as $O(1)$ allocation/release requests to $u$'s VM, each allocating/releasing exactly $w$ bits or a single bit. According to \cref{lm:random_swap}, we can further translate these requests to $O(1)$ allocation/release requests (followed by $O(1)$ word or bit accesses) to the WVM, which are further translated to $O(1)$ allocation/release requests to the two adapters connecting $u$'s WVM with its parent. They take $O(\log \max(\nodewidth{\l}, 2w)) = O((8h - \l) \cdot \log \log N)$ word-accesses to the VM on the parent of $u$, each can be completed within $O(h)$ time as shown above. Taking summation over all $h$ nodes on the path, the time complexity for an insertion is $O(h^3 \log \log N)$.

After adjusting the VM sizes, for all nodes $u$ on the path from bottom to top, we redo all encoding steps introduced in \cref{sec:induction_ada}, and update all changed words. They may include:
\begin{itemize}
\item Step 4 involves compressing $O(1)$ words of information into a spillover representation. Its resulting memory bits are directly stored in the rightmost words of $u$'s VM, namely $m^*$. We initiate $O(1)$ word-accesses to rewrite all of them.
\item Changing the number of keys $\numkeyu$ of node $u$ may cause $\Mfix$ to change since it depends on $\numkeyu$. $\Mfix$ will change by at most $O(w)$ (bits) since it only differs from $\nummembits{\l}{u}$ by $O(w)$, while the latter only changes by $O(w)$ during an update. The change of $\Mfix$ requires us to update the rightmost $O(1)$ words in $\mfix$, incurring no more than $O(1)$ word-accesses at each level.
\end{itemize}
The time spent on these word-updates is less than the time spent for allocations, which is $O(h^3 \log \log N)$ as shown above. Therefore, the (expected) time complexity for an insertion or a deletion is $O(h^3 \log \log N)$.

\bigskip

It remains to show that the difference of VM size before and after the insertion is equal to $w \pm O(1)$ bits, i.e.,
\begin{align*}
  w - O(1) \le M(\keyunisize{\l}, \numkeyu + 1) - \nummembits{\l}{u} \le w + O(1).
  \numberthis \label{eq:few_alloc}
\end{align*}
Actually, by the induction hypothesis, we have
\begin{gather*}
  \log \numinstance{\l}{u} - 1 < \nummembits{\l}{u} + \log \spilluni{\l}{u} \le \log \numinstance{\l}{u} + 1, \\
  \log \binom{\keyunisize{\l}}{\numkeyu + 1} - 1 < M(\keyunisize{\l}, \numkeyu + 1) + \log K(\keyunisize{\l}, \numkeyu + 1) \le \log \binom{\keyunisize{\l}}{\numkeyu + 1} + 1.
\end{gather*}
As $\spilluni{\l}{u}, K(\keyunisize{\l}, \numkeyu + 1) \in (r, 2r]$, we can see that $\abs{\log \spilluni{\l}{u} - \log K(\keyunisize{\l}, \numkeyu + 1)} < 1$. Moreover, by \cref{clm:feature_size} and $w + 5 \le \log \frac{\keyunisize{8h}}{\nodewidth{8h}} < w + 6$, we have
\[
  w + 2 \le \log \binom{\keyunisize{\l}}{\numkeyu + 1} - \log \binom{\keyunisize{\l}}{\numkeyu} \le w + 7.
\]
Combining these inequalities, we get
\begin{align*}
  & M(\keyunisize{\l}, \numkeyu + 1) - \nummembits{\l}{u} \\
  & \ge \bk{\log \binom{\keyunisize{\l}}{\numkeyu + 1} - 1} - \bk{\log \binom{\keyunisize{\l}}{\numkeyu} + 1} - \abs{\log \spilluni{\l}{u} - \log K(\keyunisize{\l}, \numkeyu + 1)} \\
  & \ge \log \binom{\keyunisize{\l}}{\numkeyu + 1} - \log \binom{\keyunisize{\l}}{\numkeyu} - 3
    \;\ge\; w - 1,
\end{align*}
and similarly, $M(\keyunisize{\l}, \numkeyu + 1) - \nummembits{\l}{u} \le w + 10$. Thus, \eqref{eq:few_alloc} holds.

\section{Dictionary via Multi-Level Hashing}
\label{sec:multihash}

In this section, our focus shifts to dynamic dictionaries that exhibit larger redundancy and smaller update time compared to those discussed in the previous section. Up to \cref{sec:rehashing_for_multilevel}, we will design dynamic dictionaries that can store $N$ distinct keys in the universe $[U]$ with $O(R)$ bits redundancy, $O(\log^* N)$ query time, and $O\left(\log^* N + \log \frac{N}{R}\right)$ update time, for any $\frac{N}{\log^{0.1} N} \le R \le \frac{N}{2^{\log^* N}}$; the query time is improved to $O(1)$ in the worst case in \cref{sec:const_query_time}. The upper limit of $R$ already corresponds to the optimal update time $O(\log^* N)$ within this range, so considering larger values of $R$ is unnecessary. It is important to note that when $N$ is sufficiently large, the value of $r = N / R$ exceeds any constant value.

As before, we make the assumption that $U = N^{1 + \alpha}$ for some constant $\alpha > 0$, and $\wordlen = \Omega(\log N)$. We also assume that the dictionary always contains either $N$ keys or $N - 1$ keys, without loss of generality.

\paragraph{Algorithm Overview.}

Let $\enkeyi[1] = \Theta(\log^{9} N)$ be a parameter. We hash keys to $\nbranchi[0] = N / \enkeyi[1]$ buckets which we call the \emph{level-1 buckets}. The expected number of keys hashed into a specific level-1 bucket $u$ equals $\enkeyi[1]$.

For all keys that are mapped to bucket $u$, we further hash them to smaller \emph{level-2 buckets}; the keys in the same level-2 bucket are further hashed to smaller level-3 buckets, and so on. Such procedure is described by a forest: the roots are level-1 buckets and every internal node $u$ at level $\l$ has $\nbranchi$ children at level $\l + 1$. All keys that were hashed into $u$ will further be hashed into a random child of $u$. We use $\enkeyi$ to denote the expected number of keys hashed into a level-$\l$ node: it is clear that $\enkeyi = \enkeyi[\l - 1] / \nbranchi[\l - 1]$. We set $\enkeyi = \Theta\bigbk{(\log^{(\l)} N)^9}$ for internal levels. (Note that the step of hashing all $N$ keys into level-1 buckets is not a part of the tree; we have different technical details for this stage.)

During the hashing process, we carefully guarantee that the tree is not ``too unbalanced''. We define $\devi = \enkeyi^{2/3}$ and hope the number of keys $n$ in a level-$\l$ node is always in $[\enkeyi - \devi, \, \enkeyi + \devi]$. If after we hash all keys in $u$ to its children, some child violates such rule, then we immediately pick another hash function for $u$ and rebuild the subtree rooted at $u$. If some root node already violates this rule, we simply rebuild the whole dictionary. The cost of rehashing is low on average. 
Let $r = N/R$ and $\log^{(k)} N \le r < \log^{(k - 1)} N$. The tree terminates at a level $k$, consisting of leaf nodes, each with expected number of keys $\enkeyi[k] = \Theta(r^9)$. The number of keys is small enough so that a minimap is capable of storing keys in a leaf. The minimap has $O(1)$ query time, $O(\log r)$ insertion/deletion time, and almost no redundancy. Minimaps are the time bottleneck of our algorithm.

However, the number of keys hashed into a leaf is not fixed; it changes with time. This is the main challenge of the algorithm: if we use variable-size data structures for leaf nodes, then it is hard to concatenate them efficiently and succinctly.\footnote{Using \emph{adapters} here is also too slow.} Our solution here is to carefully make each node fixed-size, and assign a fixed piece of memory for each node, as follows:
\begin{itemize}
\item For each leaf node, it contains at least $\lbkeyi[k] \defeq \ceil{\enkeyi[k] - \devi[k]}$ keys. We arbitrarily pick $\lbkeyi[k]$ keys that were hashed into this leaf and use a fixed-size minimap to store them. For the remaining keys that were not stored in the leaf, we send them back to the parent node.
\item Before we process for an inner node $u$ at level $\l$, each child $v$ of $u$ already stored $\lbkeyi[\l + 1]$ keys in its subtree, and all unstored keys are sent to $u$. Recall that the number of keys hashed into the subtree of $u$ is at least $\lbkeyi[\l]$, hence the number of keys sent back to $u$ is at least $\lbkeyi[\l] - \nbranchi[\l] \cdot \lbkeyi[\l + 1]$. We arbitrarily pick this number of keys, store them at node $u$ with a fixed-size hash table, then send other keys to the parent of $u$.
\item Such procedure ends when all unstored keys are sent back to the level-1 bucket, i.e., the root. All keys here are stored in a hash table with enough capacity $\ubkeyi[1] \defeq \floor{\enkeyi[1] + \devi[1]}$. Some of its capacity is wasted since the number of keys is always below the capacity.
\end{itemize}
Below, we add implementation details to complete the construction. We will first introduce the design when everything is ideal, then construct the hash functions that we use, and lastly handle rare bad events via rehashing.

\paragraph{Tree Parameters.} Before we start, we briefly specify the tree parameters. For technical convenience, we let $\nbranchi$ ($\l \in [0, k - 1]$) be always a power of two. We further require $\enkeyi[\l] \defeq \enkeyi[\l - 1] / \nbranchi[\l - 1] \in \left[ (\log^{(\l)} U)^9, \, 2(\log^{(\l)} U)^9 \right)$ for $\l \in [1, k - 1]$; require $\enkeyi[k] \in [r^9, \, 2r^9)$. These requirements uniquely determine the branching factor $\nbranchi$ for all levels. Also note that $r \le \log^{0.1} N$ is guaranteed, meaning that $\enkeyi[2] \le 2 \log^{0.9} N$.

$\enkeyi$ is the expected number of keys hashed into a level-$\l$ node $u$. We only consider the ideal case where the number of keys $n$ is between $[\lbkeyi, \ubkeyi] = [\ceil{\enkeyi - \devi}, \, \floor{\enkeyi + \devi}]$. If not, we say node $u$ \emph{overflows} ($n > \ubkeyi$) or \emph{underflows} ($n < \lbkeyi$). Once overflow or underflow happens, we immediately rebuild some subtree so that it no longer occurs.

\subsection{Hashing with ID-Quotient Separation}

Similar to the previous section, during the hashing steps, we regard keys as ID-quotient pairs. Besides the merit mentioned before that the short ID enables us to use \emph{minimap} for each leaf node, it also allows short pointers to the stored keys. To access a stored key, only the ID of that key needs to be remembered, which saves space.

Specifically, the keys are represented as ID-quotient pairs since they were hashed into tree roots. In a level-$\l$ node $u$, each key is represented by an element in $[2^{\idleni}] \times [\quotunivi]$, where $\idleni \defeq \ceil{4 \log \enkeyi}$ represents the length of ID and $\quotunivi$ represents the range of the quotient.

As the keys received at node $u$ will be hashed into the child nodes, the ID length will become shorter, i.e., of length $\idleni[\l + 1]$; the rest bits will be moved to the quotient. Formally, we label the children of $u$ with integers $v \in [\nbranchi]$. Assume some key $x = (\xid, \, \xquot)$ arrives at $u$. The procedure of hashing $x$ into a child $v$ can be described by a bijective hash function from $\xid$ to $(\chid, \, \yid, \, \xrem)$, where $\chid \in [\nbranchi]$ is the child index that $x$ will enter, $\yid \in [2^{\idleni[\l + 1]}]$ is the ID part of the next-level representation of $x$, and $\xrem$ is the remaining bits that will be merged into the quotient. Then, we concatenate $\xrem$ with $\xquot$, obtaining $\yquot \defeq (\xrem \concat \xquot) \in [\quotunivi[\l + 1]]$ as the next-level quotient. Finally, the pair $(\yid, \, \yquot)$ will be passed to the child $v$.

Similar to the previous section, the hash function $\permUnite : [2^{\idleni}] \to [\nbranchi] \times [2^{\idleni[\l + 1]}] \times [\quotunivi[\l + 1] / \quotunivi]$ used above needs two properties. First, it should evenly distribute all keys to the children, preventing overflowing or underflowing of any child. Second, for each child $\chid \in [\nbranchi]$, all keys assigned to that child should get different $\yid$. The formal construction of the hash function families are defered to \cref{sec:hashfunc_for_multilevel}.

Above we introduced the hashing process at a tree node that distributes the keys to its child nodes. The process of hashing $N$ initial keys to $\nbranchi[0]$ level-1 buckets (tree roots) is similar. We first round up $U$ to a multiple of $\nbranchi[0] \cdot 2^{\idleni[1]}$, then use a hash function $\permUnite : [U] \to [\nbranchi[0]] \times [2^{\idleni[1]}] \times [\quotunivi[1]]$ to map each key $x \in U$ to $(u, \, \yid, \, \yquot)$. Then $(\yid, \, \yquot)$ is passed to the $u$-th level-1 bucket.

\subsection{Storage Structure}

\paragraph{Leaf Nodes.}

Recall that a leaf node $u$ is responsible for storing exactly $\lbkeyi[k]$ keys among all received keys. The selection of keys to store is arbitrary. Let the set of selected keys be $S$. The main structure inside a leaf node is a minimap storing $S$.

Each key $x \in S$ to store is already represented as $(\xid, \, \xquot) \in [2^{\idleni[k]}] \times [\quotunivi[k]]$. To store $S$, we regard $\xid$ as ``keys'' and $\xquot$ as ``values'', then use the minimap to store these key-value pairs.  The set of ID $\Sid = \BK{\xid : x \in S}$ can be encoded within
\[
  \log \binom{2^{\idleni[k]}}{\lbkeyi[k]} \le \log \binom{2^8\cdot r^{36}}{2 r^{9}} \le 2r^9 \log \bk{2^8 \cdot r^{36}} = O(\log^{0.9} N \cdot \log \log N) \le O(w)
\]
bits, where the last inequality holds because $w = \Omega(\log N)$. Thus the ID set $\Sid$ fits in constant words and satisfies the prerequisite of minimap. Applying \cref{lm:minimap_lem} with key universe $[2^{\idleni[k]}]$, value universe $[\quotunivi[k]]$, and redundancy parameter $r = 1$, we get a minimap storing the set $S$.

With minimap, one can query a key in $O(1)$ time, and insert/delete a key in $O(\log r)$ time. The redundancy for the minimap is $O(1)$ bits which comes from the following two parts: (1) \cref{lm:minimap_lem} itself introduces $O(1)$ bits of redundancy; (2) as we want to store the minimap in a fixed-length consecutive space in the memory without using spillover representation, the rounding of spillover introduce $1$ bit of redundancy. Other unstored keys will be returned to the parent node and finally stored at some ancestor. 

\paragraph{Inner Nodes.}

Let $u$ be a level-$\l$ inner node. The number of received keys is guaranteed to be in $[\lbkeyi, \ubkeyi]$. After further hashing these keys into child nodes, each child node stores $\lbkeyi[\l + 1]$ keys in its subtree, and the remaining keys are sent back to $u$. Clearly, there are at least $m \defeq \lbkeyi[\l] - \nbranchi \lbkeyi[\l + 1]$ returned keys. The main task of node $u$ is to store $m$ arbitrarily selected keys that were returned to $u$, so that the subtree of $u$ stores exactly $\lbkeyi$ keys in total.

To finish the task, we maintain a hash table of fixed capacity $m$, called \emph{storage table}. Recall that each key is represented as $(\xid, \, \xquot)$ at this level. The storage table takes $\xid$ as its key and $\xquot$ as its value. We choose the hash table implementation from \cite{liu22} with $O(m \log \log 2^{\idleni}) = O(m \log \idleni)$ redundancy and $O(1)$ query/update time. Note that the storage table is always full and takes a fixed space, so we put it at a static location in the memory.

There will always be unstored keys; we again return them to the parent. The only exception is the root nodes. At root nodes, we slightly change the configuration, letting $m = \ubkeyi[1] - \nbranchi[1] \lbkeyi[2]$. Then, as long as the root node does not overflow, all keys can be successfully stored in the storage table. In this case, a portion of capacity is wasted, but the produced redundancy is still acceptable.

\paragraph{Waiting Lists.}

The above construction is already enough to encode the set of keys. However, to support quick deletion, we need one more auxiliary structure. Let $u$ be a level-$\l$ node where $\l \ge 2$. After storing a fixed number of keys in the storage table, other unstored keys are returned to the parent of $u$. Imagine that we need to remove some key from $u$'s storage table. To make the storage table full again, we need to pull down some key $x$ from $u$'s ancestors and then insert $x$ to $u$'s storage table. \emph{Waiting list} is designed for quickly finding such $x$.

For every key $x = (\xid, \, \xquot)$ that is not stored in the subtree of $u$, i.e., has been sent back to $u$'s parent, we store its current-level ID $\xid$ and last-level ID $\xid^{[\l - 1]}$ in the waiting list. Its tail part is not stored here. The waiting list has a fixed capacity $\ubkeyi - \lbkeyi = \Theta(\devi)$. As long as node $u$ does not overflow, this capacity is enough. (The whole waiting list is auxiliary redundant information, so underutilizing its capacity is not a problem.) When we need to find a key $x$ stored at ancestors of $u$, we just pick an arbitrary $\xid^{[\l - 1]}$ from the waiting list of $u$, then query $x$ at $u$'s parent. This is doable because $\xid^{[\l - 1]}$ serves as the unique identifier of $x$ at $u$'s parent.

To implement the waiting list, we straightforwardly list all $(\xid, \, \xid^{[\l - 1]})$ as bit strings in the memory, with a 1-bit separation between adjacent two terms which marks the end point of the list. The waiting list can hold at most $\ubkeyi - \lbkeyi \le 2 \devi$ keys, so the space usage is at most
\[2 \devi (1 + \idleni + \idleni[\l - 1]) \le 6 \devi[2] \cdot \idleni[1] = 6 \enkeyi[2]^{2/3} \cdot \ceil{4 \log \enkeyi[1]} < O(\log N) = O(\wordlen),\]
where the second inequality holds because $\enkeyi[2] \le \log^{0.9} N$. Hence, the whole waiting list can fit in $O(1)$ words, meaning that all operations on the waiting list can be done in $O(1)$ time via bit operations.

All leaf nodes and inner nodes except the roots maintain their own waiting lists. Root nodes do not need waiting lists because their storage tables are large enough to contain all keys.

\paragraph{Redundancy Analysis.}

As introduced above, the dictionary is composed of a range of fixed-size structures:
\begin{itemize}
\item a minimap for each leaf node;
\item a storage table for each non-leaf node;
\item a waiting list for each non-root node.
\end{itemize}
(The space usage of hash functions are not counted here.) Next, we carefully list the redundant information in the dictionary:
\begin{itemize}
\item Each minimap produces $O(1)$ redundancy. Multiplied by the number of leaf nodes $N / \enkeyi[k]$, there is a redundancy of $O(N / \enkeyi[k]) = O(N / r^9)$.
\item A waiting list in level $\l$ occupies $O(\devi \cdot \idleni[\l - 1]) = O(\enkeyi^{2/3} \log \enkeyi[\l - 1]) \le O(\enkeyi^{2/3} \cdot \enkeyi^{1/9}) = O(\enkeyi^{7/9})$ bits of memory. These bits are all redundant. Multiplied by the number of level-$\l$ nodes $N / \enkeyi$, it is a redundancy of $O(N / \enkeyi^{2/9})$ bits. It reaches maximum when $\l = k$, i.e., at the leaf nodes, where the total redundancy is $O(N / r^2)$.
\item According to the hash table implementation, a storage table in level $\l$ produces $O(m \log \log 2^{\idleni}) = O(m \log \log \enkeyi)$ redundancy, where $m = \lbkeyi - \nbranchi \lbkeyi[\l + 1]$ for non-root level $\l$ and $m = \ubkeyi[1] - \nbranchi[1] \lbkeyi[2]$ for the root level. As each child can only contribute $2 \devi[\l + 1]$ unstored keys, $m = O(\nbranchi \devi[\l + 1])$. The total redundancy over all level-$\l$ nodes is
  \[
    \frac{N}{\enkeyi} \cdot O(\nbranchi \devi[\l + 1] \log \log \enkeyi)
    \le O\bk{\frac{N}{\enkeyi[\l + 1]^{1/3}} \log \enkeyi[\l + 1]}.
  \]
  This reaches maximum when $\l + 1 = k$, i.e., at the level above leaves. The corresponding redundancy is $O(N/r^3 \cdot \log r)$.
\item Each root's storage table wastes $O(\devi[1])$ capacity; wasting one capacity results in $O(\log U)$ redundancy. The total redundancy of this part is $O(N / \enkeyi[1]) \cdot O(\devi[1] \log U) = O(N / \log^{2} N)$.
\item Key selection. When we select $\enkeyi$ keys to store at $u$, ``which keys are selected'' is an implicit piece of redundant information that was stored in the dictionary. Its entropy is at most $\log \binom{\ubkeyi}{\lbkeyi} = O(\devi \log \enkeyi)$. It is dominated by the redundancy of the waiting list on the same node.
\item Lookup table used by minimap. Recall that we use minimap to maintain $n' \defeq \lbkeyi[k] = O(r^9) = O(\log^{0.9} N)$ key-value pairs from a key universe $U' \defeq 2^{\idleni[k]} = \poly \log N$, allowing $O(1)$ bits redundancy. By \cref{lm:minimap_lem}, the lookup table size for this minimap is $O(U' n' \binom{U'}{n'} \cdot (2n')^{n'}) = (\poly \log N)^{O(\log^{0.9} N)} \ll O(\sqrt{N})$ words. 
\end{itemize}
The largest redundancy appears at the second item: the waiting lists of leaf nodes. Summing up the redundancy from all categories, the overall redundancy of the dictionary is $O(N / r^2) \le O(R)$ which meets the requirement.

\subsection{Query, Insertion, and Deletion}

Provided the storage structure of the dictionary, it is straightforward to perform operations efficiently, including queries, insertions, and deletions.

In the beginning of each opeartion, we need to simulate the hashing process on the input key $x$, trace all the nodes which $x$ was hashed to, and obtain a path from a root node to a leaf node. During this process, we also record the ID-quotient pairs $(\xid^{[\l]}, \, \xquot^{[\l]})$ in each node along the path. The next step varies depending on the operation:
\begin{itemize}
\item Query. For the level-$\l$ node on the path, we query $\xid^{[\l]}$ in its storage table (if it is an inner node) or minimap (if it is a leaf). If at any node the query result is ``exist'', and the tail part in the query result equals $\xquot^{[\l]}$, then we return ``exist''.\footnote{If $\xid^{[\l]}$ is found but the tail part is not $\xquot^{[\l]}$, we immediately know $x$ is not in the dictionary, by the uniqueness of the ID.} Otherwise, as $x$ is not stored in any node along the path, we report that $x$ is not in the dictionary.
\item Insertion. Since all storage tables and minimaps are full except for the root, we insert the key directly into the root's storage table. We also add $x$ to the waiting lists of all non-root nodes along the path (strictly speaking, add $(\xid^{[\l]}, \, \xid^{[\l - 1]})$ to the waiting list of level $\l$).
\end{itemize}

The deletion procedure is shown in \cref{alg:delete_large_redun}. Assume $x$ was stored at node $u$ before the deletion. The motivation is to try to delete $x$ from the storage table or minimap of $u$. To make it full again, we access the waiting list to find another key $y$ that was stored at an ancestor of $u$. Then we recursively delete $y$ from the ancestor while inserting $y$ into $u$'s storage table or minimap. We also carefully maintain the waiting lists along this procedure. See \cref{alg:delete_large_redun} for details.

\begin{algorithm}[ht]
  \caption{Deletion}{\label{alg:delete_large_redun}}
  \DontPrintSemicolon
  \SetKwProg{Fn}{Function}{:}{}
  
  \SetKwFunction{fDel}{DeleteAndGetQuotient}
  \Fn(\Comment{$u$ is the level-$\l$ node along the path.}){\fDel{$u$, $\xid^{[\l]}$}} {
    \uIf{$u$ is the root} {
      Delete $\xid^{[\l]}$ from the storage table while getting its quotient $\xquot^{[\l]}$. \;
      \Return $\xquot^{[\l]}$
    } \uElseIf{$x$ is stored in the storage table or minimap of $u$} {
      Delete $x$ from the storage table or minimap while getting its quotient $\xquot^{[\l]}$. \;
      Delete an arbitrary key $y$ from the waiting list of $u$ while obtaining $\yid^{[\l - 1]}$. \;
      $\yquot^{[\l - 1]} \gets$ \fDel{$\textup{parent}(u)$, $\yid^{[\l - 1]}$} \label{line:recurse_del} \;
      Insert $y$ to the storage table or minimap of $u$. \;
      \Return $\xquot^{[\l]}$.
    } \Else(\Comment{$x$ is in the waiting list of $u$.}) {
      Delete $x$ from the waiting list while getting $\xid^{[\l - 1]}$. \;
      $\xquot^{[\l - 1]} \gets$ \fDel{$\textup{parent}(u)$, $\xid^{[\l - 1]}$} \;
      \Return $\xquot^{[\l]}$ obtained from $\xquot^{[\l - 1]}$ and $\xid^{[\l - 1]}$
    }
  }

  Let $u$ be the leaf node along the path. \;
  \fDel{$u$, $\xid^{[k]}$}
\end{algorithm}

Note that when we read an entry $y$ from the waiting list, we do not know the quotient of $y$, which is required to insert $y$ to the current node $u$. The solution is that we first recursively remove $y$ from the ancestors and simultaneously learn its quotient (\cref{line:recurse_del}).

All above operations have expected running time $O(\logstar N + \log r) = O(\logstar N + \log (N / R))$. First, the tree has $O(\logstar N)$ levels, while updating a storage table or a waiting list takes $O(1)$ expected time. Also, every node on the path will only be visited constant times, so the time on operating inner nodes is $O(\logstar N)$. Second, the minimap on leaf nodes takes $O(\log r)$ update time; $O(1)$ minimap updates are required in every single dictionary operation. Combining them together leads to the desired time complexity. Furthermore, query operations only take $O(\logstar N)$ time because they do not need minimap updates (minimap queries are done in constant time). The query time will be improved to $O(1)$ in \cref{sec:const_query_time}.

\subsection{Construct the Hash Function}
\label{sec:hashfunc_for_multilevel}

So far, we have introduced the basic construction of our algorithm except for one detail -- the hash function $\permUnite$ that assigns child indices and next-level IDs for the keys. For each level $\l$, we will specify a function family $\hashfami$, and sample a function $\permUnite$ from $\hashfami$ uniformly at random for each level-$\l$ node $u$. Recall that we have the following two requirements for $\permUnite$:
\begin{enumerate}[label=(\arabic*)]
\item\label{enum:no_overflow} $\permUnite$ distributes all keys received by $u$ evenly among its children so that the number of keys assigned to any child $v$ is between $[\lbkeyi[\l + 1], \ubkeyi[\l + 1]]$.
\item\label{enum:unique_id} For a fixed child $v$ of $u$, all keys assigned to $v$ have different level-$(\l + 1)$ IDs.
\end{enumerate}
We require these two properties to hold with high probability. If any of them is violated, we immediately resample\footnote{The new function may depend on the previous one; it is not necessarily uniformly random from $\hashfami$.} a new $\permUnite$. Besides, we also need $\permUnite$ to have small space and evaluation time so that it will not be the bottleneck of our dictionary.

\paragraph{Challenges.} Before we give the concrete construction of our hash function, let us first see why picking a uniform random permutation as $\permUnite$ does not work. Formally, we first pick a uniform random permutation $\permUnite : [2^{\idleni}] \to [\nbranchi] \times [2^{\idleni[\l + 1]}] \times [\quotunivi[\l + 1] / \quotunivi]$; as long as any of the two requirements does not meet, we repeatedly pick another one at random. The problem is that Property~\ref{enum:unique_id} above has too low probability to be satisfied. For every child $v$ of $u$, there are about $\enkeyi[\l + 1]$ keys assigned to it. When every key is assigned a random ID in a universe $2^{\idleni[\l + 1]} \approx \enkeyi[\l + 1]^4$, the probability of existing an ID collision at $v$ is about $\Theta(1 / \enkeyi[\l + 1]^2)$. However, we have $\nbranchi = \enkeyi / \enkeyi[\l + 1] \gg \enkeyi[\l + 1]^2$ children, so there is a child violating Property~\ref{enum:unique_id} with high probability.

In this example, when some child $v$ of $u$ violates the uniqueness of IDs, we resample the whole function $\permUnite$ while ignoring the success of other children. This is indeed unnecessary -- by carefully construct the function family $\hashfami$, we are able to only reassign level-$(\l + 1)$ IDs in $v$ while keeping the function values in other children, which is the motivation of our two-stage construction below.

\paragraph{Construction.} We let the hash function $\permUnite$ be a composition of two subfunctions $\permCh$ and $\permId$. $\permCh$, sampled from family $\chidfami$ at random, extracts several bits from the input key's $\xid$ to be the child index $v \in [\nbranchi]$; then, for each child $v$, an individual bijection $\permId$ (sampled from family $\idfami$) extracts several bits again to be the ID inside that child. Formally,
\[
  \begin{array}{clcl}
    & \permCh : [2^{\idleni}] \to [\nbranchi] \times [2^{\idleni} / \nbranchi] & \; \textup{maps} \;
    & \xid \mapsto (v, \, \yprim), \\
    \forall v \in [\nbranchi], & \permId : [2^{\idleni} / \nbranchi] \to [2^{\idleni[\l + 1]}] \times [2^{\idleni - \idleni[\l + 1]} / \nbranchi] & \; \textup{maps} \; & \yprim \mapsto (\yid, \, \xrem).
  \end{array}
\]
Here, the intermediate variable $\yprim$ is called the \emph{primitive ID}. Similar to $\yid$, $\yprim$ is unique among all keys mapped to the child $v$, but it is much longer. (Its uniqueness comes from the fact that $\permCh$ is a bijection.) Note that different children $v$ have different $\permId$, which allows us to resample for a single child when an ID collision occur. After decomposing $\permUnite$ into two subfunctions, Property~\ref{enum:no_overflow} becomes a requirement of $\permCh \in \chidfami$; Property~\ref{enum:unique_id} becomes a requirement of $\permId \in \idfami$.

The existence of function families $\chidfami$, $\idfami$ is summarized as the following lemmas.

\begin{restatable}{lemma}{HashFamilyOne}
  \label{lem:hashfamily_1}
  Let $s, L$ be integers where $1 \le s \le (1 - \Omega(1)) L$. There is a hash function family $\hashfam$ in which every member $h \in \hashfam$ is a bijection $h : \BK{0, 1}^{L} \to \BK{0, 1}^{L}$, satisfying:
  \begin{enumerate}[label=\textup{(\alph*)}]
  \item For any $h \in \hashfam$ and any input $x \in \BK{0, 1}^L$, $h(x)$ and $h^{-1}(x)$ can be evaluated in $O(1)$ time.
  \item It takes $O(2^{\eps L})$ bits to store an $h \in \hashfam$, where $\eps$ is a constant given in advance which does not depend on $s$ and $L$.
  \item\label{enum:no_overflow_in_lemma} For $n \ge 2^{s} \cdot s^4$ different inputs $x_1, \ldots, x_n$, if we divide $h(x_1), \ldots, h(x_n)$ into equivalent classes according to the first $s$ bits of $h(x_i)$, then with probability $\ge 1 - \frac{1}{4n^2}$, the number of elements in any equivalent class is between
    $\Bk{ \frac{n}{2^{s}} - \frac{1}{5} \bk{\frac{n}{2^{s}}}^{2/3}, \, \frac{n}{2^s} + \frac{1}{5} \bk{\frac{n}{2^s}}^{2/3}}$.
  \end{enumerate}
\end{restatable}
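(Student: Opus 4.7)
The plan is to build each $h \in \hashfam$ by composing a small number of uniformly random permutations on a domain of size $2^{L'}$ with $L' = \Theta(\eps L)$, interleaved with fixed bit-permutations, in the style of an unbalanced substitution-permutation network. With this template, property~(b) follows immediately: each small permutation $\pi : \BK{0,1}^{L'} \to \BK{0,1}^{L'}$ can be stored as a pair of lookup tables (for $\pi$ and $\pi^{-1}$) of total size $O(L' \cdot 2^{L'}) = O(2^{\eps L})$ bits, and only $O(1)$ such permutations are used. Property~(a) follows because evaluating $h$ or $h^{-1}$ amounts to $O(L / L') = O(1)$ small-permutation lookups plus $O(1)$ bit manipulations, and $h$ is a bijection by construction as a composition of bijections.

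The main work is property~(c). Fix $n \ge 2^s \cdot s^4$ distinct inputs and, for $b \in \BK{0,1}^s$, let $C_b$ count the inputs whose image under $h$ starts with $b$, so $\E[C_b] = m = n/2^s$. If $h$ were a uniformly random permutation of $\BK{0,1}^L$, then $C_b$ would follow a hypergeometric distribution with variance $O(m)$, and Bernstein's inequality would give
\begin{equation*}
  \Pr\bk{\bigabs{C_b - m} > \tfrac{1}{5} m^{2/3}} \,\le\, \exp\bk{-\Omega\bk{m^{1/3}}} \,\le\, \exp\bk{-\Omega\bk{s^{4/3}}},
\end{equation*}
using $m \ge s^4$. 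A union bound over the $2^s$ buckets beats the target failure probability $1/(4n^2)$ comfortably, since $\log(n^2 \cdot 2^s) = O(s + L)$ is dominated by $s^{4/3}$ in the relevant regime (and the claim is trivial when $s$ is a constant).

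The hard part is transferring this concentration from a truly random permutation to our structured family, which only uses $O(2^{\eps L})$ bits of randomness. My intended argument is to show that the SPN-style construction enforces enough limited independence on the first $s$ output coordinates -- for instance, $k$-wise independence for $k = \Theta(\log(n^2 \cdot 2^s))$ -- and then apply a $k$-wise Chernoff inequality in place of Bernstein. The main obstacle is that adversarially chosen inputs may align with the fixed bit-permutations so that many inputs feed the same value into a single small $\pi$, killing the randomness. Surmounting this will require either enough rounds of mixing to break such alignments on every fixed input set, or an explicit coupling to a uniformly random permutation that fails only with probability $O(1/n^2)$. I expect the bulk of the proof to go into this step; the remaining verifications (bijectivity, storage, $O(1)$ evaluation) are routine.
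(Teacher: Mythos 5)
Your proof is not complete: you explicitly leave the key concentration claim (property (c)) unresolved, and the obstacle you flag — adversarial input sets aligning with the fixed bit-permutations so that many inputs feed one small box, destroying the randomness — is a real problem for the SPN template and is not something you can just wave off as future work. In particular, you never pin down the number of rounds or the fixed bit-permutations, and you give no argument that the structured family inherits enough $k$-wise independence on the first $s$ output coordinates. So while properties (a) and (b) are plausible in your framework, the lemma is not proved.

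The paper sidesteps the whole issue by using a \emph{single-round Feistel permutation} $\pi_f(x) = (\xl \oplus f(\xr)) \concat \xr$, with $f:\{0,1\}^{L-s}\to\{0,1\}^s$ drawn from a Thorup-style family that is simultaneously $O(1)$-time evaluable, $O(2^{\eps L})$-bit storable, and $\Omega(2^{\Omega(L-s)})$-wise independent. Bijectivity and properties (a), (b) are then immediate, and the degeneracy you worry about becomes an asset: group the inputs by their right half $\xr = z_j$; within a group all first-$s$-bit outputs are automatically distinct (they are $\xl \oplus f(z_j)$ for distinct $\xl$), so each group contributes at most one element to any bucket $y$, and the per-group indicator $X_j = \ind{f(z_j) \in \{\xl \oplus y : \xri = z_j\}}$ depends on $f$ at a single point. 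Since $f$ is $k$-wise independent with $k$ polynomial in $L$, the $X_j$ are $k$-wise independent, which is exactly what you wanted as input to your limited-independence Chernoff bound — but you get it structurally, without having to argue about mixing across rounds. You should replace the SPN plan with this Feistel-plus-highly-independent-$f$ construction; once you do, the rest of your outline (hypergeometric intuition, $k$-wise Chernoff with $\delta = \frac{1}{5}(n/2^s)^{-1/3}$, union bound over the $2^s$ buckets) goes through.
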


We defer the proof of \cref{lem:hashfamily_1} to \cref{sec:hashfunc} and first see how we use it to construct $\chidfami$. In fact, taking the parameters $\eps = 1/100$, $L = \idleni = \ceil{4 \log \enkeyi}$, and $s = \log \nbranchi$ (the premise $s \le (1 - \Omega(1)) L$ holds as $2s \le 2 \log \enkeyi < 4 \log \enkeyi \le L$), we get a function family that supports $O(1)$ time evaluation, and takes space $O\bigbk{2^{\frac{1}{100} \idleni}} \le O\bigbk{2^{\frac{1}{100} \ceil{4 \log \enkeyi}}} \ll O\bigbk{\enkeyi^{2/3}}$. Recall that a level-$\l$ node maintains a waiting list of space $\Theta\bigbk{\enkeyi^{7/9}}$, the space cost of storing the hash function $\permCh \in \chidfami$ is not the bottleneck.\footnote{Although the root nodes do not maintain waiting lists, one can still verify that the space cost of storing $\permCh$ is dominated by the redundancy of the node's storage table.}

Substituting $n \in [\lbkeyi, \ubkeyi]$ in Property~\ref{enum:no_overflow_in_lemma} (the premise $n \ge 2^s \cdot s^4$ holds as $n / 2^s = \Theta(\enkeyi[\l + 1]) \ge \Omega(\log^{9} \enkeyi) \gg s^4$), the interval given in \ref{enum:no_overflow_in_lemma} becomes
\[\Bk{\frac{n}{\nbranchi} - \frac{1}{5}\bk{\frac{n}{\nbranchi}}^{2/3}, \; \frac{n}{\nbranchi} + \frac{1}{5}\bk{\frac{n}{\nbranchi}}^{2/3}}. \label{eq:interval_in_c}\]
It is easy to verify that this interval is contained\footnote{More strictly, the integers in this interval are contained in $[\lbkeyi[\l + 1], \, \ubkeyi[\l + 1]]$. It is enough because we are bounding the number of elements in an equivalent class, which is always an integer.} in $[\lbkeyi[\l + 1], \, \ubkeyi[\l + 1]]$. For example, the upper limit
\begin{align*}
  \frac{n}{\nbranchi} + \frac{1}{5} \bk{\frac{n}{\nbranchi}}^{2/3}
  &\le \ubkeyi \cdot \frac{\enkeyi[\l + 1]}{\enkeyi} + \frac{1}{5} \bk{\ubkeyi \cdot \frac{\enkeyi[\l + 1]}{\enkeyi}}^{2/3} \\
  &\le \bk{1 + \frac{\devi}{\enkeyi}} \enkeyi[\l + 1] + \frac{1}{5} \bk{\bk{1 + \frac{\devi}{\enkeyi}} \enkeyi[\l + 1]}^{2/3} \\
  &\le \enkeyi[\l + 1] + \frac{\enkeyi[\l + 1]}{\enkeyi^{1/3}} + \frac{1}{4} \enkeyi[\l + 1]^{2/3} \\
  &\le \enkeyi[\l + 1] + \enkeyi[\l + 1]^{2/3},
\end{align*}
whose integral part equals $\ubkeyi[\l + 1]$. Here, the third inequality holds because $\bk{1 + \devi / \enkeyi}^{2/3} \le 5/4$ as long as $\enkeyi$ is not too small; the last inequality holds because $\enkeyi^{1/3} \ge (3 \enkeyi[\l + 1])^{1/3} > \frac{4}{3} \enkeyi[\l + 1]^{1/3}$. It is similar for the lower limit.

From \ref{enum:no_overflow_in_lemma}, the probability of any child overflowing or underflowing is at most $1/4n^2 \le 1/\enkeyi^2$ (we have $n \ge \enkeyi/2$ as $\enkeyi$ is larger than a constant).

In summary, by applying \cref{lem:hashfamily_1} with proper parameters, we get the function family $\chidfami$ such that with probability at least $1 - 1 / \enkeyi^2$, no child of $u$ overflows or underflows. Moreover, any function $\permCh \in \chidfami$ and its inverse can be evaluated in constant time; its storage space is also not the bottleneck of our algorithm.

Compared to $\chidfami$, $\idfami$ has a stricter space constraint, because each child $v$ of $u$ needs to store a copy of $\permId$ independently. Correspondingly, its independence constraint is less strict.\footnote{We only need 2-wise independence for $\permId$. See the appendix for details.} Similar to \cref{lem:hashfamily_1}, we will prove the following lemma in \cref{sec:hashfunc}, which gives the construction of $\idfami$:

\begin{restatable}{lemma}{HashFamilyTwo}
  \label{lem:hashfamily_2}
  Let $s, L$ be integers where $0 < s < L$. There is a hash function family $\hashfam$ in which every member $h \in \hashfam$ is a bijection $h : \BK{0, 1}^L \to \BK{0, 1}^L$, satisfying:
  \begin{enumerate}[label=\textup{(\alph*)}]
  \item For any $h \in \hashfam$ and any input $x \in \BK{0, 1}^L$, $h(x)$ and $h^{-1}(x)$ can be evaluated in constant time.
  \item It takes $O(L)$ bits to store an $h \in \hashfam$.
  \item\label{enum:no_collision} For any integer $n$ and $n$ fixed distinct inputs $x_1, \ldots, x_n$, with probability at least $1 - n^2 / 2^s$, the first $s$ bits of $h(x_1), \ldots, h(x_n)$ are pairwise distinct.
  \end{enumerate}
\end{restatable}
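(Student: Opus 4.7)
The plan is to realize $\hashfam$ as the family of affine maps on the finite field $\F_{2^L}$. Identify $\BK{0,1}^L$ with $\F_{2^L}$ via any fixed $\F_2$-basis, and set
\[
  \hashfam \defeq \BK{h_{a,b}(x) \defeq a x + b : a \in \F_{2^L}^*,\, b \in \F_{2^L}},
\]
sampled by drawing $(a,b)$ uniformly from $\F_{2^L}^* \times \F_{2^L}$. Each $h_{a,b}$ is a bijection with inverse $h_{a,b}^{-1}(y) = a^{-1}(y - b)$, and is described by $2L = O(L)$ bits, which immediately yields property (b).

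For property (a), I would exploit the fact that in every invocation of this lemma elsewhere in the paper, $2^L = \poly\log N$. We can therefore precompute, once and shared across all instances of $h \in \hashfam$, a multiplication table of size $O(2^{2L})$ words and an inverse table of size $O(2^L)$ words for $\F_{2^L}$. Evaluating $h_{a,b}$ or $h_{a,b}^{-1}$ then reduces to $O(1)$ table lookups plus an XOR, giving constant time. (Addition in $\F_{2^L}$ is just bitwise XOR and costs nothing extra.)

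The substance lies in property (c). Fix distinct $x_1, \ldots, x_n \in \F_{2^L}$ and any pair $i \ne j$, and set $d \defeq x_i - x_j \in \F_{2^L}^*$. Then $h_{a,b}(x_i) - h_{a,b}(x_j) = a d$, and since multiplication by the fixed nonzero element $d$ permutes $\F_{2^L}^*$, the product $a d$ is uniform on $\F_{2^L}^*$. Under the fixed identification, the set of elements of $\F_{2^L}$ whose first $s$ bits vanish is an $\F_2$-subspace of dimension $L - s$, so it contains exactly $2^{L-s}$ elements, of which $2^{L-s} - 1$ are nonzero. The first $s$ bits of $h(x_i)$ and $h(x_j)$ coincide iff $a d$ lies in this set, so this event occurs with probability at most $(2^{L-s} - 1)/(2^L - 1) \le 2^{-s}$. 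A union bound over the $\binom{n}{2}$ pairs then bounds the probability of any collision on the first $s$ bits by $\binom{n}{2}/2^s \le n^2/2^s$, as required.

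The only mild obstacle is justifying constant-time arithmetic in $\F_{2^L}$, which I would handle exactly as above by leaning on shared precomputed tables whose size is negligible compared to $R$ in every relevant call site; the probabilistic core is just textbook pairwise independence.
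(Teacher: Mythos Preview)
Your argument is correct, but it takes a different route from the paper. The paper builds $\hashfam$ from \emph{Feistel permutations}: pick any $2$-independent $f:\{0,1\}^{L-s}\to\{0,1\}^{s}$ and set $h(x_{\textup L}\concat x_{\textup R}) = (x_{\textup L}\oplus f(x_{\textup R}))\concat x_{\textup R}$. Bijectivity is then automatic (indeed $h$ is an involution, so $h^{-1}=h$ for free), and property~(c) follows because two inputs sharing $x_{\textup R}$ can never collide on the first $s$ bits, while for distinct $x_{\textup R}$'s the values $f(x_{\textup R,1}),f(x_{\textup R,2})$ are independent uniform, giving collision probability exactly $2^{-s}$. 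You instead bake bijectivity into the algebraic structure via an affine map on $\F_{2^L}$, and your collision analysis via the uniformity of $ad$ on $\F_{2^L}^*$ is clean and correct.

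One caveat on your justification of (a): your claim that $2^L=\poly\log N$ in every invocation is not accurate. At the top level (hashing $[U]$ into level-$1$ buckets) the lemma is applied with $L=\Theta(\log N)$, so an $O(2^{2L})$-word multiplication table is not affordable. The paper sidesteps this by simply asserting that polynomial hashing over $\F_{2^L}$ runs in constant time, so you are not on strictly weaker footing; but the Feistel decomposition does buy something real here, since the inner $f$ need not be a bijection, and genuinely constant-time word-RAM $2$-independent families from $(L-s)$ bits to $s$ bits (multiply-shift, multiply-mod-prime) are available without any field-multiplication assumption. Your direct affine construction is less flexible on exactly this point.
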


Applying \cref{lem:hashfamily_2} with $L = \idleni - \log \nbranchi$ and $s = \idleni[\l + 1]$, we get a function family $\idfami$ which supports constant time evaluation and $O(\idleni)$ storage space. We first check that the space usage of $\permId$ does not become the bottleneck of our dictionary: For a level-$\l$ node $u$ and its child $v$, we use $O(\idleni)$ bits of space to store $\permId$ at node $v$, while $v$ already has a larger redundant waiting list of space $\Omega(\idleni \cdot \devi[\l + 1])$.

In Property~\ref{enum:no_collision}, let $n$ be any integer in $[\lbkeyi[\l + 1], \ubkeyi[\l + 1]]$, then the probability of having a collision is at most
\[
  \frac{n^2}{2^{\idleni[\l + 1]}} \le \frac{\ubkeyi[\l + 1]^2}{\enkeyi[\l + 1]^4} \le \frac{2}{\enkeyi[\l + 1]^2},
\]
where the last inequality holds as long as $\enkeyi[\l + 1]$ is larger than a constant. Hence, for level-$\l$ node $u$ and its child $v$, such hash function $\permId$ is able to guarantee the uniqueness of the level-$(\l + 1)$ IDs in $v$ with probability $1 - 2 / \enkeyi[\l + 1]^2$, which meets our requirement. So far, by combining $\permCh \in \chidfami$ and $\permId \in \idfami$, we conclude the construction of the hash function $\permUnite$.

\paragraph{Hash functions in the top level.} Although the above discussion is all about hash functions on the tree nodes, the bijective hash function that we use to hash initial keys into level-1 buckets (i.e., tree roots) are essentially the same. If $U$ is a power of two, then we just regard the whole key $x \in [U]$ as an ID (while the quotient is empty), and apply \cref{lem:hashfamily_1,lem:hashfamily_2} as above to construct corresponding hash functions. If $U$ is not a power of two, we only need to slightly generalize the construction in \cref{lem:hashfamily_1,lem:hashfamily_2}. The details are deferred to \cref{sec:hashfunc}.

\subsection{Rehashing}
\label{sec:rehashing_for_multilevel}

When the hash function at some node does not meet the requirements, we need to resample a hash function and then reconstruct part of the data structure. This procedure is called \emph{rehashing}. In this subsection, we show that the rehashing frequency is low enough so that the expected time cost of rehashing will not become the bottleneck of our algorithm.

\newcommand{\famiGL}[1][\ell]{\mathcal{G}_{#1}}

We start by introducing the behavior of rehashing. Denote by $\permchu$ the hash function $\permch \in \chidfami$ for node $u$, and recall that $\permId \in \idfami$ is stored at the child $v$. They distribute the keys received by $u$ to the children and assign IDs of level $\l + 1$ to these keys. When $\permchu$ lets a child overflow or underflow, we resample $\permchu$ uniformly at random from $\chidfami$, and reconstruct the whole subtree rooted at $u$. For any child $v$, when $\permId$ leads to an ID collision in $v$, we resample $\permId$ randomly from $\idfami$, and then reconstruct the subtree rooted at $v$.

Notice the similarity between $\permchu$ and $\permIdu$ -- any failure among them leads to the reconstruction of the same subtree rooted at $u$; also, they have similar failure probabilities ($1 / \enkeyi^2$ vs $2 / \enkeyi^2$). For simplicity, we consider these two subfunctions as a whole: let $g_u = (\permchu, \, \permIdu) \in \famiGL = \chidfami \times \idfami[\l - 1]$ be the function stored at $u$. Note that this grouping is different from previous subsections (where $\permchu$ is grouped with the children's $\permId[v]$). When one of $\permchu$, $\permIdu$ fails to satisfy the requirement, we say $g_u$ \emph{fails}, in which case we resample the whole $g_u$ instead of only the failed part. From the previous subsection, we know that a random $g_u$ from $\famiGL$ fails with probability at most $3 / \enkeyi^2$ for a fixed set of input keys.

The recursive process of reconstructing a subtree is described below in \cref{alg:rehash}.

\begin{algorithm}[ht]
  \caption{Rehashing}{\label{alg:rehash}}
  \DontPrintSemicolon
  \SetKwProg{Fn}{Function}{:}{}

  \SetKwFunction{rehash}{Rehash}
  \SetKwFunction{leadrehash}{LeadRehashing}

  \Fn{\rehash{$u$}} {
    Sample $g_u \in \famiGL$ uniformly at random.\;
    \If{$g_u$ fails} {
      \rehash{$u$}\;
      \Return\;
    }
    Distribute the keys to children according to $g_u$.\;
    \ForEach{child $v$ of $u$} {
      \rehash{$v$}\;
    }
    Fill the storage table or minimap at $u$.\;
    Fill the waiting list of $u$ if $u$ is not the root.\;
  }
\end{algorithm}

Rehashing the subtree rooted at a level-$\l$ node $u$ takes $O(\enkeyi (k - \l + 1)) \le O(\enkeyi \logstar \enkeyi)$ time: the function \rehash takes $O(\enkeyi[\l'])$ expected time processing every level-$\l'$ node; summing over all descendants of $u$ leads to this time complexity. It is possible that the resampled $g_u$ still fails, in which case we immediately rehash $u$ once more.

When the hash function $g$ fails at an ancestor $p$ of $u$, the \rehash{$u$} is also called in the recursion. In this case, we say $p$ \emph{causes} the rehashing. However, we still count the time cost of this rehashing on every descendant of $p$. Once \rehash{$u$} is called, we count $O(\enkeyi)$ time cost on node $u$, regardless whether $u$ causes the rehashing or not. Then, the total time usage of rehashing equals the sum of time costs over all nodes. Based on this idea, we prove the following lemma:

\newcommand{\subseq}{A}
\newcommand{\nrhash}{m}
\newcommand{\nrhashi}[1][\l]{a_{#1}}

\begin{lemma}
  \label{lem:rehashing_time}
  Assume there are $M$ operations (insertion/deletion) after the initialization of the dictionary with capacity $N$. The expected time of rehashing in this procedure is at most $O((N + M) \logstar N)$.
\end{lemma}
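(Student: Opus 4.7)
The plan is to separate the expected rehashing time into two parts: the cost of initialization (building the tree on the initial $N$ keys) and the cost incurred over the $M$ subsequent operations.

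For initialization, every node in the tree is rehashed exactly once as a top-level call. A fresh uniformly random $g_u \in \famiGL$ satisfies both validity requirements with probability at least $1 - 3/\enkeyi^2 \ge 1/2$ (using that $\enkeyi \ge \enkeyi[k] = \Omega(r^9)$ is large), so Rehash$(u)$ runs $O(1)$ expected attempts with $O(\enkeyi)$ work per attempt at $u$. Summing the expected $O(\enkeyi)$ work at each of the $N/\enkeyi$ level-$\l$ nodes over all $k = O(\logstar N)$ levels gives $O(Nk) = O(N \logstar N)$.

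For the operational part, fix a node $u$ at level $\l$ and let $Y_u$ count its \emph{top-level} rehashes during the $M$ operations, i.e., those triggered by $g_u$'s own failure rather than by an ancestor's. Rehashes of $u$ triggered by some ancestor $p$ are charged to $p$'s own $Y_p$ instead. One top-level rehash of $u$ recursively rebuilds the whole subtree of $u$; since each descendant $w$ incurs $O(\enkey_{\l(w)})$ expected work per Rehash call (absorbing its $O(1)$ retries), the total expected cost is $\sum_{w \in \textup{subtree}(u)} O(\enkey_{\l(w)}) = O(\enkeyi (k-\l+1)) = O(\enkeyi \logstar N)$. The key claim, proven below, is
\[
  E[Y_u \mid m_u] \;\le\; O(m_u/\enkeyi^2),
\]
where $m_u$ is the number of operations whose hashing path passes through $u$. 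Aggregating and using $\sum_{u \text{ at level } \l} m_u = M$ (every operation deterministically touches one node per level) together with $\sum_{\l} \enkeyi^{-1} = O(\enkeyi[k]^{-1}) = O(1)$,
\[
  \sum_{\l=1}^{k} \sum_{u \text{ at level } \l} E[Y_u] \cdot O(\enkeyi \logstar N) \;\le\; O(\logstar N) \sum_{\l=1}^{k} \frac{M}{\enkeyi} \;=\; O(M \logstar N).
\]

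For the key claim, I condition on the entire sequence of key sets $S_0, S_1, \ldots$ appearing in $u$'s bucket across time. Crucially, this sequence is determined by the operation stream together with the hash functions at \emph{strict} ancestors of $u$, so it is independent of the choice of $g_u$. Against a fixed such sequence, the per-key-set failure bound $3/\enkeyi^2$ and a union bound imply that a freshly-sampled uniform $g_u$ stays valid for $t$ consecutive key sets with probability at least $1 - 3t/\enkeyi^2$. Integrating out to $t = \enkeyi^2/6$ gives expected survival $\Omega(\enkeyi^2)$ for a fresh $g_u$; the in-use $g_u$ is merely uniform conditioned on validity at the start of its epoch, an event of probability $\ge 1/2$, so its expected survival is still $\Omega(\enkeyi^2)$. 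A Wald-style inequality then bounds the expected number of epochs fitting within $m_u$ operations by $O(m_u/\enkeyi^2)$.

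The main obstacle is exactly this epoch-length lower bound: the in-use $g_u$ is not truly uniform, being conditioned on validity at every intermediate key set of its epoch, which seemingly couples $g_u$ to the evolving key set. The resolution is the independence observation above --- conditioning on the full key-set sequence at $u$ reduces the analysis to a single freshly-sampled uniform $g_u$ run against a deterministic schedule, where the union-bound analysis becomes clean and the "valid at epoch start" conditioning only costs a constant factor.
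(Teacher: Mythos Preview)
Your proposal is correct and shares the paper's core strategy: condition on the key-set trajectory at a node (determined by the ancestors' randomness), union-bound over consecutive key sets to get a survival guarantee for a freshly sampled $g_u$, then aggregate. The difference is purely bookkeeping. The paper counts \emph{all} calls to \texttt{Rehash}$(u)$ (including those forced by ancestor rebuilds), charges $O(\enkeyi)$ per call, and tracks the counts through the recurrence $\nrhashi \le \nrhashi[\l-1]\nbranchi[\l-1] + M/\enkeyi^2$. You instead count only the self-triggered rehashes $Y_u$, charge each one the full subtree-rebuild cost $O(\enkeyi \log^* N)$, and then absorb the extra $\log^* N$ factor via $\sum_\l \enkeyi^{-1} = O(1)$. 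These are dual accountings of the same work.

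One step deserves more care. Inferring $E[Y_u]=O(m_u/\enkeyi^2)$ from ``expected epoch length $\Omega(\enkeyi^2)$'' via a Wald-type inequality is not valid in general: a conditional lower bound on the mean inter-failure time does not by itself upper-bound the expected number of failures in a window (take epochs that are $1$ with probability $1-1/L$ and $L^2$ with probability $1/L$). What makes your argument go through is the \emph{tail} bound you already derived, $\Pr[\text{survive} \ge t] \ge 1 - 3t/\enkeyi^2$, which in particular gives $\Pr[\text{epoch} \ge \enkeyi^2/6] \ge 1/2$ regardless of the history. The paper isolates precisely this statement as its key claim and deduces $O(T/\enkeyi^2 + 1)$ rehashes per maximal subsequence directly from it; if you route your argument through this probability-$1/2$ guarantee rather than through expected survival, the gap closes and your accounting yields the same bound.
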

\begin{proof}
  Let $u$ be a level-$\l$ node. We consider every maximal consecutive subsequence $\subseq$ of operations such that no ancestor of $u$ is rehashed. We also ignore the operations in $\subseq$ where the key to insert or delete is not hashed into $u$. Throughout $\subseq$, $u$ may be rehashed multiple times (which are all caused by $u$), but the input of $g_u$ (i.e., the primitive ID $\xprim$ of each key $x$) is unchanged except for the keys being deleted or inserted. Below, we first derive an upper bound for such a maximal subsequence, then sum over all nodes $u$ and maximal subsequences. When doing such analysis, we fix the hash functions $g$ and the rehashing history of $u$'s ancestors, and only consider the randomness of sampling $g_u$ when rehashing.

  \begin{claim}
    \label{clm:rehashing}
    Assume $u$ is a level-$\l$ node and $\subseq$ is a maximal consecutive subsequence of operations where $u$'s ancestors are not rehashed, where at the $t$-th operation in $\subseq$ there is a rehashing caused by $u$. Then, with probability at least $1/2$ under the randomness of resampled $g_u$, there will not be rehashing caused by $u$ in the next $\midceil{\enkeyi^2 / 10}$ operations in $\subseq$.
  \end{claim}
  \begin{proof}
    We define $S_{i}$ as the set of primitive IDs of the keys hashed into $u$ after the $(t + i)$-th operation in $\subseq$, for $0 \le i \le \ceil{\enkeyi^2 / 10}$. For a random $g_u \in \famiGL$, the probability that $g_u$ fails on $S_i$ is at most $3 / \enkeyi^2$, provided $|S_i| \in [\lbkeyi, \, \ubkeyi]$, which is guaranteed since there is no rehashing at $u$'s ancestors. Applying the union bound over $i = 0, 1, \ldots, \midceil{\enkeyi^2 / 10}$, we know that $S_0, \ldots, S_{\midceil{\enkeyi^2 / 10}}$ do not let $g_u$ fail with probability at least $1 - \bk{\frac{\enkeyi^2}{10} + 1}\bk{\frac{3}{\enkeyi^2}} \ge \frac{1}{2}$. In the latter case, $u$ will not be rehashed in the next $\midceil{\enkeyi^2/10}$ operations.
  \end{proof}

  As a direct implication of \cref{clm:rehashing}, if the subsequence $\subseq$ consists of $T$ operations, then the expected number of rehashings of $u$ is $O(T/\enkeyi^2 + 1)$. The term $O(1)$ refers to the inevitable rehashing of $u$ at the beginning of $\subseq$, which is caused by an ancestor of $u$.

  Next, we take the summation of the time costs over all nodes. Denote by $\nrhashi$ the total number of rehashings of level-$\l$ nodes. We write the recurrence
  \begin{align*}
    \nrhashi &\le \nrhashi[\l - 1] \cdot \nbranchi[\l - 1] + \frac{M}{\enkeyi^2}, \qquad (\l > 1)
  \end{align*}
  where the first term refers to the number of rehashings that are recursively called from the parents; the second term is the summation of $O(T / \enkeyi^2)$ terms, as the sum of $T$ of all level-$\l$ nodes equals $M$ since they form a partition of the operation sequence. (The notation $O$ is omitted for cleanliness.) Similarly, the initial value is
  $\nrhashi[1] \le \frac{N}{\enkeyi[1]} + \frac{M}{\enkeyi[1]^2}$.
  Solving this recurrence, we have
  \begin{align*}
    \nrhashi \le \frac{N}{\enkeyi} + \frac{M}{\enkeyi} \sum_{i=1}^{\l} \frac{1}{\enkeyi[i]} \le \frac{N}{\enkeyi} + \frac{M}{\enkeyi}.
  \end{align*}
  Each rehashing of level $\l$ takes $O(\enkeyi)$ time, so the total time usage of all rehashings is
  \[
    \sum_{\l = 1}^k \nrhashi \cdot \enkeyi
    \le k(N + M)
    \le (N + M) \logstar N.
    \qedhere
  \]
\end{proof}

\subsection{Achieving Constant-Time Queries}
\label{sec:const_query_time}

The dictionary we described so far has $\Theta(\log^* N)$ query time, which is due to simulating the hashing process on the input key $x$ along the path of length $O(\log^* N)$ from a root to a leaf. We can improve it as follows: After only $O(1)$ hashing steps at the beginning, the IDs of the keys become very short ($\ll \log \log N$), and the memory usage to store the set of IDs is far less than a word. If we rearrange the order of information in the memory such that the information of ID set occurs consecutively, it is possible to use lookup tables to complete the remaining steps of the query. This will improve the query time to $O(1)$ in the worst case.

Specifically, each level-$\l$ node $u$ ($\l \ge 3$) is responsible for storing a set of $\poly \log^{(\l)} N \ll \log \log N$ IDs from a universe $2^{\idleni[\l]} \ll \log \log N$, and an associated quotient of $O(w)$ bits for each ID. Our first observation is that we can store the ID and quotient parts separately. On leaf nodes, we used minimaps to store the information, which consists of the ID set and a fixed number of quotient slots. Even if we store these two parts in different regions of the memory, the minimap still works.\footnote{Technically, we extracted a spillover from every quotient before storing it in the memory, which is encoded and stored along with the key set (in the former part).} For internal nodes, we used succinct hash tables from~\cite{liu22} to store a set of ID-quotient pairs. Their hash tables also store quotients in fixed slots, and thus work even if we store the metadata (including the ID set) separately from the quotient slots. We also store the hash functions at a node as part of the metadata.

Then, we store the metadata at all nodes in the depth-first pre-order in the memory, so that the metadata within a subtree are arranged consecutively. For the level-$\l$ node $u$ ($\l \ge 3$), the total space usage of the metadata for all its descendants does not exceed $\poly \log \log N \ll \log^{0.1} N$, which fits in a machine word so that we can access in $O(1)$ word-accesses. Furthermore, these metadata are enough for us to perform the remaining steps of the query, locating the quotient slot we need to access (if there is one). A lookup table of $2^{O(\log^{0.1} N)} = N^{o(1)}$ words is sufficient to support it in worst-case constant time.

In sum, by rearranging the order of storing information in the memory, and with the help of a lookup table, we can improve the query time of the dictionary to constant in the worst case. This covers the regime of $R \ge N / \log^{0.1} N$ in \cref{thm:main}.

\bibliographystyle{alpha}
\bibliography{reference.bib}

\newcommand{\etalchar}[1]{$^{#1}$}
\begin{thebibliography}{DKM{\etalchar{+}}94}

\bibitem[ANS09]{ANS09}
Yuriy Arbitman, Moni Naor, and Gil Segev.
\newblock De-amortized cuckoo hashing: Provable worst-case performance and
  experimental results.
\newblock In {\em Proc.\ 36th International Colloquium on Automata, Languages
  and Programming (ICALP): Part I}, pages 107--118, 2009.

\bibitem[ANS10]{ANS10}
Yuriy Arbitman, Moni Naor, and Gil Segev.
\newblock Backyard cuckoo hashing: Constant worst-case operations with a
  succinct representation.
\newblock In {\em Proc.\ 51st IEEE Symposium on Foundations of Computer Science
  (FOCS)}, pages 787--796, 2010.

\bibitem[BCF{\etalchar{+}}23]{bender2021all}
Michael~A. Bender, Alex Conway, Mart{\'i}n {Farach-Colton}, William Kuszmaul,
  and Guido Tagliavini.
\newblock Iceberg hashing: Optimizing many hash-table criteria at once.
\newblock Preprint arXiv:2109.04548 [cs.DS], 2023.

\bibitem[BFK{\etalchar{+}}22]{liu22}
Michael~A. Bender, Mart{\'i}n {Farach-Colton}, John Kuszmaul, William Kuszmaul,
  and Mingmou Liu.
\newblock On the optimal time/space tradeoff for hash tables.
\newblock In {\em Proc.\ 54th ACM Symposium on Theory of Computing (STOC)},
  pages 1284--1297, June 2022.

\bibitem[BKP{\etalchar{+}}22]{BK0TW22}
Aaron Berger, William Kuszmaul, Adam Polak, Jonathan Tidor, and Nicole Wein.
\newblock Memoryless worker-task assignment with polylogarithmic switching
  cost.
\newblock In {\em Proc.\ 49th International Colloquium on Automata, Languages
  and Programming (ICALP)}, volume 229, pages 19:1--19:19, 2022.

\bibitem[DKM{\etalchar{+}}94]{PerfectHashing88}
Martin Dietzfelbinger, Anna Karlin, Kurt Mehlhorn, Friedhelm {{Meyer auf der
  Heide}}, Hans Rohnert, and Robert~E. Tarjan.
\newblock Dynamic perfect hashing: Upper and lower bounds.
\newblock {\em SIAM Journal on Computing}, 23(4):738--761, 1994.

\bibitem[DM90]{RealTime90}
Martin Dietzfelbinger and Friedhelm {Meyer auf der Heide}.
\newblock A new universal class of hash functions and dynamic hashing in real
  time.
\newblock In {\em Proc.\ 17th International Colloquium on Automata, Languages
  and Programming (ICALP)}, pages 6--19, 1990.

\bibitem[DMPP06]{DdHPP06}
Erik~D. Demaine, Friedhelm {Meyer auf der Heide}, Rasmus Pagh, and Mihai P{\v
  a}tra{\c s}cu.
\newblock De dictionariis dynamicis pauco spatio utentibus.
\newblock In {\em Proc.\ 7th Latin American Conference on Theoretical
  Informatics (LATIN)}, pages 349--361, 2006.

\bibitem[FKS84]{Storing84}
Michael~L. Fredman, J\'{a}nos Koml\'{o}s, and Endre Szemer\'{e}di.
\newblock Storing a sparse table with {$O(1)$} worst case access time.
\newblock {\em Journal of the ACM}, 31(3):538--544, June 1984.

\bibitem[FPSS03]{FPSS03}
Dimitris Fotakis, Rasmus Pagh, Peter Sanders, and Paul~G. Spirakis.
\newblock Space efficient hash tables with worst case constant access time.
\newblock In {\em Proc.\ 20th Symposium on Theoretical Aspects of Computer
  Science (STACS)}, pages 271--282, 2003.

\bibitem[KLL{\etalchar{+}}97]{karger1997consistent}
David~R. Karger, Eric Lehman, Frank~Thomson Leighton, Rina Panigrahy,
  Matthew~S. Levine, and Daniel Lewin.
\newblock Consistent hashing and random trees: Distributed caching protocols
  for relieving hot spots on the world wide web.
\newblock In {\em Proc.\ 29th ACM Symposium on the Theory of Computing (STOC)},
  pages 654--663, 1997.

\bibitem[Knu63]{knuth1963notes}
Don Knuth.
\newblock Notes on ``open'' addressing, 1963.
\newblock Available at
  \url{https://jeffe.cs.illinois.edu/teaching/datastructures/2011/notes/knuth-OALP.pdf}.

\bibitem[Knu73]{Knuth73}
Donald~E. Knuth.
\newblock {\em The Art of Computer Programming, Volume {III:} Sorting and
  Searching}.
\newblock Addison-Wesley, 1973.

\bibitem[Kus23]{Kus23Strongly}
William Kuszmaul.
\newblock Strongly history-independent storage allocation: New upper and lower
  bounds.
\newblock In {\em Proc.\ 64th IEEE Symposium on Foundations of Computer Science
  (FOCS)}, 2023.

\bibitem[LLYZ23a]{li2023dynamic}
Tianxiao Li, Jingxun Liang, Huacheng Yu, and Renfei Zhou.
\newblock Dynamic ``succincter''.
\newblock In {\em Proc.\ 64th IEEE Symposium on Foundations of Computer Science
  (FOCS)}, 2023.

\bibitem[LLYZ23b]{li2023lb}
Tianxiao Li, Jingxun Liang, Huacheng Yu, and Renfei Zhou.
\newblock Tight cell-probe lower bounds for dynamic succinct dictionaries.
\newblock In {\em Proc.\ 64th IEEE Symposium on Foundations of Computer Science
  (FOCS)}, 2023.

\bibitem[LYY20]{LYY20}
Mingmou Liu, Yitong Yin, and Huacheng Yu.
\newblock Succinct filters for sets of unknown sizes.
\newblock In {\em Proc.\ 47th International Colloquium on Automata, Languages
  and Programming (ICALP)}, volume 168, pages 79:1--79:19, 2020.

\bibitem[Pag01]{Pagh01}
Rasmus Pagh.
\newblock Low redundancy in static dictionaries with constant query time.
\newblock {\em SIAM Journal on Computing}, 31(2):353--363, 2001.

\bibitem[P{\v{a}}t08]{patrascu2008succincter}
Mihai P{\v{a}}tra{\c{s}}cu.
\newblock Succincter.
\newblock In {\em Proc.\ 49th IEEE Symposium on Foundations of Computer Science
  (FOCS)}, pages 305--313, 2008.

\bibitem[PR04]{PAGH2004122}
Rasmus Pagh and Flemming~Friche Rodler.
\newblock Cuckoo hashing.
\newblock {\em Journal of Algorithms}, 51(2):122--144, 2004.

\bibitem[RR03]{RR03}
Rajeev Raman and Satti~Srinivasa Rao.
\newblock Succinct dynamic dictionaries and trees.
\newblock In {\em Proc.\ 30th International Conference on Automata, Languages
  and Programming (ICALP)}, pages 357--368, 2003.

\bibitem[SSS95]{schmidt1995chernoff}
Jeanette~P. Schmidt, Alan Siegel, and Aravind Srinivasan.
\newblock {Chernoff-Hoeffding} bounds for applications with limited
  independence.
\newblock {\em SIAM Journal on Discrete Mathematics}, 8(2):223--250, 1995.

\bibitem[Tho13]{thorup2013simple}
Mikkel Thorup.
\newblock Simple tabulation, fast expanders, double tabulation, and high
  independence.
\newblock In {\em Proc.\ 54th IEEE Symposium on Foundations of Computer Science
  (FOCS)}, pages 90--99, 2013.

\bibitem[Yu20]{Yu20}
Huacheng Yu.
\newblock Nearly optimal static {Las Vegas} succinct dictionary.
\newblock In {\em Proc.\ 52nd {ACM-SIGACT} Symposium on Theory of Computing
  (STOC)}, pages 1389--1401, 2020.

\end{thebibliography}

\appendix

\section{Proof of Random Swap Lemma}
\label{sec:randomswap}
\newcommand{\Lr}{L_{\textup{r}}} 
\newcommand{\Lk}[1][k]{L_{#1}}
\newcommand{\sk}[1][k]{s_{#1}}
\newcommand{\mtemp}{m_{\text{temp}}}
\newcommand{\lentail}{T}
\newcommand{\store}[1]{\texttt{Store}\bk{#1}}
\newcommand{\malloc}{m'}
\newcommand{\skplus}{\bk{\bk{\sk + 1} \bmod{\Lk}}}

In this section, we provide the proof of the following random swap lemma introduced in \cref{sec:dabtree}:

\randomswap*

\begin{proof}
  For the given VM of $M$ bits, we use $L \defeq \floor{M / w}$ to denote the number of words in it, $\lentail \defeq M - wL$ to denote the number of bits in its tail, and $m$ to denote its whole memory string. Below we illustrate how to store this VM with memory string $m$ into a WVM.

  As mentioned in \cref{sec:wvm}, we swap the tail with a random word to amortize the slow access to the tail.
  The first step is to cut the tail into $T$ individual bits and store them into the bit tape of the WVM. When the tail is ``too short'' and the number of words is not ``very few'' (explained later), we additionally cut the last word into $w$ bits as well. Next, we swap the contents cut into bits with one or two preceding words at a random location. Finally, the bit tape contains the new individual bits (after swap), and we store the rest of the words into the word tape. See \cref{alg:random_swap}.

  \begin{algorithm}[htbp]
    \caption{Storing VM into WVM}
    \label{alg:random_swap}
    \DontPrintSemicolon
    
    \SetKwFunction{fencode}{Store}
    \SetKwProg{Fn}{Function}{:}{}
    
    \Fn(\Comment{Store a VM with memory string $m$ into a WVM}){\fencode{$m$}} {
      Sample $\Lr$ from $\mleft[\frac{w}{20},\frac{w}{11}\mright)$ unformly at random\; \label{line:sample_start} 
      \ForEach{$i \in [\log N]$}{
        Sample $\Lk[i]$ from $[2^{i}, 2^{i+1})$ uniformly at random\;
        Sample $\sk[i]$ from $[1, \Lk[i]]$ uniformly at random\;
      }
      Sample $a$ from $[0,w-2]$ uniformly at random\; \label{line:sample_end}
      $k \gets$ the unique index such that $\Lk < L \le \Lk[k+1]$\;
      \uIf{$L \le \Lr$ \emph{or} $\lentail > a$} { \label{line:case_1_2}
        Swap the tail of $m$ with the first $T$ bits of the $\sk$-th word of $m$\; \label{line:swap_process_1}
        $\mtemp \gets$ the resulting memory string after swap\;
        \Return a WVM with its word tape storing all $L$ words of $\mtemp$ and its bit tape storing the tail of $\mtemp$\;
      } \Else { \label{line:case_3}
        Swap the last word of $m$ with the $\sk$-th word of $m$\; \label{line:swap_process_2}
        Swap the tail of $m$ with the first $T$ bits of the $\skplus$-th word of $m$\; \label{line:swap_process_3}
        $\mtemp \gets$ the resulting memory string after swapp\;
        \Return a WVM with it word tape storing the first $L-1$ words of $\mtemp$ and the bit tape storing the last word plus the tail of $\mtemp$\;
      }
    }
  \end{algorithm}

  \bigskip
  Note that the algorithm exhibits different behaviors depending on the length of $m$. There are three cases:
  \begin{itemize}
  \item The \emph{Few-Word Case}: the number of words $L \le \Lr$.
  \item The \emph{Short-Tail Case}: the length of the tail $T \le a$ while $L > \Lr$, which is the only case we additionally cut the last word into individual bits.
  \item The \emph{Long-Tail Case}: $T > a$ while $L > \Lr$.
  \end{itemize}

  The random numbers sampled on Lines~\ref{line:sample_start} to \ref{line:sample_end} serve as the random seed of the whole lemma, which does not change with variations in $m$. One can check that the random seed has at most $O(\log^2 N)$ bits, and $\numword, \numextbit$ are fully determined by $M$ and the seed, i.e., Condition~\ref{enum:no_extra_space} in \cref{lm:random_swap} holds.

  Based on this algorithm, we can transform the access and allocation/release operations in VM to those in WVM. As we know which words to swap (from $M$ and the seed), it is easy to transform an access to $m$ into several accesses to the WVM without additional time consumption (i.e., the time taken is linear in the number of transformed operations). Similarly, an allocation or release in the VM can be transformed to a series of necessary operations in the WVM without additional time.

  First, note that for any given operation in the VM, the number of transformed operations to the \emph{word tape} is at most $O(1)$, and specially, is exactly $1$ for a VM access.
  Therefore, it suffices to bound the expected number of operations to the \emph{bit tape}, which does not exceed the length of the bit tape $\numextbit$ times the probability of accessing the bit tape.
  We have $\numextbit = O(\min(L, w))$, because
  \begin{itemize}
  \item In the Few-Word Case, the number of bits equals $\numextbit = T \le 10L < w$ because the VM is word-dominant, and thus $T = O(\min(L, w))$.
  \item Otherwise, $\min(L, w) = \Theta(w)$, and the number of bits $\numextbit \le T + w \le 2w = O(\min(L, w))$.
  \end{itemize}
  Once the bit tape is accessed (except for the $O(1)$ inevitable operations when we allocate or release 1 bit on the VM), we pretend there are $\Theta(\min(L, w))$ bit-accesses (plus $\Theta(\min(L, w))$ allocations/releases if it was a VM allocation/release), as a clear upper bound of the real cost. Next, it only remains to bound the probability of accessing the bit tape.

  \paragraph*{Access operations.}
  The access to the bit tape happens only when we access the $\sk$-th original word, or the $\skplus$-th word in the Short-Tail Case. As $\sk$ is uniformly sampled from $[1, \Lk]$, the probability to access the bit tape is at most $2/\Lk \le 8/\Lk[k+1] \le 8/L$. Hence, the expected number of induced accesses to the bit tape is at most $O(\min(L, w)) \cdot (8/L) = O(\min(1, w/L)) = O(\min(1, w^2 / M))$ in expectation. Adding the 1 transformed access to the word tape implies \cref{lm:random_swap}~\ref{enum:access_of_WVM}.

  \paragraph*{Allocating/releasing a word.}
  By symmetry, we only consider the case of allocation. According to \cref{alg:random_swap}, the bit tape stores the substring of $m$ that starts from the first bit of the $\sk$-th word and has length $\numextbit$. Hence, the content of the bit tape changes only in the following two situations:
  \begin{itemize}
  \item The number of words $L$ crosses over the threshold $\Lr$ and switches from the Few-Word Case to the Short-Tail Case. In this situation, $\numextbit$ will increase by $w$. As $\Lr$ is uniformly sampled from $[w/20, w/11)$, the probability of this case is at most $O(1/w)$.
  \item The number of words $L$ crosses over the threshold $\Lk[k+1]$ and changes the index $k$, thus $\sk$ is replaced by $s_{k+1}$. As $\Lk[k+1]$ is uniformly sampled from $[2^{k+1}, 2^{k+2})$, the probability of this case is at most $1/2^{k+1} = O(1/L)$.
  \end{itemize}
  Combining them together, the probability of changing the bit tape is at most $O\bk{\max\bk{1/w, 1/L}}$, hence the expected number of operations in the bit tape is at most $O\bk{\max\bk{1/w, 1/L} \cdot \min\bk{L, w}} = O(1)$.

  \paragraph*{Allocating/releasing a bit.} In most of the cases during the allocation, $\numextbit = T$ or $T + w$ only increases by 1, so we only need to (1) allocate 1 bit on the bit tape, and (2) write into the newly allocated bit the swapped content (the $(T+1)$-th bit of some preceding word), which takes $O(1)$ bit-tape operations. (The contents in the original $\numextbit$ bits remain unchanged, so we do not need to access them.) Other additional bit-tape operations will only happen in the following situations:
  \begin{itemize}
  \item The number of words $L$ crosses over the threshold $\Lr$ and switches from the Few-Word Case to the Short-Tail Case. As argued above, its probability is at most $O(1/w)$.
  \item The number of words $L$ crosses over the threshold $\Lk[k+1]$ and changes the index $k$. As argued above, the probability is at most $O(1/L)$.
  \item The number of bits in the tail $T$ crosses over the threshold $a$ and switches from the Short-Tail Case to the Long-Tail Case, thus decreases $\numextbit$ by $w-1$. As $a$ is uniformly sampled from $[0, w-2]$, the probability of this case is at most $O(1/w)$.
  \end{itemize}
  Notice that when we allocate 1 bit upon $T = w - 1$, changing $T$ to 0, no additional action is required: The initial state cannot be the Few-Word Case, as there are at most $T \le 10L \le 10\Lr < w - 1$ bits in the tail; since $a \in [0, w - 2]$, this allocation turns the Long-Tail Case to the Short-Tail Case, thus $\numextbit$ only increases by one, which falls back to the ``common case'' mentioned above.
  
  Combining them together, the probability of additional bit-tape operations is at most $O\bk{\max\bk{1/w, 1/L}}$. Hence, the expected number of transformed operations is still at most $O\bk{\max\bk{1/w, 1/L} \cdot \min\bk{L, w}} = O(1)$.
\end{proof}

\section{Hash Permutation Families}
\label{sec:hashfunc}

\newcommand{\xl}{x_{\smallsub \textup{L}}}
\newcommand{\xr}{x_{\smallsub \textup{R}}}
\newcommand{\xli}[1][i]{x_{\smallsub \textup{L}, #1}}
\newcommand{\xri}[1][i]{x_{\smallsub \textup{R}, #1}}
\newcommand{\redh}{\hat{h}}

In this section, we prove \cref{lem:hashfamily_1,lem:hashfamily_2}. They assert existence of hash permutation families with low time and space usage as well as good concentration guarantees. The main technique used here is ``Feistel permutation'':

\begin{definition}[Feistel permutation]
  \label{def:feistel}
  For any $x \in \BK{0, 1}^L$ and $1 \le s < L$, we denote by $\xl, \xr$ the first $s$ bits and the last $L - s$ bits of $x$, respectively, written $x = \xl \concat \xr$. Fixing a function $f : \BK{0, 1}^{L - s} \to \BK{0, 1}^s$, its \emph{Feistel permutation} $\pi_f : \BK{0, 1}^L \to \BK{0, 1}^L$ is defined as
  \[
    \pi_f : x \mapsto \bk{\xl \oplus f(\xr)} \concat \xr.
  \]
\end{definition}

It is easy to verify that $\pi_f$ is indeed a permutation over $\BK{0, 1}^L$, and $\pi_f\bk{\pi_f(x)} = x$.

In order to construct hash permutation family $\hashfam$, we only need to construct a family $\funcfam$ of hash functions, which induces a family of Feistel permutations $\hashfam = \BK{\pi_f : f \in \funcfam}$. For \cref{lem:hashfamily_1}, we use the following hash function family.

\begin{claim}[\normalfont\cite{thorup2013simple}]
  \label{clm:hashfunc_1}
  For any fixed constant $c$, there is a hash function family $\funcfam$, in which every function
  \[
    f : \BK{0, 1}^{L - s} \to \BK{0, 1}^s
  \]
  can be evaluated in $O(c)$ time and can be stored in $O(2^{(L - s) / c} \cdot L)$ bits. Moreover, $\funcfam$ is $\Omega(2^{(L - s) / c^2})$-wise independent.
\end{claim}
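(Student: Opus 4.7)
The plan is to invoke Thorup's \emph{double tabulation} construction. Partition an input $x \in \BK{0,1}^{L-s}$ into $c$ equal blocks $x_1, \ldots, x_c$, each of width $(L-s)/c$ bits. The hash function $f$ is then built in two stages.

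In the first stage, sample $c$ independent random tables $T_1^{(1)}, \ldots, T_c^{(1)}$, where each $T_i^{(1)} \colon \BK{0,1}^{(L-s)/c} \to \BK{0,1}^{d(L-s)/c}$ for a sufficiently large constant $d$ depending on $c$, and define the simple-tabulation hash $h_1(x) \defeq T_1^{(1)}(x_1) \oplus \cdots \oplus T_c^{(1)}(x_c)$, viewing its output as $d$ blocks of $(L-s)/c$ bits each. In the second stage, sample $d$ further independent random tables $T_j^{(2)} \colon \BK{0,1}^{(L-s)/c} \to \BK{0,1}^s$ and set $f(x) \defeq \bigoplus_{j=1}^d T_j^{(2)}(\mathrm{block}_j(h_1(x)))$. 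The $c$ first-stage tables and $d$ second-stage tables together specify a member of $\funcfam$.

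The evaluation and storage claims are then immediate: each stage uses $O(c)$ table lookups into entries of $O(L)$ bits, and the $O(c+d)=O(c)$ tables each have $2^{(L-s)/c}$ entries of $O(L)$ bits, totaling $O(2^{(L-s)/c} \cdot L)$ bits of storage. The $h_1$-output width $d(L-s)/c$ is absorbed into the constant $d$ in the asymptotics.

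The main obstacle is the independence claim. The standard argument is a ``peeling'' lemma applied to stage~$1$: one shows that, with high probability over the first-stage tables, for every set $S$ of $k \le 2^{(L-s)/c^2}$ distinct inputs, the elements of $S$ can be ordered $y_1, y_2, \ldots, y_k$ so that each $y_i$ contributes a fresh value in some coordinate block $j_i \in [d]$ of $h_1(y_i)$ that does not appear in $\BK{\mathrm{block}_{j_i}(h_1(y_{i'})) : i' < i}$. Once such an ordering exists, the second stage immediately gives $k$-wise independence: after conditioning on the first-stage tables and on the values of the previously peeled elements, $f(y_i)$ equals a uniform random entry of $T_{j_i}^{(2)}$ XORed with values already determined, and hence is uniform and independent of $f(y_1), \ldots, f(y_{i-1})$. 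The quantitative heart of the proof is a union bound over subsets $S$ (of size at most $k$) and choices of peeling order, where the expansion needed from stage~$1$ (essentially that the $d$-tuple of first-stage blocks forms a good combinatorial expander on $k$-sets) can be verified by a second-moment or entropy-compression argument; the interplay between $d$, $c$, and the block universe $2^{(L-s)/c}$ is exactly what yields the $2^{\Omega((L-s)/c^2)}$ independence threshold.
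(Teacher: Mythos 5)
The paper does not supply a proof of \cref{clm:hashfunc_1}: it is imported as a black box with a direct citation to Thorup's 2013 double tabulation paper. Your proposal correctly reconstructs that cited argument --- a two-stage simple-tabulation (double tabulation) construction, the peeling/fresh-coordinate step to transfer expansion of stage~1 into full $k$-independence of stage~2, and the union-bound-over-subsets analysis that fixes the $2^{\Omega((L-s)/c^2)}$ threshold --- so it matches the source on which the paper relies.
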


Based on this hash function family, we prove \cref{lem:hashfamily_1}.

\HashFamilyOne*

\begin{proofof}{\cref{lem:hashfamily_1}}
  Applying \cref{clm:hashfunc_1} with $c = 2/\eps$ gives a hash function family $\funcfam$; let $\hashfam = \BK{\pi_f : f \in \funcfam}$ be the induced family of Feistel permutations. It is clear that any $h \in \hashfam$ can be evaluated in constant time (as well as its inverse $h^{-1} = h$), and can be stored within $O(2^{(L - s) \eps / 2} \cdot L) = O(2^{\eps L})$ bits.

  It remains to check \ref{enum:no_overflow_in_lemma}. For convenience, we use $\redh(x)$ to represent the first $s$ bits of $h(x)$. By the definition of Feistel permutations, we know $\redh(x) = \xl \oplus f(\xr)$ (assuming $x = \xl \concat \xr$). We make the following claim:

  \begin{claim}
    \label{clm:chernoff}
    Assume $n = \Omega(2^s \cdot s^4)$. We arbitrarily fix distinct strings $x_1, \ldots, x_n \in \BK{0, 1}^L$, $y \in \BK{0, 1}^s$, and sample $h \in \hashfam$ uniformly at random. Let random variable $X \defeq \bigabs{\redh^{-1}(y)}$ denote the number of $i \in [n]$ with $\redh(x_i) = y$, then
    \[
      \Pr_{h \in \hashfam} \Bk{\abs{X - \frac{n}{2^s}} \ge \frac{1}{5} \bk{\frac{n}{2^s}}^{2/3}} \le \frac{1}{4n^3}.
    \]
  \end{claim}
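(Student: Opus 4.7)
The plan is to exploit the Feistel structure of $h$ to express $X$ as a sum of $k$-wise independent Bernoullis and then apply a standard tail bound for such sums.

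First, I would decompose each input as $x_i = \xli \concat \xri$ with $\xli$ the first $s$ bits and $\xri$ the last $L - s$ bits. By \cref{def:feistel}, $\redh(x_i) = \xli \oplus f(\xri)$, so $\redh(x_i) = y$ iff $\xli = y \oplus f(\xri)$. I would group the $n$ inputs by their right halves: for each distinct $r \in \{0, 1\}^{L - s}$ appearing as some $\xri$, set $A_r \defeq \{\xli : \xri = r\} \subseteq \{0, 1\}^s$ and $n_r \defeq |A_r|$; since the $x_i$'s are distinct, $\sum_r n_r = n$. For each $r$, at most one $i$ with $\xri = r$ can satisfy $\xli = y \oplus f(r)$, so
\[
  X = \sum_{r} X_r, \qquad X_r \defeq \ind{y \oplus f(r) \in A_r}.
\]

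Next, I would analyze the $X_r$'s. Because $f(r)$ is marginally uniform on $\{0, 1\}^s$, $\E[X_r] = n_r / 2^s$, hence $\mu \defeq \E[X] = n/2^s$. Choosing the constant $c$ in \cref{clm:hashfunc_1} large enough (say $c = \lceil 2/\eps \rceil$), the family $\funcfam$ is $k_0$-wise independent with $k_0 = \Omega(2^{(L - s)/c^2})$; thus the values $\{f(r)\}_r$ are $k_0$-wise independent, and since each $X_r$ is a deterministic function of $f(r)$, so are the $X_r$'s. The hypothesis $s \le (1 - \Omega(1)) L$ forces $L - s = \Omega(L) \ge \Omega(\log n)$, so $k_0$ is at least $n^{\Omega(1)}$, comfortably larger than any $O(\log n)$ we will need.

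Finally, I would apply the standard Chernoff-type inequality of Schmidt--Siegel--Srinivasan for $k$-wise independent $\{0, 1\}$ sums: for even $k \le k_0$,
\[
  \Pr\bk{|X - \mu| \ge a} \;\le\; O\bk{\frac{k \mu}{a^2}}^{k/2}.
\]
Plugging in $a = \mu^{2/3}/5$ turns this into $\bigbk{O(k)/\mu^{1/3}}^{k/2}$. The hypothesis $n \ge 2^s \cdot s^4$ yields $\mu \ge s^4$, hence $\mu^{1/3} \ge s^{4/3}$; combined with $\log n = \log \mu + s$, a short case analysis (splitting on whether $s \ge \log \mu$ or not) shows $\mu^{1/3} = \Omega(\log n)$ once $\mu$ exceeds an absolute constant. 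Taking $k$ the smallest even integer with $k \ge 10 \log n$ makes $k/\mu^{1/3}$ at most $1/2$ and the tail bound at most $2^{-k/2} \le 1/(4n^3)$.

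The main obstacle is the quantitative balance between the independence budget $k_0$ from \cref{clm:hashfunc_1}, the variance scale $\mu$, and the target deviation $\mu^{2/3}/5$: we need $k = \Theta(\log n)$ to be simultaneously small enough that $k$-wise independence is affordable and large enough that the SSS tail cuts below $1/(4n^3)$ at a sub-standard-deviation-scale deviation. The hypothesis $n \ge 2^s \cdot s^4$ is calibrated precisely so that $\mu^{1/3}$ dominates $\log n$; degenerate small-$\mu$ cases have to be handled separately (either by enlarging the implicit constants or by absorbing them into the $\Omega(\cdot)$ of the premise).
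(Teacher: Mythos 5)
Your proposal is correct and follows essentially the same route as the paper: both decompose $X$ by grouping the $x_i$ according to their right halves $\xri$, observe that the resulting Bernoulli summands $X_r$ are $\Omega(2^{(L-s)/c^2})$-wise independent because each is a function of $f$ evaluated at a distinct point, and then apply the Schmidt--Siegel--Srinivasan limited-independence Chernoff bound with $k = \Theta(\log n)$ after checking that the premise $n \ge 2^s s^4$ forces $\mu^{1/3} = \Omega(\log n)$. The only cosmetic difference is which form of the SSS bound is invoked (the paper uses their Theorem 5, which requires $k \le \lfloor \delta^2\mu e^{-1/3}\rfloor$ and gives $e^{-\lfloor k/2\rfloor}$, while you quote the equivalent $\bigl(O(k\mu)/a^2\bigr)^{k/2}$ form) and the precise algebra used to verify $\mu^{1/3} \gg \log n$; both are sound.
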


  We first assume the claim holds. By the union bound, the probability that there is an equivalent class whose number of elements lies outside $\Bk{\frac{n}{2^s} - \frac{1}{5} \bk{\frac{n}{2^s}}^{2/3}, \, \frac{n}{2^s} + \frac{1}{5} \bk{\frac{n}{2^s}}^{2/3}}$ is at most
  \[
    \Pr_{h \in \hashfam} \Bk{\exists y \in \BK{0, 1}^s : \bigabs{\redh^{-1}(y)} \notin \Bk{\frac{n}{2^s} - \frac{1}{5} \bk{\frac{n}{2^s}}^{2/3}, \, \frac{n}{2^s} + \frac{1}{5} \bk{\frac{n}{2^s}}^{2/3}}} \le \frac{1}{4n^3} \cdot 2^s \le \frac{1}{4n^2},
  \]
  which implies Property~\ref{enum:no_overflow_in_lemma}. Now it only remains to prove this claim.

  \bigskip

  Assume $x_i = \xli \concat \xri$ where $\xli$ is the first $s$ bits of $x_i$. We divide $x_1, \ldots, x_n$ into groups according to the right part $\xri$. Formally, we let $\BK{z_1, \ldots, z_m} = \BK{\xri[1], \ldots, \xri[n]}$ be the set of occurred $\xr$'s, and let $I_j \defeq \BK{i \in [n] : \xri = z_j}$.

  For $i_1 \ne i_2 \in I_j$, we know $\redh(x_{i_1}) \ne \redh(x_{i_2})$. It is because $\xri[i_1] = \xri[i_2] = z_j$, thus $\redh(x_{i_1}) = \xli[i_1] \oplus f(z_j) \ne \xli[i_2] \oplus f(z_j) = \redh(x_{i_2})$. Therefore, for every $j \in [m]$, there is at most one $i \in I_j$ with $\redh(x_i) = y$. Let $X_j \defeq \ind{\exists i \in I_j : \redh(x_i) = y}$. Then we have
  \begin{itemize}
  \item $X_j = \ind{\tall \exists i \in I_j : \xli \oplus f(z_j) = y} = \ind{\tall f(z_j) \in \BK{\xli \oplus y : i \in I_j}}$ is solely determined by $f(z_j)$. For a random $f \in \funcfam$, its expectation is $\E[X_j] = \frac{|I_j|}{2^s}$.
  \item $X \defeq \bigabs{\redh^{-1}(y)} = \bigabs{\bigBK{i \in [n] : \redh(x_i) = y}} = X_1 + \cdots + X_m$. Its expectation equals $\E[X] = \sum_{j=1}^m \frac{|I_j|}{2^s} = \frac{n}{2^s}$.
  \item Let $k \defeq 10 \ln n$, then any $k$-subset of $\BK{X_1, \ldots, X_m}$ is independent. This is because $\funcfam$ is $\Omega\bigbk{2^{(L-s) \eps^2 / 4}}$-wise independent, and
    \[
      k \defeq 10 \ln n \le (10 \ln 2) L = 2^{o(L)} \le o\bigbk{2^{(L - s) \eps^2 / 4}}.
    \]
  \end{itemize}
  We then apply the Chernoff bound with limited independence on $X = X_1 + \cdots + X_m$, stated below:

  \begin{theorem}[\protect{\cite[Theorem 5]{schmidt1995chernoff}}]
    \label{thm:chernoff}
    If $X$ is the sum of $k$-wise independent r.v.'s, each of which is confined to the interval $[0, 1]$, with $\mu = \E[X]$. For $\delta \le 1$, if $k \le \floor{\delta^2 \mu e^{-1/3}}$, then $\Pr\Bk{|X - \mu| \ge \delta \mu} \le e^{- \floor{k / 2}}$.
  \end{theorem}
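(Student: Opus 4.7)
The plan is to prove the concentration bound via the $k$-th moment method, which is the standard substitute for the moment generating function approach when only $k$-wise independence is available.

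First, I would reduce to the case $k = 2m$ even. If $k$ is odd, replace it by $k - 1$: the conclusion uses $\lfloor k/2 \rfloor$ so the bound matches, and the hypothesis is only trivially weakened. Let $Y_i \defeq X_i - \E[X_i]$, so that $\E[Y_i] = 0$, $\abs{Y_i} \le 1$, and $Y \defeq \sum_i Y_i = X - \mu$. Because $X_i \in [0, 1]$, we have $\E[Y_i^2] \le \E[X_i]$, so $\sigma^2 \defeq \sum_i \E[Y_i^2] \le \mu$. By Markov's inequality applied to $Y^{2m} \ge 0$,
\[
  \Pr\Bk{\abs{X - \mu} \ge \delta\mu} \;=\; \Pr\Bk{Y^{2m} \ge (\delta\mu)^{2m}} \;\le\; \frac{\E[Y^{2m}]}{(\delta\mu)^{2m}}.
\]

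Next, I would bound the central moment $\E[Y^{2m}]$ by expansion:
\[
  \E[Y^{2m}] \;=\; \sum_{(i_1, \dots, i_{2m})} \E[Y_{i_1} Y_{i_2} \cdots Y_{i_{2m}}],
\]
grouping multi-indices by the partition $\pi$ of $\BK{1, \dots, 2m}$ defined by $a \sim_\pi b \iff i_a = i_b$. Under $2m$-wise independence, each expectation factors across the blocks of $\pi$; if any block has size $1$, the factor $\E[Y_i] = 0$ kills the term, so only partitions whose blocks all have size $\ge 2$ survive. For such a $\pi$ with $j$ blocks of sizes $s_1, \dots, s_j \ge 2$ (so $j \le m$), the bound $\abs{Y_i} \le 1$ gives $\bigabs{\E[Y_i^{s_\ell}]} \le \E[Y_i^2]$. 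Summing over the $j$-tuples of distinct indices and taking absolute values,
\[
  \Bigabs{\sum_{i_1, \dots, i_j \text{ distinct}} \prod_{\ell = 1}^{j} \E[Y_{i_\ell}^{s_\ell}]} \;\le\; \prod_{\ell=1}^{j} \bk{\sum_i \E[Y_i^2]} \;=\; \sigma^{2j} \;\le\; \mu^j.
\]

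The dominant contribution comes from $j = m$, i.e., perfect matchings of $\BK{1, \dots, 2m}$, of which there are $(2m - 1)!! \le O(1) \cdot (2m/e)^m$ by Stirling; these contribute at most $O(1) \cdot (2m/e)^m \mu^m$. Partitions with some block of size $\ge 3$ have $j < m$, paying a factor $\mu^j$ instead of $\mu^m$, but their counts grow only polynomially in $2m$ relative to the matching count, so they are dominated once the hypothesis is plugged in. Combining with Markov yields
\[
  \Pr\Bk{\abs{X - \mu} \ge \delta\mu} \;\le\; O(1) \cdot \bk{\frac{2m}{e \, \delta^2 \mu}}^{m},
\]
and plugging in $2m \le \delta^2 \mu \cdot e^{-1/3}$ gives a bound of $O(1) \cdot e^{-2m/3}$, which is at most $e^{-m}$ once $m$ exceeds a small constant (the tiny cases $m \in \BK{1}$ are handled separately by Chebyshev, using $\sigma^2 \le \mu$ and the hypothesis $\delta^2 \mu \ge 2 e^{1/3}$).

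The main obstacle is tracking constants tightly enough to land exactly at $e^{-\lfloor k/2 \rfloor}$ rather than $C \cdot e^{-\lfloor k/2 \rfloor}$: the $e^{-1/3}$ slack in the hypothesis is precisely what the $(2m/e)^m$ factor from perfect matchings demands, so the critical step is the careful counting of set partitions into blocks of size $\ge 2$ and verifying that the perfect-matching term genuinely dominates. An alternative route, used by Bellare and Rompel, is to truncate the Taylor expansion of $\E[e^{tY}]$ at degree $k$ rather than expanding $Y^{2m}$ directly; this can yield slightly sharper constants at the cost of hiding the combinatorics inside the truncation error analysis.
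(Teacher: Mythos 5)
The paper states this theorem as a cited result (\cite[Theorem~5]{schmidt1995chernoff}) and does not prove it, so there is no in-paper argument to compare against. Your sketch follows the standard moment-method blueprint for limited-independence tails (Markov on the $2m$-th central moment, expand, and classify multi-indices by the induced set partition of $\{1,\dots,2m\}$), which is the right family of techniques and is also what Bellare--Rompel and Schmidt--Siegel--Srinivasan use.

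The genuine gap is the claim that perfect matchings dominate the moment sum; in the regime the hypothesis actually permits, they do not. Let $\sigma^2 \defeq \sum_i \E[Y_i^2] \le \mu$ and write the moment bound as $\E\bigBk{(X-\mu)^{2m}} \le \sum_{j\le m} P(2m,j)\,\sigma^{2j}$, where $P(2m,j)$ counts partitions of $[2m]$ into $j$ blocks of size at least two. Already for $j=m-1$ one computes $P(2m,m-1)=\bigbk{\tfrac{m(m-1)}{6}+\tfrac{m(m-1)(m-2)}{9}}\cdot(2m-1)!!=\Theta(m^3)\cdot P(2m,m)$, so the $j=m-1$ term exceeds the matching term by a factor of order $m^3/\sigma^2$. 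The hypothesis only guarantees $\mu\ge 2m\,e^{1/3}$ and $\sigma^2\le\mu$, so $\sigma^2$ can be $\Theta(m)$, making that factor $\Theta(m^2)$ rather than small; summing over $j<m$ only makes the discrepancy worse, since the total number of partitions into blocks of size $\ge 2$ is super-exponentially larger than $(2m-1)!!$. What actually controls the proof is that the entire sum $\sum_j P(2m,j)\,t^j$ equals the $2m$-th central moment of a $\mathrm{Poisson}(t)$ random variable, which in the critical regime $t=\Theta(m)$ is $C^m(2m/e)^m t^m$ with a nontrivial base $C$, and the target bound $e^{-\lfloor k/2\rfloor}$ survives only if $C\le e^{1/3}$ --- precisely the slack the hypothesis budgets for. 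So the ``careful counting'' you flag is not a finishing touch on a working argument; it \emph{is} the argument. Closing the gap requires an explicit Poisson-type moment estimate (or, as in the cited paper, a reduction to the fully independent binomial case where such estimates are classical), not a matching-dominance heuristic.
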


  Now let $\delta = \frac{1}{5} \bk{\frac{n}{2^s}}^{-1/3}$ and apply \cref{thm:chernoff}. It is easy to check that
  \[
    \delta^2 \mu e^{-1/3} = \frac{1}{25 e^{1/3}} \bk{\frac{n}{2^s}}^{1/3} \ge \frac{1}{25 (2e)^{1/3}} \log^{4/3} n \gg k = 10 \ln n,
  \]
  where the first inequality holds since $\frac{n}{\log^4 n} \ge \frac{2^s \cdot s^4}{(s + 4 \log s)^4} \ge \frac{1}{2} \cdot 2^s$ (for sufficiently large $s$), which is equivalent to $\frac{n}{2^s} \ge \frac{1}{2} \cdot \log^4 n$. \cref{thm:chernoff} gives
  \[
    \Pr\Bk{\abs{X - \frac{n}{2^s}} \ge \frac{1}{5} \bk{\frac{n}{2^s}}^{2/3}} \le e^{- \floor{k / 2}} \le e^{- 5 \ln n + 1} \ll \frac{1}{4n^3}.
  \]
  This proves \cref{clm:chernoff} and further implies \cref{lem:hashfamily_1}.
\end{proofof}

\cref{lem:hashfamily_2} can be proved similarly using Feistel permutations.

\HashFamilyTwo*

\begin{proofof}{\cref{lem:hashfamily_2}}
  It is well-known that there exists a 2-independent hash function family $\funcfam$, in which every function $f : \BK{0, 1}^{L - s} \to \BK{0, 1}^s$ can be stored within $O(L)$ bits and evaluated in constant time (e.g., polynomial hashing over a finite field $\F_{2^L}$). For distinct inputs $x_1 \ne x_2 \in \BK{0, 1}^{L - s}$ and outputs $y_1, y_2 \in \BK{0, 1}^{s}$, $\Pr[f(x_1) = y_1 \land f(x_2) = y_2] = 2^{-2s}$.

  We then verify that $\funcfam$'s Feistel permutation family $\hashfam \defeq \BK{\pi_f : f \in \funcfam}$ satisfies the requirements. First, $h \in \hashfam$ (which equals its inverse) can be evaluated in $O(1)$ time, and can be stored using $O(L)$ bits of space. Next, let $h \in \hashfam$ be chosen uniformly at random, and denote by $\redh(x)$ the first $s$ bits of $h(x)$. For any two distinct inputs $x_1 \ne x_2 \in \BK{0, 1}^{L-s}$, the probability of $\redh(x_1) = \redh(x_2)$ is at most $2^{-s}$, because:
  \begin{itemize}
  \item Assume $x_1 = \xli[1] \concat \xri[1]$ and $x_2 = \xli[2] \concat \xri[2]$. If $\xri[1] = \xri[2]$, we always have $\redh(x_1) = \xli[1] \oplus f(\xri[1]) \ne \xli[2] \oplus f(\xri[2]) = \redh(x_2)$.
  \item Otherwise, $f(\xri[1])$ and $f(\xri[2])$ are independent r.v.'s following the uniform distribution over $\BK{0, 1}^s$. Therefore, $\Pr[\redh(x_1) = \redh(x_2)] = \Pr[f(\xri[1]) \oplus f(\xri[2]) = \xli[1] \oplus \xli[2]] = 2^{-s}$.
  \end{itemize}
  Combining these two cases tells that $\Pr[\redh(x_1) = \redh(x_2)] \le 2^{-s}$ for all $x_1 \ne x_2$. Taking the union bound over all pairs among $x_1, \ldots, x_n$ gives Property~\ref{enum:no_collision}.
\end{proofof}

\paragraph*{Generalizations for the top level.} In \cref{sec:dabtree,sec:multihash}, when hashing all $N$ keys into $N / \enkeyi[1]$ buckets, the input domain of the hash function might not be a power of two, in which case we use the following generalization of \cref{lem:hashfamily_1}. It just replaces the rightmost $L-s$ bits of the domain ($\BK{0, 1}^{L-s}$) with a more general set $[Q]$.

\begin{restatable}{corollary}{HashFamilyOneCor}
  \label{cor:hashfamily_1}
  Let $s, T$ be integers where $1 \le s \le (1 - \Omega(1))(s + \log T)$. There is a hash function family $\hashfam$ in which every member $h \in \hashfam$ is a permutation over $\BK{0, 1}^{s} \times [T]$, satisfying:
  \begin{enumerate}[label=\textup{(\alph*)}]
  \item For any $h \in \hashfam$ and any input $x \in \BK{0, 1}^s \times [T]$, $h(x)$ and $h^{-1}(x)$ can be evaluated in $O(1)$ time.
  \item It takes $O\bigbk{(2^s \cdot T)^{\eps}}$ bits to store an $h \in \hashfam$, where $\eps$ is a pre-given constant which does not depend on $s$ and $T$.
  \item\label{enum:no_overflow_in_lemma_cor} For $n \ge 2^{s} \cdot s^4$ different inputs $x_1, \ldots, x_n$, if we divide $h(x_1), \ldots, h(x_n)$ into equivalent classes according to the first $s$ bits (i.e., the $\BK{0, 1}^s$ part) of $h(x_i)$, then with probability $\ge 1 - \frac{1}{4n^2}$, the number of elements in any equivalent class is between
    $\Bk{ \frac{n}{2^{s}} - \frac{1}{5} \bk{\frac{n}{2^{s}}}^{2/3}, \, \frac{n}{2^s} + \frac{1}{5} \bk{\frac{n}{2^s}}^{2/3}}$.
  \end{enumerate}
\end{restatable}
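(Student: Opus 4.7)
The plan is to adapt the Feistel permutation construction used in the proof of \cref{lem:hashfamily_1} to accommodate a non-power-of-two right coordinate. For any function $f : [T] \to \BK{0, 1}^s$, we define the generalized Feistel map $\pi_f : \BK{0, 1}^s \times [T] \to \BK{0, 1}^s \times [T]$ by $\pi_f(x_L, x_R) \defeq (x_L \oplus f(x_R), x_R)$. Since the $\oplus$ operation preserves membership in $\BK{0, 1}^s$ and the right coordinate is left untouched, $\pi_f$ is an involution and hence a permutation of the correct domain. The first-$s$-bit projection $\redh(x) = x_L \oplus f(x_R)$ has the same algebraic form as in the binary case, so the analysis can be reused almost verbatim.

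Next, we would construct the function family $\funcfam$ of maps $f : [T] \to \BK{0, 1}^s$ via a direct adaptation of the construction behind \cref{clm:hashfunc_1} (e.g., Thorup's scheme \cite{thorup2013simple}). Concretely, we embed $[T]$ into $\BK{0, 1}^{\ceil{\log T}}$ and apply \cref{clm:hashfunc_1} with $L' \defeq s + \ceil{\log T}$ and $c \defeq 2/\eps$; the resulting family has $\Omega((2^s T)^{\eps^2/4})$-wise independence, constant-time evaluation of $f$, and occupies $O(T^{\eps/2} \cdot (s + \log T))$ bits, which is absorbed into $O((2^s T)^\eps)$ for any fixed $\eps > 0$. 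Setting $\hashfam \defeq \BK{\pi_f : f \in \funcfam}$ then inherits the storage bound and supports constant-time evaluation of both $h$ and $h^{-1} = h$, verifying the first two properties of the corollary.

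For property \ref{enum:no_overflow_in_lemma_cor}, the concentration argument mirrors \cref{clm:chernoff} step by step. We partition the inputs $x_1, \ldots, x_n$ into groups $I_1, \ldots, I_m$ according to the distinct values $z_1, \ldots, z_m \in [T]$ of their right coordinates. Within each $I_j$ the values $\redh(x_i) = \xli \oplus f(z_j)$ are pairwise distinct, so at most one $i \in I_j$ hits a fixed target $y \in \BK{0, 1}^s$; setting $X_j \defeq \ind{f(z_j) \in \BK{\xli \oplus y : i \in I_j}}$, which depends only on $f(z_j)$, we get $\E[X_j] = |I_j| / 2^s$. Then $X \defeq \bigabs{\redh^{-1}(y)} = \sum_j X_j$ is a sum of $\BK{0, 1}$-valued random variables with mean $n/2^s$ and independence degree $\Omega((2^s T)^{\eps^2/4})$. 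Applying \cref{thm:chernoff} with $\delta \defeq \tfrac{1}{5}(n/2^s)^{-1/3}$ under the hypothesis $n \ge 2^s \cdot s^4$ gives $\Pr\Bk{\bigabs{X - n/2^s} \ge \tfrac{1}{5}(n/2^s)^{2/3}} \le 1/(4n^3)$, and a union bound over the $2^s$ possible values of $y$ yields the claimed $1 - 1/(4n^2)$ bound.

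The main obstacle will be verifying that the independence parameter of $\funcfam$ is large enough to absorb $k = 10 \ln n$ in the Chernoff application in the entire parameter regime allowed by the corollary. The hypothesis $s \le (1 - \Omega(1))(s + \log T)$ forces $\log T = \Omega(s + \log T)$, so $(2^s T)^{\eps^2/4} = 2^{\Omega(s + \log T)} = n^{\Omega(1)} \gg \ln n$, matching the independence needed by \cref{thm:chernoff}. The remaining bookkeeping — converting the $O(T^{\eps/2} \cdot (s + \log T))$ space bound to $O((2^s T)^\eps)$, and checking that padding $[T]$ to a power of two only changes constants — is routine and parallels the calculations already carried out in the proof of \cref{lem:hashfamily_1}.
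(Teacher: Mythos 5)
Your proposal is correct and follows the same route the paper takes: the paper's own proof of this corollary is just a two-sentence sketch that defines the generalized Feistel map $\pi_f(x_L, x_R) = (x_L \oplus f(x_R), x_R)$ for $f : [T] \to \{0,1\}^s$ drawn from the family of Claim~\ref{clm:hashfunc_1} and then invokes ``the same argument as in the proof of Lemma~\ref{lem:hashfamily_1},'' which is exactly the grouping-by-right-coordinate plus limited-independence Chernoff argument you spell out. The one cosmetic slip is that after embedding $[T]$ into $\{0,1\}^{\lceil \log T\rceil}$ the independence parameter coming out of Claim~\ref{clm:hashfunc_1} is $\Omega(T^{\eps^2/4})$ rather than $\Omega((2^sT)^{\eps^2/4})$; the hypothesis $\log T = \Omega(s + \log T)$ makes the two expressions equivalent up to the $\Omega(\cdot)$ exponent, so the conclusion that it dominates $10\ln n$ is unaffected.
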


The proof is almost the same as \cref{lem:hashfamily_1}. For every $x = (\xl, \xr) \in \BK{0, 1}^s \times [T]$, we treat $\xl \in \BK{0, 1}^s$ as the left part and $\xr \in [T]$ as the right part, and define \emph{Feistel permutations} similarly to \cref{def:feistel}: For any function $f : [T] \to \BK{0, 1}^s$, its Feistel permuation is $\pi_f : x \mapsto (\xl \oplus f(\xr), \, \xr)$. Then, we take a function family $\funcfam$ from $[T]$ to $\BK{0, 1}^s$ according to \cref{clm:hashfunc_1}, and apply the same argument as in the proof of \cref{lem:hashfamily_1} to prove the corollary.

\bigskip

A similar corollary of \cref{lem:hashfamily_2} holds as well:

\begin{restatable}{corollary}{HashFamilyTwoCor}
  \label{cor:hashfamily_2}
  Let $s, T$ be positive integers. There is a hash function family $\hashfam$ in which every member $h \in \hashfam$ is a permutation over $\BK{0, 1}^s \times [T]$, satisfying:
  \begin{enumerate}[label=\textup{(\alph*)}]
  \item For any $h \in \hashfam$ and any input $x \in \BK{0, 1}^s \times [T]$, $h(x)$ and $h^{-1}(x)$ can be evaluated in constant time.
  \item It takes $O(s + \log T)$ bits to store an $h \in \hashfam$.
  \item\label{enum:no_collision_cor} For any integer $n$ and $n$ fixed distinct inputs $x_1, \ldots, x_n$, with probability at least $1 - n^2 / 2^s$, the first $s$ bits (i.e., the $\midBK{0, 1}^s$ part) of $h(x_1), \ldots, h(x_n)$ are pairwise distinct.
  \end{enumerate}
\end{restatable}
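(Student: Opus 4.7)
The plan is to mirror the proof of \cref{lem:hashfamily_2}, generalizing the Feistel construction so that the ``right half'' lives in $[T]$ rather than $\BK{0,1}^{L-s}$. Concretely, for any $x = (\xl, \xr) \in \BK{0,1}^s \times [T]$ and any function $f : [T] \to \BK{0,1}^s$, define the Feistel-type permutation $\pi_f : (\xl, \xr) \mapsto (\xl \oplus f(\xr), \, \xr)$. This is clearly a permutation (it is its own inverse) over $\BK{0,1}^s \times [T]$.

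The construction of $\hashfam$ will be $\BK{\pi_f : f \in \funcfam}$, where $\funcfam$ is a $2$-independent family of hash functions from $[T]$ to $\BK{0,1}^s$ whose members can be stored in $O(s + \log T)$ bits and evaluated in $O(1)$ time. Such a family is standard (for example, polynomial hashing of degree $1$ over the finite field $\F_{2^{\lceil \log(2^s T) \rceil}}$, followed by truncation to $s$ bits). This immediately gives properties (a) and (b) of the corollary, since the evaluation of $\pi_f$ and $\pi_f^{-1}$ requires only one evaluation of $f$ plus an XOR, and storing $f$ requires $O(s + \log T)$ bits.

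For property (c), I would repeat the collision analysis from the proof of \cref{lem:hashfamily_2} verbatim. Denote by $\redh(x)$ the first $s$ bits of $h(x) = \pi_f(x)$, i.e., $\redh(x) = \xl \oplus f(\xr)$. Fix distinct inputs $x_1 = (\xli[1], \xri[1])$ and $x_2 = (\xli[2], \xri[2])$. If $\xri[1] = \xri[2]$, then $\xli[1] \ne \xli[2]$ forces $\redh(x_1) \ne \redh(x_2)$ deterministically. Otherwise, $f(\xri[1])$ and $f(\xri[2])$ are pairwise independent and uniform over $\BK{0,1}^s$ by the $2$-independence of $\funcfam$, so
\[
  \Pr\Bk{\redh(x_1) = \redh(x_2)} = \Pr\Bk{f(\xri[1]) \oplus f(\xri[2]) = \xli[1] \oplus \xli[2]} = 2^{-s}.
\]
A union bound over all $\binom{n}{2} < n^2$ pairs yields the failure probability $n^2 / 2^s$ claimed in \ref{enum:no_collision_cor}.

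There is no real obstacle here: the only thing that differs from \cref{lem:hashfamily_2} is that the domain of the ``right half'' is an arbitrary set $[T]$ rather than a Boolean cube, and this change is inert for both the Feistel construction (which only XORs the left halves) and the 2-independence analysis (which only needs joint uniformity of $f$ on two distinct right-halves). The main minor care is to choose a 2-independent family whose domain naturally supports $[T]$ with $O(s + \log T)$ bits of description and $O(1)$-time evaluation on the word RAM, which is handled by the standard polynomial-hashing construction mentioned above.
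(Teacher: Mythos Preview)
Your proposal is correct and takes essentially the same approach as the paper: generalize the Feistel permutation so the right half lives in $[T]$, instantiate $\funcfam$ with a standard $2$-independent family from $[T]$ to $\{0,1\}^s$, and repeat the pairwise-collision analysis of \cref{lem:hashfamily_2} verbatim. The paper does not even spell this out, merely remarking that a similar corollary holds after describing the generalized Feistel construction for \cref{cor:hashfamily_1}.
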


\section{Hashing for Section \ref{sec:adaptertree}}
\label{sec:hashfunc_for_adapter}

The construction for hash functions and the strategy of rehashing for \cref{sec:adaptertree} is similar to that in \cref{sec:hashfunc_for_multilevel,sec:rehashing_for_multilevel}. We introduce them in this appendix. All notations here are the same as in \cref{sec:adaptertree} unless otherwise specified.

Recall that, at the top level of the data structure, we used two hash functions $h^{(1)} : [U] \to [N / \nodewidth{1}] \times [2^{\idleni[1]}] \times [\quotientuni]$ and $h^{(2)}_u : [2^{\idleni[1]}] \to [B^{8h-1}] \times [2^{\idleni[8h]}]$ to map every key $x$ into a leaf node. Here, $h^{(2)}_u$ can be different for different root nodes, and we store $N / \nodewidth{1}$ independent instances of it, while for $h^{(1)}$ we only have one instance. Both hash functions need to prevent \emph{overflows} and \emph{underflows} with high probability, while $h^{(1)}$ also needs to prevent \emph{ID collisions}, which means the same ID is assigned to two keys $x_1, x_2$ that are hashed to the same level-1 bucket. If either of these conditions fail, we resample the failed hash functions and reconstruct part of the data structure.

Similar to \cref{sec:hashfunc_for_multilevel}, $h^{(1)}$ is composed of two (bijective) subfunctions: $\permCh : [U] \to [N / \nodewidth{1}] \times [2^{\idleni[1]} \quotientuni]$ and $\permId[u] : [2^{\idleni[1]} \quotientuni] \to [2^{\idleni[1]}] \times [\quotientuni]$, where the former extracts the bucket index that the key is hashed to, and the latter extracts the key's ID within that bucket. The second subfunction $\permId[u]$ can be different for different level-1 buckets. $\permCh$ and $\permId[u]$ are sampled uniformly at random from the permutation families given by \cref{cor:hashfamily_1} and \cref{cor:hashfamily_2} respectively, restated below, with $(s, T, \eps) = \bigbk{\log (N / \nodewidth{1}), \, 2^{\idleni[1]} \quotientuni, \, \frac{1}{100(1+\alpha)}}$ and $(s, T) = (\idleni[1], \quotientuni)$ respectively.

\HashFamilyOneCor*

\HashFamilyTwoCor*

Property~\ref{enum:no_overflow_in_lemma_cor} in \cref{cor:hashfamily_1} tells that the probability of existing an overflowing or underflowing level-1 bucket does not exceed $\frac{1}{4N^2}$. Once overflows or underflows occur, we immediately pick different hash functions and reconstruct the whole dictionary, using at most $O(Nh)$ time. %

Property~\ref{enum:no_collision_cor} in \cref{cor:hashfamily_2} tells that the probability of ID collisions within any given level-1 bucket does not exceed $\nodewidth{1}^2 / 2^{\idleni[1]} \le \nodewidth{1}^{-2}$. Once ID collisions occur at level-1 bucket $u$, we resample $\permId[u]$ (but not $\permCh$) and rebuild the adapter tree below $u$, which takes $O(\nodewidth{1} h)$ time.

The construction of $h^{(2)}_u$ is simpler -- we directly sample $h^{(2)}_u$ according to \cref{cor:hashfamily_1} with $s = (8h-1) \log B$, $T = 2^{\idleni[8h]}$, and $\eps = 1/100$. The probability of overflows or underflows occurring is at most $\frac{1}{4\nodewidth{1}^2}$, and when they occur, we reconstruct the adapter tree below $u$ using $O(\nodewidth{1} h)$ time.

The additional space spent to store these hash functions is limited: we need to store one $\permCh$, using $O\bigbk{U^{1/100(1+\alpha)}} = O\bigbk{N^{1/100}}$ bits, and $N / \nodewidth{1}$ instances of $\permId[u]$ and $h^{(2)}_u$, using $O\bigbk{\frac{N}{\nodewidth{1}} \cdot (\idleni[1] + \log \quotientuni + 2^{\idleni[1] / 100})} = O\bigbk{\frac{N}{\nodewidth{1}} \cdot \nodewidth{1}^{1/25}} = o(R)$ bits. Moreover, through a similar argument to \cref{sec:rehashing_for_multilevel}, we can see that the expected amortized time complexity for rehashing is $O(1)$ per operation.

\paragraph{Achieving strong history-independence using fail modes.} The above approach completes our dictionary with $R = n / \log^{O(1)} n$ redundancy and $O(\log \log n)$ operational time. It is almost strongly history-independent except that we resample the hash functions when they fail -- the current hash functions carry information about the history. Removing the rehashing step will lead to a strongly history-independent dictionary with the same performance.

Our method is to fix the hash functions a priori, and when a hash function fails, we design a ``fail mode'' for the corresponding bucket, handling all operations slowly but still correctly. Since the probability of enabling the fail mode is extremely low, the expected time consumption of the fail mode is still $o(1)$. When the failure disappears after several subsequent updates, we reconstruct the memory state for the bucket and return to the ``normal mode''. Whether we enable the fail mode only depends on the current key set and the fixed (random) hash function, thus the entire data structure with fail modes is strongly history-independent.

Formally, let $u$ be a level-1 bucket, which is responsible for storing a set of $n(u) = \poly \log N$ elements from a universe $U_1 = 2^{\idleni[1]} \cdot \quotientuni$. If $\permId[u]$ incurs an ID collision when it tries to assign an ID to each key in bucket $u$, or $h_u^{(2)}$ incurs an overflow or underflow at a level-$8h$ (leaf) bucket, we say $u$ is in the \emph{fail mode}; otherwise, $u$ is in the \emph{normal mode}. \Cref{sec:dabtree} already shows how to store the set efficiently for the normal mode; in the fail mode, we present the following lemma.

\begin{lemma}[Fail Mode]
  \label{lem:fail_mode}
  Let $U'$ and $n$ be integers where $\log U' \le n$. There is a data structure storing $n$ elements from the universe $[U']$ using $\log \binom{U'}{n} + O(\log N)$ bits, while supporting all operations in $O(n^{100})$ time.
\end{lemma}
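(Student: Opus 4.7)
The plan is to store the set via the combinatorial number system. For a sorted $n$-subset $x_1 < x_2 < \cdots < x_n$ of $[U']$, its rank $\mathrm{rank}(S) \defeq \sum_{i=1}^n \binom{x_i}{i}$ is a bijection onto $\bigBk{0, \binom{U'}{n}}$, so storing the rank in binary uses $\bigceil{\log \binom{U'}{n}} \le \log\binom{U'}{n} + 1$ bits. The $O(\log N)$ slack in the statement easily absorbs this rounding and, if desired, an explicit copy of the current size $n$.

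The key observation that makes everything go through in polynomial time is the hypothesis $\log U' \le n$, which gives $\log\binom{U'}{n} \le n \log U' \le n^2$. Thus the rank, as well as every intermediate binomial coefficient $\binom{k}{i}$ with $k \le U'$ and $i \le n$, fits in $O(n^2)$ bits, i.e., $O(n)$ machine words under the standing assumption $w = \Omega(\log N)$. Consequently every big-integer operation needed for encoding and decoding (addition, subtraction, comparison, multiplication, and integer division by small integers used to compute $\binom{k}{i}$ incrementally) takes $\poly(n)$ time on the word RAM.

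For any operation I would first recover the explicit sorted tuple via the standard greedy decode: for $i = n, n-1, \ldots, 1$, binary-search inside $[0, U')$ to find the largest $x_i$ with $\binom{x_i}{i} \le r$, subtract $\binom{x_i}{i}$ from the running residual $r$, and decrement $i$. This uses $O(n \log U') = O(n^2)$ big-integer operations, so $\poly(n)$ total time. With the set explicit, a query checks membership directly; an insertion or deletion modifies the sorted tuple (with $n$ changing by $\pm 1$), reallocates the bit string to the appropriate new length $\bigceil{\log \binom{U'}{n \pm 1}} + O(\log N)$, and re-encodes by recomputing the new rank as a single $n$-term binomial sum.

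Every step runs in a fixed polynomial in $n$ and therefore fits inside the $O(n^{100})$ budget by a wide margin. There is no genuine obstacle in the argument; the only point to double-check is that the rank and all intermediate binomial coefficients admit $O(n^2)$-bit representations, which follows immediately from $\log U' \le n$, guaranteeing that the big-integer arithmetic stays efficient. Correctness is automatic from the bijectivity of the combinatorial number system.
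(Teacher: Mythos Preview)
Your proposal is correct and takes essentially the same approach as the paper's sketch: both encode the set by its rank in $\bigBk{0,\binom{U'}{n}}$ under a canonical bijection (lexicographic order in the paper, the combinatorial number system for you) and rely on $\log U' \le n$ to keep all intermediate integers to $O(n^2)$ bits, so that encoding and decoding run in $\poly(n)$ time. One harmless slip: $O(n^2)$ bits is $O(n^2/w)$ words rather than $O(n)$, since nothing in the lemma forces $n \le w$ (indeed the paper also invokes the fail mode with $n=N$), but this does not affect the $\poly(n) \le O(n^{100})$ conclusion.
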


\begin{proof}[Proof Sketch]
  We list all $n$-element sets $S = \midBK{a_1, a_2, \ldots, a_n} \subseteq [U']$ ($a_1 < a_2 < \cdots < a_n$) in increasing lexicographic order of the sequences $(a_1, \ldots, a_n)$. For any given set $S$, assuming it occurs as the $k$-th set in the list, we define its encoding to be the binary representation of $k$. Taking into account the $O(\log N)$-bit overhead for storing $n$ itself, the total encoding length is $\log \binom{U'}{n} + O(\log N)$ bits as desired. It is not difficult to prove that both encoding $S$ into $k$ and decoding $S$ from $k$ takes $O(n^{100})$ time, hence all types of operations can be completed in $O(n^{100})$ time as well.
\end{proof}

We can assume the fail probability of any level-1 bucket is $O(n_1^{-200})$ instead of $O(n_1^{-2})$ as described above by adjusting the parameters in the analysis accordingly. For any given operation that accesses the level-1 bucket $u$, with probability at most $O(n_1^{-200})$, $u$ will be in the fail mode, which costs $O(n_1^{100})$ time to perform the operation due to \cref{lem:fail_mode}. The expected time overhead is thus $o(1)$.

We react similarly when the top-level hash function $\permCh$ fails (i.e., when any level-1 bucket overflows or underflows): We use the fail-mode structure to store the whole dictionary, which takes no more than $O(N^{100})$ time per operation. Because this failure only happens with probability $O(N^{-200})$, the expected overhead is negligible.

\end{document}